\documentclass[12pt,english]{article}
\usepackage{lmodern}
\usepackage{lmodern}
\usepackage[T1]{fontenc}
\usepackage[cp1252]{inputenc}
\usepackage{geometry}
\geometry{verbose}
\setlength{\parskip}{\medskipamount}
\setlength{\parindent}{0pt}
\usepackage{color}
\usepackage{babel}
\usepackage{amsmath}
\usepackage{amsthm}
\usepackage{amssymb}
\usepackage{setspace}
\usepackage{siunitx}
\usepackage[authoryear,comma, longnamesfirst]{natbib}
\usepackage{microtype}
\setstretch{1.15}
\usepackage[unicode=true,
 bookmarks=false,
 breaklinks=true,pdfborder={0 0 1},backref=section,colorlinks=true]
 {hyperref}
\hypersetup{pdftitle={"DRUM"},
 pdfauthor={"KashaevAguiarPlavalaGauthier"},
 pdfnewwindow=true,pdfstartview=FitH,urlcolor=blue!90!red!45!black,citecolor=blue!90!red!45!black,linkcolor=red!90!black}

\makeatletter
\theoremstyle{plain}
\newtheorem{thm}{\protect\theoremname}
\theoremstyle{definition}

\theoremstyle{definition}
\newtheorem{example}[thm]{\protect\examplename}
\theoremstyle{plain}


\usepackage{amsfonts}
\usepackage{dsfont}\usepackage{mathrsfs}\usepackage{ushort}

\usepackage{titlesec}\usepackage{titling}
\usepackage{caption}
\usepackage{enumitem}\usepackage{booktabs}
\usepackage{tikz}
\usetikzlibrary{decorations.pathreplacing}
\usepackage{pstricks}\usepackage{pst-all}
\usepackage{pst-plot}
\usepackage{pst-node}\usepackage{pst-3dplot}\usepackage{sgamevar}
\usepackage{subcaption}
\usepackage{lscape}
\usepackage{pifont}
\usepackage{longtable}
\usepackage{multirow}
\usepackage{algorithmic}
\usepackage{pdflscape}
\usepackage{rotating}
\usepackage{float}



\geometry{margin=25mm}
\setenumerate{label=\small(\roman*)}
\doublespacing


\DeclareCaptionFont{fancy}{\bfseries\sffamily}
\captionsetup{margin=10pt,labelformat=simple,labelsep=endash,font=small,labelfont=fancy,format=hang}

\gamemathtrue
\allowdisplaybreaks

\providecommand{\psreset}{\psset{%
		linewidth=0.3pt,linestyle=solid,linecolor=black,
		dotsize=2.5pt,dotsep=2.5pt,arrowsize=4pt,
		fillstyle=none,fillcolor=white,
		showpoints=false,arrows=-,linearc=0,framearc=0,
		hatchsep=2pt,hatchwidth=0.2pt,nodesep=4pt,opacity=1}
	\psset{gridcolor=black!60, subgridcolor=black!30}
}

\psreset

\usepackage{graphicx}
\graphicspath{ {figures/} }

\titleformat{\section}[block]{\centering\large\bfseries\sffamily}{\thesection.}{0.5em}{}
\titleformat{\subsection}[block]{\flushleft\bfseries}{\thesubsection.}{0.5em}{}
\titleformat{\subsection}[block]{\flushleft\bfseries\sffamily}{\thesubsection.}{0.5em}{}
\titleformat{\subsubsection}[runin]{\normalsize\bfseries\sffamily}{\bfseries\upshape\sffamily\thesubsubsection.}{0.5em}{}[.--\:]
\renewcommand{\thesubsubsection}{\arabic{section}.\arabic{subsection}.\arabic{subsubsection}}
\titlespacing{\section}{0ex}{10ex}{5ex}
\titlespacing{\subsection}{0in}{6ex}{3ex}
\titlespacing{\subsubsection}{0mm}{2ex}{0.5em}
\pretitle{\begin{center}\LARGE\bfseries\sffamily}
\posttitle{\par\end{center}\vskip 0.5em}
\preauthor{\begin{center} \large \lineskip 0.5em\begin{tabular}[t]{c}}
\postauthor{\end{tabular}\par\end{center}}
\predate{\begin{center}\small}
\postdate{\par\end{center}}
\providecommand{\abstitle}[1]{{\par\vspace*{2ex}\small\bfseries\sffamily #1}\hspace*{1ex}}
\renewenvironment{abstract}%
{\begin{center}\begin{minipage}{0.8\linewidth}%
			\abstitle{Abstract}\small}%
		{\end{minipage}\end{center}\vfill\clearpage}


\DeclareMathOperator*{\argmax}{arg\,max}

\providecommand{\Char}[1]{\mathds{1}\left(\,#1\,\right)}
\providecommand{\Real}{{\mathds{R}}}

\providecommand{\tr}{^{\prime}}
\providecommand{\as}{\ensuremath{\mathrm{a.s.}}}
\providecommand{\rand}[1]{\mathbf{#1}}
\providecommand{\rands}[1]{\boldsymbol{#1}}

\providecommand{\Exp}[1]{\mathds{E}\left[#1\right]}
\providecommand{\abs}[1]{\left\lvert#1\right\rvert}




\theoremstyle{remark}
  \theoremstyle{plain}
  \newtheorem{lemma}{\protect\lemmaname}\theoremstyle{definition}
    \newtheorem{proposition}{\protect\propositionname}\theoremstyle{definition}
  \newtheorem{definition}{\protect\definitionname}\theoremstyle{plain}
\newtheorem{theorem}{\protect\theoremname}\theoremstyle{plain}
  \newtheorem{cor}{\protect\corollaryname}\theoremstyle{definition}
  \providecommand{\assumptionname}{Assumption}

  \providecommand{\definitionname}{Definition}
  \providecommand{\lemmaname}{Lemma}
  \providecommand{\propositionname}{Proposition}
  \providecommand{\remarkname}{Remark}
\providecommand{\corollaryname}{Corollary}
\providecommand{\theoremname}{Theorem}
\providecommand{\examplename}{Example}

\makeatother

\providecommand{\definitionname}{Definition}
\providecommand{\examplename}{Example}
\providecommand{\lemmaname}{Lemma}
\providecommand{\theoremname}{Theorem}


\usepackage{tikz}
\usepackage{tikz-3dplot}
\usetikzlibrary{calc}

\begin{document}
\title{Dynamic and Stochastic Rational Behavior\thanks{{\tiny This paper subsumes ``Nonparametric Analysis of Dynamic Random Utility Models.'' The ``\textcircled{r}'' symbol indicates that the authors' names are in certified random order, as described by \citet{ray2018certified}. We thank Roy Allen, Chris Chambers, Pierre-Andr\'e Chiappori, Mark Dean, Adam Dominiak, Laura Doval, Mikhail Freer, David Freeman, Matt Kovach, Elliot Lipnowski, Paola Manzini, Krishna Pendakur, Matt Polisson, John Quah, J\"org Stoye, Jesse Shapiro, Tomasz Strzalecki, and Levent \"Ulk\"u for useful discussions and encouragement. Pl\'avala acknowledges support from the Deutsche Forschungsgemeinschaft (DFG, German Research Foundation, project numbers 447948357 and 440958198), the Sino-German Center for Research Promotion (Project M-0294), the ERC (Consolidator Grant 683107/TempoQ), the German Ministry of Education and Research (Project QuKuK, BMBF Grant No. 16KIS1618K), and the Alexander von Humboldt Foundation. Aguiar thanks USFQ School of Economics for kindly hosting him during the writing of this paper.}}}

\author{ 
	Nail Kashaev \textcircled{r}
	Victor H. Aguiar \textcircled{r}
	Martin Pl\'avala \textcircled{r}
    Charles Gauthier\thanks{Kashaev: Department of Economics, University of Western Ontario; \href{mailto:nkashaev@uwo.ca}{nkashaev@uwo.ca}. Aguiar: Department of Economics, University of Western Ontario; \href{mailto:vaguiar@uwo.ca}{vaguiar@uwo.ca}.
    Pl\'avala: Naturwissenschaftlich-Technische Fakult\"{a}t, Universit\"{a}t Siegen; \href{martin.plavala@uni-siegen.de}{martin.plavala@uni-siegen.de}.
	Gauthier: ECARES, Universit\'e Libre de Bruxelles; \href{mailto:charles.gauthier@ulb.be}{charles.gauthier@ulb.be}.}
	}
\date{This version August, 2023. First version: February, 2023}
\maketitle

\begin{abstract}
\footnotesize{The (static) utility maximization model of \citet{afriat1967construction}, which is the standard in analysing choice behavior, is under scrutiny. We propose the Dynamic Random Utility Model (DRUM) that is more flexible than the framework of \citet{afriat1967construction} and more informative than the static Random Utility Model (RUM) framework of \citet{mcfadden1990stochastic}. Under DRUM, each decision-maker randomly draws a utility function in each period and maximizes it subject to a menu. DRUM allows for unrestricted time correlation and cross-section heterogeneity in preferences. We characterize DRUM for situations when panel data on choices and menus are available. DRUM is linked to a finite mixture of deterministic behaviors that can be represented as a product of static rationalizable behaviors. This link allows us to convert the characterizations of the static RUM to its dynamic form. In an application, we find that although the static utility maximization model fails to explain population behavior, DRUM can explain it. 
}

JEL classification numbers: C10, C33, D11, D12, D15.\\
\noindent Keywords: dynamic random utility, revealed preference. 
\end{abstract}

\section{Introduction}
A fundamental question in economics is whether decision makers exhibit rational choice behavior. The traditional definition of rationality is effectively equivalent to maximizing a utility function that is fixed in time. However, the static utility maximization model is under empirical scrutiny. Here, we study a notion of rationality in choice behavior that is stochastic and dynamic\textemdash the Dynamic Random Utility Model (DRUM). Under DRUM, each consumer or decision maker (DM) at each period maximizes the utility realized from a stochastic utility process subject to a menu or budget.\footnote{From the point of view of the DMs, utilities and budgets are known and deterministic. Stochasticity appears due to unobserved heterogeneity from the point of view of the observer.} We provide a revealed preference characterization of DRUM for situations in which the longitudinal distribution of choices or demands is observed for a finite collection of menus or budgets in a finite time window. This characterization does not place any parametric restriction on (i) the form of utility functions, (ii) the correlation of utilities in time, and (iii) the heterogeneity of utility in the cross-section. In an application, we show DRUM can explain the behavior of a cross section of DMs when many of them fail to be consistent with the static utility maximization model. 

The two main frameworks available for analyzing consumer behavior are the framework of static utility maximization based on \citet{samuelson1938note} and \citet{afriat1967construction} and the framework of random utility maximization (RUM) based on  \citet{mcfadden1990stochastic}. DRUM addresses several  empirical limitations of these earlier models. In particular, the
Samuelson-Afriat framework is under scrutiny due to experimental and field evidence against it.\footnote{For examples in regard to household consumption see \citet{echenique2011money} and \citet{dean2016measuring}, and in regard to choices over portfolios with risk or uncertainty see \citet{choi2007revealing,choi2014more,ahn2014estimating}. The rationality violations were originally thought to be small \citep{echenique2011money,choi2007revealing}, but newer experimental datasets show these violations can be severe \citep{brocas2019consistency,Aguiar2021JMathE,halevy2022identifying}.} There is evidence that failures of the Samuelson-Afriat framework are driven by the stringent assumption of the stability of preferences over time. For example, utility functions may change over time because of variability in time of the neural computation of value \citep{kurtz2019neural}, of structural breaks \citep{cherchye2017household}, or of evolving risk aversion \citep{guiso2018time,akesaka2021temporal}. DRUM allows preferences to change freely in time. In contrast to the Samuelson-Afriat framework, RUM has found reasonable success explaining repeated cross-sections of household choices \citep{kawaguchi2017testing,kitamura2018nonparametric}. However, RUM cannot take advantage of the longitudinal variation in choices available in many datasets, and it may have limited empirical bite \citep{im2021non}. By considering a richer primitive, we simultaneously relax the assumption of stable preferences over time implicit in the Samuelson-Afriat framework while providing a more informative test of stochastic utility maximization than in the McFadden-Richter framework. 

Our first result is a mixture characterization of DRUM, which is analogous to the RUM characterization in \citet{mcfadden1990stochastic}. We exploit the fact that DRUM is associated with a finite mixture of preference profiles in time. We obtain results analogous to \citet{kitamura2018nonparametric} (henceforth, KS), \citet{mcfadden1990stochastic}, and \citet{kawaguchi2017testing} with a dynamic version of the Axiom of Stochastic Revealed Preferences. This finite mixture characterization lends itself to statistical testing using the results in KS. This characterization can also be used for nonparametric counterfactual analysis. In a Monte Carlo study, we show that the statistical test of KS applied to our characterization of DRUM performs well in finite samples. 

We show that the mixture representation of DRUM can be obtained using a Kronecker product of the mixture representation of RUM in each period.\footnote{Informally, the mixture representation of RUM can be represented as a matrix whose columns are deterministic rational demand types. The analogous matrix for DRUM is the Kronecker product of those RUM matrices.} This observation is vital to obtain (i) computational gains for testing because of the modularity of the mixture representation; and (ii) a novel characterization of DRUM using a recursive version of the \citet{block1960random} inequalities (henceforth, BM inequalities). In a static setting, \citet{kitamura2018nonparametric} were the first to observe that the empirical content of RUM can be expressed as cone restrictions on observed data. In particular, the Weyl-Minkowski theorem posits that a cone can be described equivalently by a convex combination of its vertices ($\mathcal{V}$-representation) or by its faces ($\mathcal{H}$-representation). \citet{kitamura2018nonparametric} note that the $\mathcal{H}$-representation corresponds to what decision theorists would call an axiomatic characterization of RUM.\footnote{Note that the BM inequalities are the $\mathcal{H}$-representation of RUM in the finite abstract setup.} We exploit recent mathematical advancements in the analysis of the Kronecker products of cones (due  to \citealp{aubrun2021entangleability} and \citealp{aubrun2022monogamy}) to provide an axiomatic characterization of DRUM using the axiomatic characterization of RUM. This paper is the first to bring this new mathematical tool to economics and to show how it can be used in the DRUM setup and in structurally similar models.\footnote{For example, models of bounded rationality, such as the model of random consideration of \citet{cattaneo2020random} and its extension with heterogeneous preferences in \citet{RAUM2022random}, can be extended to a dynamic setup in the spirit of DRUM.} 

The generalized Weyl-Minkowski theorem enables us to provide a full characterization of DRUM via dynamic BM inequalities, which covers as a special case the finite abstract setups of  \citet{li2021axiomatization} and of \citet{chambers2021correlated} with full menu variation. Our characterization works for cases of limited observability of menus and in the presence of a primitive order that is respected by the support of the utility process.  We also provide a novel behavioral condition necessary for the consistency of the longitudinal distribution of demand with DRUM, $\mathrm{D}$-\textit{monotonicity}. It is also sufficient in simple setups: (i) for any finite number of goods and two budgets per period, and (ii) for two goods and any finite number of budgets per period. $\mathrm{D}$-monotonicity is computationally simple to check and provides a deeper understanding of the empirical content of DRUM. It restricts the joint probability of choices in time beyond the RUM restrictions on marginal distributions in each period.  $\mathrm{D}$-monotonicity can be thought of as a dynamic version of the Weak Axiom of Stochastic Revealed Preference (see \citealp{bandyopadhyay1999stochastic,hoderlein2014revealed}) and a stochastic version of the Weak Axiom of Revealed Preference (in time series) by \citet{samuelson1938note}.  
 
We synthesize the two main paradigms of nonparametric demand analysis: Samuelson and Afriat and McFadden-Richter frameworks. The Samuelson-Afriat framework requires observing a \emph{time-series} of choices and budgets for a given consumer and assumes that the consumer maximizes the same utility function in each period. When preferences are allowed to vary in time, there are no empirical implications with only a time-series of choices. However, when a \emph{panel} of choices is used, DRUM bounds the share of consumers or DMs whose choices contain a revealed preference violation in the sense of \citet{afriat1967construction}. RUM instead requires observing a \emph{cross-section} of choices and budgets from a population of consumers. The panel structure is ignored as there is no time dimension. Hence, this approach misses the potential temporal correlation of utilities. As a result, panels of choices over menus exist that, when marginalized, are consistent with RUM but not with DRUM. In other words, ignoring the time dimension of choices may lead to false positives when testing DRUM. Importantly, our setup keeps the fundamental assumption in the McFadden-Richter framework\textemdash the distribution of utilities does not depend on the sequence of budgets or menus that the consumer faces in time.\footnote{This assumption can be relaxed in the same spirit as \citet{deb2017revealed}.} 

Our synthesis is advantageous because (i) it provides more informative bounds on counterfactual choice due to the richer variation in the panel of choices, (ii) it provides a theoretical justification for the validity of the RUM framework when marginalizing choices, and (iii) it clarifies the role of constant preferences across time in the Samuelson-Afriat framework. Fortunately, our primitive with a longitudinal level of variation is readily available in many consumption surveys, household scanner datasets, and experimental datasets, as documented in \citet{AK2021}.\footnote{In practice, panels of choices are often pooled in the time dimension to create a cross-section with enough budget variations to have empirical bite \citep{deb2017revealed,kitamura2018nonparametric}. In this case, we show that this approach could lead to false rejections of DRUM due to ignoring the time labels of budgets.} 

In our application, we find support for DRUM in a panel dataset collected by \citet{ABBK23}. In a large experimental dataset collected in Amazon Mechanical Turk, a large cross section of DMs ($2135$ DMs) faced a sequence of binary comparisons of lotteries. Although $8$ percent of DMs are inconsistent with the static utility maximization model, we cannot reject the null hypothesis that DRUM can explain the data.\footnote{A restriction of DRUM to Expected Utility fails to explain the behavior of this sample of DMs.}  Monte Carlo experiments mimicking the application setup provide evidence of high power of our DRUM test in finite samples.   

The DRUM framework is rich and extends well beyond the Samuelson-Afriat and McFadden-Richter worlds. We therefore can cover all of the following special cases: (i) consumption models of errors in the evaluation of utility \citep{kurtz2019neural}; (ii) dynamic random expected utility (defined in \citealp{frick2019dynamic}); (iii) static utility maximization in a population (without measurement error) \citep{AK2021}; (iv) dynamic utility maximization in a population \citep{browning1989anonparametric,gauthier2018,AK2021}; (v) changing utility or multiple-selves models \citep{cherchye2017household}; and (vi) changing-taste modelled with a constant utility in time with an additive shock \citep{adams2015prices}. 
\par

\noindent\textbf{Outline} The paper is organized as follows. Section~\ref{sec: setup} introduces the setup. Section~\ref{sec: characterization of DRUM} provides both a McFadden-Richter and a KS-type characterization of DRUM. Section~\ref{sec: behavioral characterization} provides a behavioral characterization of DRUM via linear inequality constraints.  Section~\ref{secc:AfriatMcFadden} synthesizes the setups of Samuelson-Afriat and McFadden-Richter. 
Section~\ref{sec: counterfactuals} provides results for our dynamic counterfactual analysis. Section~\ref{sec: application} provides an application to experimental data. Section~\ref{sec: litreview} contains the literature review. 
Section~\ref{sec: conclusion} concludes. 
All proofs can be found in Appendix~\ref{app: proofs}. Appendix~\ref{appendix: montecarlo} provides Monte Carlo experiments showcasing the finite sample properties of the statistical test of DRUM.

\section{Setup}\label{sec: setup}
We consider a time window $\mathcal{T}=\{1,\cdots,T\}$ with a finite terminal period $T\geq1$. Let $X^t$ be a nonempty finite choice set. We endow $2^{X^t}\setminus\{\emptyset\}$  with some acyclic partial order $>^t$. When $>^{t}$ is restricted to singletons, it induces an acyclic partial order on $X^t$. We will abuse notation and write $x>^{t}y$ instead of $\{x\}>^{t}\{y\}$ in this case.
In each $t\in \mathcal{T}$, there are $J^t<\infty$ distinct menus denoted by
\[
B^t_{j}\in 2^{X^t}\setminus\{\emptyset\}, \quad j\in\mathcal{J}^t=\{1,\dots, J^t\}.
\]
Since $X^t$ is a finite set, we denote the $i$-th element of menu $j\in\mathcal{J}^t$ as $x^t_{i|j}$. That is,  $B^t_{j}=\{x^t_{i|j}\}_{i\in\mathcal{I}^t_{j}}$, where $\mathcal{I}^t_{j}=\{1,2,\dots,I^t_j\}$ and $I^t_j$ is the number of elements in menu $j$.

Define a menu path as an ordered collection of indexes $\rand{j}=(j_t)_{t\in\mathcal{T}}$, $j_t\in\mathcal{J}^t$. Menu paths encode menus that were faced by agents in different time periods. Let $\rand{J}$ be the set of all \emph{observed} menu paths. Given $\rand{j}\in \rand{J}$, a \emph{choice path} is an array of alternatives $x_{\rand{i}|\rand{j}}=\left(x^t_{i_t|{j_t}}\right)_{t\in\mathcal{T}}$ for some collection of indexes $\rand{i}=\left(i_t\right)_{t\in\mathcal{T}}$ such that $i_t\in\mathcal{I}^t_{j_t}$ for all $t$. Similar to a menu path, a choice path encodes the choices of a DM in a given sequence of menus that she faced. The set of all possible choice path index sets $\rand{i}$, given a menu path $\rand{j}$, is denoted by $\rand{I}_\rand{j}$.

Note that every $\rand{j}\in\rand{J}$ encodes the Cartesian product of menus $\times_{t\in\mathcal{T}}B^t_{j_t}\subseteq \times_{t\in\mathcal{T}}X^t$. Then, for every $\rand{j}$, let $\rho_{\rand{j}}$ be a probability measure on $\times_{t\in\mathcal{T}}B^t_{j_t}$. That is, $\rho_{\rand{j}}\left(x_{\rand{i}|\rand{j}}\right)\geq 0$ for all $\rand{i}\in\rand{I}_\rand{j}$ and $\sum_{\rand{i}\in\rand{I}_\rand{j}}\rho_{\rand{j}}\left(x_{\rand{i}|\rand{j}}\right)=1$. The primitive in our framework is the collection of all observed $\rho_{\rand{j}}$, $\rho=(\rho_{\rand{j}})_{\rand{j}\in\rands{J}}$. We call this collection a \emph{dynamic stochastic choice function}.  

Given $\rho$, we can define a Dynamic Random Utility Model (DRUM). Let $U^t$ denote the set of all utility functions that (i)  map $X^t$ to $\Real$, (ii) are injective,  and (ii) monotone on $>^{t}$ (i.e., if $S,S'\in 2^{X^t}\setminus\{\emptyset\}$ and $S>^{t}S'$, then $\max_{s\in S}{u^t(s)}>\max_{s\in S'}u^t(s)$). Also let $\mathcal{U}=\times_{t\in\mathcal{T}}U^t$  and $u=(u^t)_{t\in\mathcal{T}}$ be an element of $\mathcal{U}$. 

\begin{definition}[DRUM] 
The dynamic stochastic choice function $\rho$ is consistent with DRUM if there exists a probability measure over $\mathcal{U}$, $\mu$, such that
\[
\rho_{\rand{j}}\left( x_{\rand{i}|\rand{j}} \right)=\int \prod_{t\in \mathcal{T}} \Char{\argmax_{y\in B_{j_t}^t}u^t(y)= x^t_{i_t|j_t}}d\mu(u)
\]
for all $\rand{i}\in \rand{I}_{\rand{j}}$ and $\rand{j}\in\rand{J}$.
\end{definition}

When $T=1$, DRUM coincides with RUM, such that every agent maximizes her utility function $u^1$ over a menu and the analyst observes the distribution of consumers' choices. DRUM extends RUM by introducing a time dimension with an unrestricted preference correlation across time. The stochastic utility process is captured by $\mu$. Similar to RUM, DRUM does not restrict preference heterogeneity in cross-sections (i.e., across agents) and requires $\mu$ not to depend on either the menu paths or the alternatives in the consumption space. In contrast to RUM and DRUM, the Samuelson-Afriat framework, , does not use variation in choices of agents in cross-sections (i.e., it is directed to the individual-level data or time series of choices). Thus, it does not restrict preferences of individuals in cross-sections. In contrast to DRUM, however, the Samuelson-Afriat framework imposes a strict restriction that preferences are perfectly correlated across time (i.e. $u^t=u^s\:\mu-\as$ for all $t,s\in\mathcal{T}$). We formalize these connections between RUM, Afriat's framework, and DRUM in Section~\ref{secc:AfriatMcFadden}. 
\par
Some examples of datasets in which a dynamic stochastic choice function is (partially) observed are: (i) household longitudinal survey datasets, (ii) scanner datasets, and (iii) experimental datasets with panels of choice. In survey datasets (e.g., \textit{Encuesta de Presupuestos Familiares} in Spain and Progresa Household Survey in Mexico, see \citealp{deb2017revealed,AK2021}), information about household purchases is usually collected several times a year. For a given time period, budget variation across households is driven by spatial or regional price variation \citep{AK2021}. Scanner datasets (e.g., Nielsen homescan data,  see \citealp{gauthier2018}) contain information about weekly purchases of consumers. Budget variation in this case is driven by price variation across stores in each time period \citep{gauthier2021}. In experimental settings, subjects often face  few budget paths drawn at random from a common set of budgets (e.g., experiments on preferences over giving, as in \citealp{porter2016love}). In our empirical application, we have a panel of choices over different menus of lotteries from an experimental data set collected by \citet{ABBK23}, (see also  \citet{mccausland2020testing} for a panel of choices in discrete choice.\footnote{In that paper they study static RUM in an individual setup using the time-series to estimate individual stochastic choice.})

\subsection{Preview of the Results: Binary Menus Example}

We illustrate the setting and our main result with an example. We start with the static RUM setup for a choice set $X^t=\{x,y,z\}$, where one  observes only binary menus. The order $>^{t}$ is assumed to be empty. Following \citet{mcfadden1990stochastic}, we describe RUM as a finite mixture of deterministic types captured by a matrix $A^t$, as displayed in Table~\ref{tab: A_t binary}. Each column of $A^t$ in Table~\ref{tab: A_t binary} corresponds to a deterministic rational type $r_i^t$ (e.g., $r_1^t$ represents a strict rational order over $X^t$ such that $xr_1^tyr_1^tz$). Note that there is some utility function $u_1^t$ such that $xr_1^tyr_1^tz$ if and only if $u_1^t(x)>u_1^t(y)>u_1^t(z)$. Each row in Table~\ref{tab: A_t binary} corresponds to a choice from a binary menu for each rational order. An element of $A^t$ that corresponds to type $r^t_i$ and pair $x',\{x',y'\}$ is equal to $1$ if $x' r^t_i y'$, and zero otherwise.
\begin{table}[h]
\begin{centering}
\scalebox{0.9}{
\begin{tabular}{c!{\vrule width 2pt}c|c|c|c|c|c|c|}
 & $r_1^t$& $r_2^t$ & $r_3^t$ & $r_4^t$ & $r_5^t$ & $r_6^t$ \\
\noalign{\hrule height 2pt}
$x,\{x,y\}$ & $1$ & $1$ & - & - & 1 & - \\
\hline 
$y,\{x,y\}$ & - & - & $1$ & $1$ & - & $1$\\
\hline 
$x,\{x,z\}$ & $1$ & $1$ & $1$ & - & - & - \\
\hline 
$z,\{x,z\}$ & - & - & - & $1$ & $1$ & $1$ \\
\hline
$y,\{y,z\}$ & $1$ & - & $1$ & $1$ & - & - \\
\hline 
$z,\{y,z\}$ & - & $1$ & - & - & $1$ & $1$ 
\end{tabular}
}
\par\end{centering}
\caption{The matrix $A^t$ for binary menus. $``-"$ corresponds to zero.}\label{tab: A_t binary}
\end{table}
Now consider two periods in which the choice set remains the same in time. DMs face (sequentially) two menus. Thus, menu paths are of the form $\{x',y'\},\{x'',y''\}$ for all $x',y',x'',y''\in X^t$. Given a menu path, DMs choose a choice path (e.g., the ordered tuple $(x,\{x,y\};y,\{y,z\})$ indicating the choices from menu path $\{x,y\},\{y,z\}$). Our primitive, or data set, is the collection of the joint probabilities of choice paths for each menu path, $\rho$. We show that $\rho$ is consistent with DRUM  (up to rearrangement) if and only if $\rho=(A^1\otimes A^2)\nu$ for some vector $\nu\geq 0$ such that $\sum_{i}{\nu_i}=1$, where $\otimes$ is the Kronecker product.\footnote{The result of the Kronecker product of two matrices $A$ and $B$ is a block matrix 
\[
A\otimes B=\left( \begin{array}{cccc}
     A_{1,1}B&A_{1,2}B&\dots&A_{1,n_A}B  \\
     A_{2,1}B&A_{2,2}B&\dots&A_{2,n_A}B \\
     \dots&\dots&\dots&\dots \\
     A_{k_A,1}B&A_{k_A,1}B&\dots&A_{k_A,n_A}B
\end{array} \right),
\] where $A$ is of the size $k_A$ by $n_A$.} The vector $\nu$ is a distribution over dynamic profiles of deterministic rational types or columns of $A^1\otimes A^2$ (e.g., a dynamic preference profile $(r_1^1,r_6^2)$ is such that preferences change from $xr_1^1yr_{1}^{1}z$ to $zr_{6}^{2}yr_6^2x$). This representation of DRUM as a mixture of deterministic dynamic rational types is called the $\mathcal{V}$-representation. The recursive structure of the $\mathcal{V}$-representation makes DRUM modular with the consequent computational gains. In addition, when $\mathcal{T}=\{1\}$, the consistency of $\rho$ with RUM is equivalent to the following triangle condition
\[
\rho_{\{x',y'\}}(x')+\rho_{\{y',z'\}}(y')- \rho_{\{x',z'\}}(x')\geq 0  
\] 
for all $x',y',z'\in X^t$.\footnote{The triangle conditions are equivalent to RUM with binary menus when $|X^t|\leq 5$ \citep{dridi1980sur}.} The triangle conditions can be summarized in a matrix $H^t$, as displayed in Table~\ref{table:H binary}. The triangle conditions can then be stated as $H^t\rho\geq0$ for the static case. This is called the $\mathcal{H}$-representation of RUM. 

\begin{table}[ht]
\centering
\begin{tabular}{cccccc}
\hline
$x,\{x,y\}$ & $y,\{x,y\}$ & $x,\{x,z\}$ & $z,\{x,z\}$ & $y,\{y,z\}$ & $z,\{y,z\}$ \\
\hline
1 & - & -1 & - & 1 & - \\
-1 & - & 1 & - & - & 1 \\
- & 1 & 1 & - & -1 & - \\
- & -1 & - & 1 & 1 & - \\
1 & - & - & 1 & - & -1 \\
    - & 1 & - & -1 & - & 1 \\
\hline
\end{tabular}
\caption{The matrix $H^t$ for binary menus.$``-"$ corresponds to zero.}
\label{table:H binary}
\end{table}
Our results demonstrate that one can use this characterization of RUM to obtain testable conditions for DRUM in the form of linear inequalities by computing $(H^1\otimes H^2)\rho\geq 0$. (We later show that,  with more work, we can also obtain the full characterization of DRUM in this environment.) We call these conditions the \emph{dynamic triangle conditions}. A dynamic triangle condition can be expressed recursively in the binary setup as follows:
\begin{align*}
&\mathrm{D}^{\triangle,1}_{z,x,y}((z,\{x,z\};x,\{x,z\}))=\mathrm{D}^{\triangle,2}_{x,z,y}((z,\{x,z\};x,\{x,z\}) \\
&+\mathrm{D}^{\triangle,2}_{x,z,y}((x,\{x,y\};x,\{x,z\})- \mathrm{D}^{\triangle,2}_{x,z,y}((z,\{y,z\};x,\{x,z\})\geq 0,
\end{align*}

where an instance of the dynamic triangle condition is 
\begin{align*}
\mathrm{D}^{\triangle,2}_{x,z,y}((z,\{x,z\};x,\{x,z\}))&= \rho_{\{x,z\},\{x,z\}}(z,x)
+\rho_{\{x,z\},\{y,z\}}(z,z) -\rho_{\{x,z\},\{x,y\}}(z,x)\geq 0.  
\end{align*}
This instance means that, when the choice in period $1$ is held constant, the triangle inequality holds for $\rho$ for a triple of choices in period $2$. Thus, the dynamic triangle condition recursively applies  the triangle inequality in the first period  to the triangle inequality in the second period.
\par 
Our results use the recursive version of the Weyl-Minkowski theorem to show how this insight \textemdash how to derive the $\mathcal{V}$- or $\mathcal{H}$-representation of DRUM from its one-time $\mathcal{V}$- or $\mathcal{H}$-representations\textemdash can be generalized to obtain the characterization of DRUM for any finite time window and any collection of menus (i.e, beyond binary menus), whether the choice set is discrete (e.g., \citealp{li2021axiomatization,chambers2021correlated}) or continuous (e.g., the demand setup of KS).

\subsection{Finite Abstract Setup}\label{subsec: abstract stup}

We consider a nonempty, finite, grand choice set $X^t$ in each $t\in\mathcal{T}$ with an empty (hence, acyclic) order $>^{t}$, and assume that the observed menus in each $t$ are all possible subsets of $X^t$ with cardinality at least $2$. This setup is a generalization of \citet{li2021axiomatization}, which assumes that $|X^t|\leq 3$, and \citet{chambers2021correlated}, which in effect assumes that $T=2$.  
 
\subsection{Demand Setup}\label{subsec: demand setup}
Let $X^*\subseteq \Real^K_{+}$ be the consumption space with finite $K\geq 2$ goods.\footnote{$\Real^K_{+}$ denotes the set of component-wise nonnegative elements of the $K$-dimensional Euclidean space $\Real^K$.} 
In each $t\in \mathcal{T}$, there are $J^t<\infty$ distinct budgets denoted by
\[
B^{*,t}_{j}=\left\{y\in X^*\::\:p_{j,t}\tr y=w_{j,t}\right\}, \quad j\in\mathcal{J}^t=\{1,\dots, J^t\},
\]
where $p_{j,t}\in \Real^K_{++}$ is the vector of prices and $w_{j,t}>0$ is the expenditure level. 

In a way that is similar to the general setup, $\rand{j}=(j_t)_{t\in\mathcal{T}}$, $j_t\in\mathcal{J}^t$ encodes budgets that were faced by DMs in different time periods, referred to as \emph{budget paths}.  Let $\rand{J}$ be a set of all observed budget paths.

For every $\rand{j}\in\rand{J}$, let $\mathrm{P}_{\rand{j}}$ be a probability measure on the set of all Borel measurable subsets of $\times_{t\in\mathcal{T}}X^*$. The primitive in the demand framework is the collection of all observed $\mathrm{P}_{\rand{j}}$, $\mathrm{P}=(\mathrm{P}_{\rand{j}})_{\rand{j}\in\rands{J}}$. We call this collection a \emph{dynamic stochastic demand system}.

Given $\mathrm{P}$, we can define a Dynamic Random Demand Model (DRDM). Let $U^*$ denote the set of all continuous, strictly concave, and monotone utility functions that map $X^*$ to $\Real$; let $\mathcal{U}^*=\times_{t\in\mathcal{T}}U^*$,  and let $u^*=(u^{*t})_{t\in\mathcal{T}}\in\mathcal{U}^*$. 

\begin{definition}[DRDM] 
The dynamic stochastic demand $\mathrm{P}$ is consistent with DRUM if there exists a probability measure over $\mathcal{U}^*$, $\mu^*$, such that
\[
\mathrm{P}_{\rand{j}}\left(\left(O^t\right)_{t\in\mathcal{T}}\right)=\int \prod_{t\in \mathcal{T}} \Char{\argmax_{y\in B_{j_t}^t}u^{*t}(y)\in O^t}d\mu^*(u^*)
\]
for all $\rand{j}\in\rand{J}$ and for all Borel measurable $O^t\subseteq X^*$, $t\in\mathcal{T}$.
\end{definition}

\par
We next show that DRDM is empirically equivalent to DRUM provided that we appropriately specify the choice set and primitive order. 
The monotonicity of the utility functions generates choices on the budget hyperplane. In the RUM demand setting, KS and \citet{kawaguchi2017testing} showed that to establish that $\mathrm{P}$ is consistent with DRUM not all possible Borel sets need to be checked. Stochastic rationalizability by RUM  depends only on the probability of certain regions of the budget hyperplanes called \textit{patches}.  

For any $t\in\mathcal{T}$ and $j\in\mathcal{J}^t$, let $\{x^t_{i|j}\}_{i\in \mathcal{I}^t_j}$, $\mathcal{I}^t_j=\{1,\dots, I^t_{j}\}$, denote a finite partition of $B^{*,t}_{j}$ (where each element of the partition is indexed by $i$).

\begin{definition} [Patches] For every $t\in\mathcal{T}$, let $\bigcup_{j\in \mathcal{J}^t} \{x^t_{i|j}\}$ be the coarsest partition of $\bigcup_{j\in\mathcal{J}^t}B^{*,t}_{j}$ such that
\[
x^t_{i|j}\bigcap B^{*,t}_{j'}\in\{x^t_{i|j},\emptyset\}
\]
for any $j,j'\in \mathcal{J}^t$ and $i\in \mathcal{I}^t_j$. A set $x^t_{i|j}$ is called a patch. If $x^t_{i|j}\subseteq B^{*,t}_{j'}$ for some $i$ and $j\neq j'$, then $x^t_{i|j}$ is called an \emph{intersection patch}.
\end{definition}

By definition, patches can only be strictly above, strictly below, or on budget hyperplanes. A typical patch belongs to one budget hyperplane. However, intersection patches always belong to several budget hyperplanes. The case for one time period, in which $K=2$ goods and $J^t=2$ budgets, is depicted in Figure~\ref{fig: budgets_general}. Note that by definition $\{x^t_{i|j}\}$ is a partition of $B^{*,t}_{j}$, and $I^t_{j}$ is the number of patches that form budget $B^{*,t}_{j}$.

The (discretized) choice set is
\[
X^{t}=\bigcup_{i_t\in I^t_{j},j\in\mathcal{J}^t}\{x_{i_t|j_t}\}.
\] 
The primitive order $>^{t}$ is given by $S'>^{t} S$ for $S,S'\in 2^{X^t}\setminus\{\emptyset\}$ whenever, for any $x_{i|j}\in S$ and $y\in x_{i|j}$, there exist $x_{i'|j'}\in S'$ and $y'\in x_{i'|j'}$ such that  $y'>y$, where $>$ is the strict vector order on $X^*$. 
We define a \emph{menu} as the collection of patches from the same budget hyperplane
\[
B_{j_t}^t=
\left\{x_{i_t|j_t}\right\}_{i_t\in I^t_{j}}.
\]

Henceforth, we refer to a menu or budget interchangeably. 
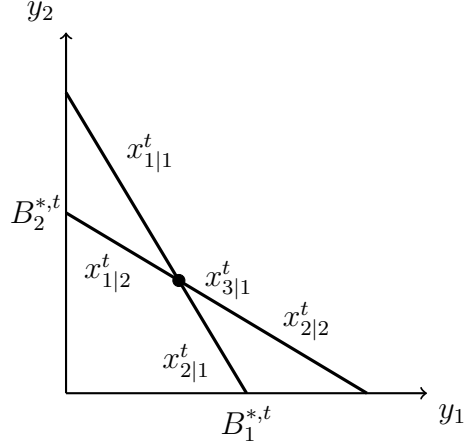
\begin{figure}
\begin{center}
\begin{tikzpicture}[scale=0.8]
\draw[thick,->] (0,0) -- (6,0) node[anchor=north west] {$y_1$};
\draw[thick,->] (0,0) -- (0,6) node[anchor=south east] {$y_2$};
\draw[very thick] (0,5) -- (3,0);
\draw[very thick] (5,0) -- (0,3);
\draw [fill=black] (1.875,1.875) circle[radius=.1];
\draw (2.7,1.875) node {$x^t_{3|1}$};
\draw (0.7,2) node {$x^t_{1|2}$};
\draw (2,0.5) node {$x^t_{2|1}$};
\draw (1.4,4) node {$x^t_{1|1}$};
\draw (4,1.2) node {$x^t_{2|2}$};
\draw (3,-0.5) node {$B^{*,t}_{1}$};
\draw (-0.5,3) node {$B^{*,t}_{2}$};
\end{tikzpicture}
\end{center}
\caption{Patches for the case with $K=2$ goods and $J^t=2$ budgets. The only intersection patch is $x^t_{3|1}$, which is the intersection of $B^t_{1}$ and $B^t_{2}$.}\label{fig: budgets_general}
\end{figure}

Let  
\[
\rho\left(x_{\rand{i}|\rand{j}}\right)=\mathrm{P}_{\rand{j}}\left(x_{\rand{i}|\rand{j}}\right)
\]
denote the fraction of agents who pick from a choice path $x_{\rand{i}|\rand{j}}$ given a budget path $\rand{j}$. 

The main building block of our demand framework is the dynamic stochastic choice function
\[
\rho=\left(\rho\left(x_{\rand{i}|\rand{j}}\right)\right)_{\rand{j}\in\rand{J},\rand{i}\in\rand{I}_\rand{j}}.
\]
The vector $\rho$ represents the distribution over finitely many patches and contains all the necessary information needed to determine whether $\mathrm{P}$ is consistent with DRDM and, in this discretized setup, consistent with DRUM. 
\begin{lemma}\label{lem: demandisdrum}
    The following are equivalent:
    \begin{enumerate}
        \item $P$ is consistent with DRDM.
        \item $\rho$ is consistent with DRUM.
    \end{enumerate}
\end{lemma}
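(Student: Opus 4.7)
The plan is a two-way discretization reduction. The key observation is that strict concavity of any $u^{*t}\in U^{*}$ forces $\argmax_{y\in B^{*,t}_{j_{t}}}u^{*t}(y)$ to be a single point, which must lie in a unique patch of $B^{*,t}_{j_{t}}$; hence the DRDM pushforward on $\times_{t}X^{*}$ is entirely determined by the probabilities it assigns to products of patches, and the DRDM integrand at $O^{t}=x^{t}_{i_{t}|j_{t}}$ coincides with the DRUM integrand for that patch.

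For DRDM $\Rightarrow$ DRUM, suppose $\mu^{*}$ realizes $\mathrm{P}$. I would define a measurable map $\Phi:\mathcal{U}^{*}\to\mathcal{U}$ coordinatewise by
\[
[\Phi(u^{*})]^{t}(x^{t}_{i|j}):=\max_{y\in x^{t}_{i|j}}u^{*t}(y),
\]
with the max attained because each patch is a compact subset of the budget hyperplane and $u^{*t}$ is continuous. Strict concavity of $u^{*t}$ gives a unique maximizer $y^{\star}$ of $u^{*t}$ on $B^{*,t}_{j_{t}}$ inside a single patch $x^{t}_{i_{t}|j_{t}}$, and on every other patch of the same budget the local max is strictly less than $u^{*t}(y^{\star})$; hence $\argmax_{x\in B^{t}_{j_{t}}}[\Phi(u^{*})]^{t}(x)=x^{t}_{i_{t}|j_{t}}$. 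Monotonicity of $[\Phi(u^{*})]^{t}$ on $>^{t}$ is inherited from strict monotonicity of $u^{*t}$ together with the pointwise-dominance definition of the primitive order. Injectivity on $X^{t}$ can be enforced by adding a measurable $u^{*}$-dependent perturbation bounded above by the strictly positive argmax gap on each budget, which leaves every per-budget argmax unchanged. Taking $\mu:=\Phi_{*}\mu^{*}$ then converts the DRDM integral for each patch into the DRUM integral for that patch.

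For DRUM $\Rightarrow$ DRDM, given $\mu$ realizing $\rho$, I would construct a measurable $\Psi:\mathcal{U}\to\mathcal{U}^{*}$ as follows. For each $u\in\mathcal{U}$, each $t\in\mathcal{T}$, and each $j_{t}\in\mathcal{J}^{t}$, pick a point $\hat{y}^{t}_{j_{t}}(u)$ in the relative interior of the patch $\argmax_{x\in B^{t}_{j_{t}}}u^{t}(x)$. Because $u^{t}$ is a utility on the finite set $X^{t}$, the demand $(\hat{y}^{t}_{j_{t}}(u))_{j_{t}\in\mathcal{J}^{t}}$ is automatically GARP-consistent across the finitely many budgets, so the classical Afriat recipe produces a piecewise-linear concave monotone rationalizing utility; smoothing by infimal convolution with a small strictly concave quadratic kernel yields $[\Psi(u)]^{t}\in U^{*}$ that is continuous, strictly concave, strictly monotone, and whose unique maximizer on each $B^{*,t}_{j_{t}}$ remains inside the prescribed patch. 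Pushing $\mu$ forward by $\Psi$ delivers $\mu^{*}$, and the DRDM integrals over patches reproduce $\rho$; by the first paragraph this characterizes the induced measure on $\times_{t}X^{*}$, so $\mathrm{P}$ is DRDM-consistent.

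The main obstacle is engineering $\Psi$ to be simultaneously strictly concave, strictly monotone, continuous, and to place its unique maximizer on every budget strictly inside the prescribed patch, while depending measurably on $u$. Choosing the supporting gradients at $\hat{y}^{t}_{j_{t}}(u)$ proportional to the price vectors $p_{j_{t},t}$ and the smoothing scale smaller than the distance from $\hat{y}^{t}_{j_{t}}(u)$ to the patch boundary handles the argmax-location issue and preserves strict concavity and monotonicity. Measurability of $\Psi$ reduces to a finite selection, since injective $>^{t}$-monotone utilities on the finite set $X^{t}$ fall into only finitely many ordinal equivalence classes.
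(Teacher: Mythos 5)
Your architecture (pushforward maps in both directions) is a workable substitute for the paper's reduction to KS's Theorem~3.1, and the $\Phi$ direction is essentially right, but there are two genuine gaps in the $\Psi$ direction. First, your opening claim that the DRDM pushforward on $\times_t X^*$ is ``entirely determined by the probabilities it assigns to products of patches'' is false: two measures $\mu^*$ can induce the same patch probabilities but different distributions \emph{within} patches, and the DRDM definition requires matching $\mathrm{P}_{\rand{j}}$ on \emph{all} Borel rectangles, not only on products of patches. Your $\Psi$ places the demand of every type at a single pre-selected point $\hat{y}^t_{j_t}(u)$ of the prescribed patch, so the induced demand system agrees with $\mathrm{P}$ only at the patch level. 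What is actually needed---and what the paper imports from KS---is the perturbation lemma: for any target conditional distribution within patches one can adjust the rationalizing utilities so that the maximizers realize that distribution without leaving the prescribed patches. You must either invoke that lemma or build the within-patch randomization (drawn from $\mathrm{P}$'s conditional laws given the choice path) into $\Psi$; without it, ``the DRDM integrals over patches reproduce $\rho$, so $\mathrm{P}$ is DRDM-consistent'' is a non sequitur.

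Second, the pivotal assertion that the selections $(\hat{y}^t_{j_t}(u))_{j_t\in\mathcal{J}^t}$ are ``automatically GARP-consistent because $u^t$ is a utility on the finite set $X^t$'' is unsupported, and finiteness is not the reason: an arbitrary injective function on the patches can select a GARP-violating pattern. What rules this out is the set-valued $>^t$-monotonicity built into $U^t$: if a patch $b$ lies strictly below budget $j$, then every point of $b$ is strictly vector-dominated by some point of the hyperplane $B^{*,t}_{j}$, so $B^t_{j}>^t\{b\}$ and monotonicity forces $\max_{a\in B^t_{j}}u^t(a)>u^t(b)$; hence the chosen patch is strictly preferred to every distinct patch weakly below its budget, and transitivity of the order on $\Real$ precludes revealed-preference cycles. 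This step is the entire content of the equivalence between the discrete $>^t$-monotone types and the GARP-consistent demand types (the columns of $A_T$), and it must be spelled out before Afriat's construction can be applied. (Smaller issues: patches need not be compact, so $\max$ over a patch should be $\sup$ in the definition of $\Phi$, with the resulting tie when the continuous maximizer lies in an intersection patch handled explicitly; and Afriat delivers concavity, not strict concavity, so the smoothing step carries the full burden of uniqueness of the maximizer and of keeping it inside the prescribed patch.)
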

The proof of Lemma~\ref{lem: demandisdrum} follows from \citet{kitamura2018nonparametric} and \citet{kawaguchi2017testing}. Next we provide some parametric examples of DRUM in this domain.
\begin{example}[Dynamic Random Cobb-Douglas Utility] Let $K=2$ and 
    $u^{t}(y_{1},y_{2})=y_{1}^{\alpha_{t}}y_{2}^{(1-\alpha_{t})}$. The utility parameter $\alpha_t$ is random and such that $\alpha_{t}=\max\{\min\{\alpha_{t-1}+\epsilon_{t},1\},0\}$, where $(\epsilon_t)_{t \in \mathcal{T}}$ are independent and identically distributed mean-zero random innovations with variance $\sigma^2$. The dynamic stochastic demand generated by this utility function is consistent with DRUM as long as $(\alpha_t)_{t\in\mathcal{T}}$ is independent of prices and income. 
\end{example}

\begin{example}[Based on \citealp{adams2015prices}] For a deterministic utility $v:X^*\to \Real$, the random utility at time $t\in\mathcal{T}$ is given by $u^{t}(x)=v(x)+\alpha_{t}\tr x$, where $\alpha_t$ is the random vector supported on $\Real^K$. The dynamic stochastic demand generated by this utility function is consistent with DRUM if $\alpha_t$ is independent of prices and income. 
\end{example}
In the two examples above, as well as in the examples in Section~\ref{sec: characterization of DRUM} below, we maintain the assumption that the distribution of preferences does not depend on the budget. This assumption is satisfied in experimental setups such as the ones in \citet{porter2016love}, \citet{mccausland2020testing}, and \citet{ABBK23}. That said, it may not be realistic in other setups, such as when saving is possible. This exogeneity assumption is relaxed in Section~\ref{sec:endogenous}. 

\subsection{Endogenous Expenditure in the Demand Setup}\label{sec:endogenous}
In the demand setup we assumed that budgets are exogenously given. Here, we relax the exogeneity assumption by extending the results of \citet{deb2017revealed} to our setup. Our new model will cover the classical consumption smoothing problem with income uncertainty \citep{browning1989anonparametric}. As we did with DRDM, we can define a Dynamic Random Augmented Demand Model (DRADM). Let $V$ denote the set of all continuous, strictly concave, and monotone augmented utility functions that map $X^*\times \Real_{-}$ to $\Real$, and let $\mathcal{V}=\times_{t\in\mathcal{T}}V$ be the Cartesian product of $T$ repetitions of $V$.

\begin{definition} [DRADM]
A dynamic stochastic demand $\mathrm{P}$ is consistent with DRADM if there exists a probability measure over $\mathcal{V}$, $\eta$, such that
\[
\mathrm{P}_{\rand{j}}\left(\left(O^t\right)_{t\in\mathcal{T}}\right)=\int \prod_{t\in \mathcal{T}} \Char{\argmax_{y\in X^*}v^t(y,-p\tr_{j,t}y)\in O^t}d\eta(v)
\]
for all $\rand{j}\in\rand{J}$ and for all Borel measurable $O^t\subseteq X^*$, $t\in\mathcal{T}$, where $v=(v^t)_{t\in\mathcal{T}}$.
\end{definition}

While DRDM is an extension of RUM to a dynamic setting (i.e., DRDM and RUM coincide when $T=1$), DRADM is a dynamic extension of the Random Augmented Utility Model of \citet{deb2017revealed}.  
\begin{example}[Consumption Smoothing with Income Uncertainty]\label{Example:Consumptionsmoothing}
Consider a consumer with random income stream $y=(y_t)_{t\in\mathcal{T}}$ who maximizes the expected flow of instantaneous, concave, locally nonsatiated, and continuous utilities, $u$, and does so subject to the budget constraints, discount factor $\delta$, history of incomes captured by the information set $I_t$, and the initial level of savings $s_0$. That is, at every time period $\tau$ the consumer solves
\begin{align*}
    \max_{\left\{c_{\tau}(\cdot),s_{\tau}(\cdot)\right\}_{\tau=t,\dots,T}}\Exp{\sum_{\tau=t}^{T}\delta^{\tau-t}u(c_{\tau}(y))\Big|I_\tau}
\end{align*}
subject to
\[
p_{\tau}\tr c_{\tau}(y)+s_t(y)=y_{\tau}+(1+r_{\tau})s_{\tau-1}(y).
\]
The sequences of consumption policy functions $\left(c_{t}(\cdot)\right)_{t\in\mathcal{T}}$ and saving policy functions  $\left(s_{t}(\cdot)\right)_{t\in\mathcal{T}}$ fully describe the consumption and saving decisions, respectively, of the consumer. In addition, we restrict these functions to depend only on the income history. That is, for all $t$ it is the case that $c_t(y')=c_{t}(y)$ and $s_t(y')=s_{t}(y)$ for all $y$ and $y'$ such that $y'_{\tau}=y_{\tau}$ for all $\tau\leq t$. The Bellman equation for this problem is
\[
W_{t-1}(s_{t-1})=\max_{c}\left[u(c)+\delta \Exp{W_t(y_t+(1+r_t)s_{t-1}-p_{t}\tr c)\Big|I_{t}}\right],
\]
where $W_t$ is the value function at time period $t$.
Thus, one can define the state-dependent utility function as
\[
\hat{v}^t(x,s_{t-1})=u(x)+\delta\Exp{W_t(y_t+(1+r_t)s_{t-1}(y)-p_{t}\tr c)\Big|I_{t}}.
\]
Any correlation in income across time would generate a correlation between $\{\hat{v}^t\}_{t\in\mathcal{T}}$. One can define the augmented utility function as
\[
v^t(x,-p\tr x )=u(x)+\delta\Exp{W_t(y_t+(1+r_t)s_{t-1}(y)-p_{t}\tr c)\Big|I_{t}}.
\]
Notice that the utility $\hat{v}^t$ depends on $s_{t-1}$ only through the contemporaneous expenditure $p'x$. If one assumes that different individuals have different $u$, $\delta$, and $y$ such that their joint distribution does not depend on prices, then this setup is a particular case of DRADM.
\end{example}

In Example~\ref{Example:Consumptionsmoothing},  the random augmented utility stochastic process is independent of prices because prices are determined exogenously by supply and demand forces. Next, we characterize DRADM by using the fact that consistency with DRADM is equivalent to consistency with DRDM for a normalized budget path. A normalized budget path has the same price path $(p_{j,t})_{t\in\mathcal{T}}$ and income equal to $1$. Using these normalized budgets, we can define patches as before to obtain
\[
\rho\left(x_{\rand{i}|\rand{j}}\right)=\mathrm{P}_{\rand{j}}\left(\left\{y^t\in X^*\::\:y^t/p\tr_{j,t}y^t\in x_{i|j}^t\right\}_{t\in\mathcal{T}}\right),
\]
for all $\rand{i}\in \rand{I}_{\rand{j}}$, $\rand{j}\in \rand{J}$. 
As before, the stochastic choice function that corresponds to $\mathrm{P}$ is
\[
\rho=\left(\rho(x_{\rand{i},\rand{j}})\right)_{\rand{i}\in \rand{I}_{\rand{j}},\rand{j}\in \rand{J}}.
\]
Similarly to DRDM, we rule out intersection patches.  The choice set $X^t$ and the partial order $>^{t}$ are defined analogously to the definition of DRDM.

\begin{lemma}\label{lem: demand is draum} The following are equivalent:
\begin{enumerate}
    \item $\mathrm{P}$ is consistent with DRADM.
    \item $\rho$ is consistent with DRUM.
\end{enumerate}
\end{lemma}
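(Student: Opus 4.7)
The plan is to prove Lemma~\ref{lem: demand is draum} by a two-step reduction: first show that consistency of $\mathrm{P}$ with DRADM is equivalent to consistency of a normalized stochastic demand system $\tilde{\mathrm{P}}$ with DRDM on the collection of normalized budgets $\tilde{B}^{*,t}_{j}=\{z\in X^*:p_{j,t}\tr z=1\}$, and then invoke Lemma~\ref{lem: demandisdrum} applied to $\tilde{\mathrm{P}}$. The result from \citet{deb2017revealed} does the heavy lifting in the static case; what needs to be verified is that the reduction can be carried out realization-by-realization and preserves the joint distribution across the time dimension.

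First I would establish a pointwise correspondence between realizations $v=(v^t)_{t\in\mathcal{T}}\in \mathcal{V}$ of an augmented utility profile and realizations $u^*=(u^{*t})_{t\in\mathcal{T}}\in \mathcal{U}^*$ of a standard utility profile. For any continuous, strictly concave and monotone $v^t$, the augmented maximization problem $\max_{y\in X^*} v^t(y,-p\tr y)$ yields a unique demand $y^{*,t}(p)$ for each price vector $p$, and its normalization $\tilde{y}^{*,t}(p)=y^{*,t}(p)/p\tr y^{*,t}(p)$ lies on $\{z: p\tr z=1\}$. Following the argument in \citet{deb2017revealed}, for each such $v^t$ there exists a continuous, strictly concave, and monotone utility $u^{*t}:X^*\to\Real$ defined on the normalized consumption space whose demand on $\tilde{B}^{*,t}_{j}$ coincides with $\tilde{y}^{*,t}(p_{j,t})$ for every observed $p_{j,t}$, and conversely each such $u^{*t}$ lifts to an augmented utility $v^t$ with matching normalized demand. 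Applying this map coordinate-wise at each $t$ defines a measurable map $\Phi:\mathcal{V}\to\mathcal{U}^*$. Pushing forward $\eta$ under $\Phi$ gives $\mu^*=\Phi_{\#}\eta$; by construction, for any Borel set $O^t\subseteq X^*$,
\[
\Char{\argmax_{y\in X^*}v^t(y,-p\tr_{j,t}y)/p\tr_{j,t}\argmax_{y\in X^*}v^t(y,-p\tr_{j,t}y)\in O^t}=\Char{\argmax_{z\in \tilde{B}^{*,t}_{j_t}}u^{*t}(z)\in O^t}
\]
for $(\Phi(v))^t=u^{*t}$, so the joint distribution of the path of normalized demands under $\eta$ equals the joint distribution of the demand path on normalized budgets under $\mu^*$. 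This yields the first equivalence: $\mathrm{P}$ is DRADM iff the normalized $\tilde{\mathrm{P}}$ is DRDM on $\{\tilde{B}^{*,t}_{j}\}$.

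Second, I would note that the patch partition $\{x^t_{i|j}\}$ used to define $\rho$ is exactly the partition induced by the intersection structure of the normalized budgets $\tilde{B}^{*,t}_{j}$, and the probability $\rho(x_{\rand{i}|\rand{j}})$ is precisely the probability under $\tilde{\mathrm{P}}$ that the demand path lies in the patch path $(x^{t}_{i_t|j_t})_{t\in\mathcal{T}}$. The discretized choice set $X^t$ and the primitive order $>^t$ are defined exactly as in the DRDM setup of Section~\ref{subsec: demand setup}, so Lemma~\ref{lem: demandisdrum} applies verbatim to $\tilde{\mathrm{P}}$ and gives the equivalence between DRDM-consistency of $\tilde{\mathrm{P}}$ and DRUM-consistency of $\rho$. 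Chaining the two equivalences establishes the lemma.

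The main obstacle is the first step, namely making the static argument of \citet{deb2017revealed} work jointly across time while preserving the correlation structure encoded in $\eta$. The key observation that makes this tractable is that the reduction from $v^t$ to $u^{*t}$ depends only on $v^t$ itself (not on $v^s$ for $s\ne t$) and on prices (which are exogenous under the maintained DRADM assumption), so the coordinate-wise map $\Phi$ is well-defined and measurable; this is what allows the joint law over paths to be transported faithfully between $\mathcal{V}$ and $\mathcal{U}^*$. Once this measurability and coordinate-wise structure is in place, the rest reduces to bookkeeping and a direct invocation of Lemma~\ref{lem: demandisdrum}.
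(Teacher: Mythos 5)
Your proposal is correct and follows essentially the route the paper intends: the paper omits the proof, stating only that it is analogous to \citet{deb2017revealed}, and the surrounding text describes exactly your reduction\textemdash normalize budgets to unit expenditure, translate each augmented utility $v^t$ into a standard utility on the normalized budget coordinate-wise (so the joint law over paths is preserved by pushforward), and then apply Lemma~\ref{lem: demandisdrum} to the normalized system. Your write-up makes explicit the measurability and path-preservation bookkeeping that the paper leaves implicit, but there is no substantive difference in approach.
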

The proof  of Lemma~\ref{lem: demand is draum} is omitted because it is analogous to the results in \citet{deb2017revealed}.

\section{Characterization of DRUM}\label{sec: characterization of DRUM}
Here we provide a characterization of rationalizability by DRUM when $\rho$ is observed (i.e., estimable). The main result in this section is an analogue of the results in \citet{mcfadden1990stochastic} and KS for RUM. Given the finite choice set, let a preference profile be $\rand{r}=(r^t)_{\in\mathcal{T}}$, where $r^t$ is a linear order defined on the finite set of alternatives available at time $t$, $X^t$. We restrict these linear orders to be extensions of $>^{t}$ (i.e., for $S,S'\in 2^{X^t}\setminus\{\emptyset\}$ and $S>^{t}S'$, there is some $x\in S$ such that $x$ is preferred to $y$,  $x r^t y$, for all $y\in S'$). Recall that $\rand{i}$ encodes choices in each time period. 
Given $\rand{r}$, we can encode choices in different time periods and menus in a vector $a_{\rand{r}}$ as 
\[
a_{\rand{r}}=\left(a_{\rand{r},\rand{i},\rand{j}}\right)_{\rand{j}\in\rand{J},\rand{i}\in\rand{I}_\rand{j}},
\]
with $a_{\rand{r},\rand{i},\rand{j}}=1$ if the alternative $x^t_{i_t|j_t}$ is the best item available in $B^t_{j_t}$ according to $r_t$ for all $t\in\mathcal{T}$, and $a_{\rand{r},\rand{i},\rand{j}}=0$ otherwise. 
Denote $\mathcal{R}^t$ as the set of (strict) rational preferences in a given time period $t\in\mathcal{T}$. The set of dynamic rational preference profiles $\mathcal{R}$ is the set of all preference profiles $\rand{r}$ for which there exists $u_r=(u^t_{r})_{t\in\mathcal{T}}\in\mathcal{U}$ such that
\[
a_{\rand{r},\rand{i},\rand{j}}=1\quad\iff \quad\forall t\in\mathcal{T},\: \argmax_{x\in B^t_{j_t}}u^t_{r}(x)= x_{i_t|j_t}.
\]

We form the matrix $A_T$ by stacking the column vectors $a_{\rand{r}}$ for all preference profiles $\rand{r}\in\mathcal{R}$. The dimension of this matrix is $d_{\rho}\times \abs{\mathcal{R}}$, where $d_{\rho}$ is the length of vector $\rho$. This matrix will be used to provide a characterization of DRUM that is amenable to statistical testing. 

The next axiom is the analogue of the axiom in \citet{mcfadden1990stochastic} for (static) stochastic revealed preferences \citep{border2007introductory} and will provide a different characterization of DRUM.
\begin{definition} [Axiom of Dynamic Stochastic Revealed Preference, ADSRP]  
A stochastic choice function $\rho$ satisfies ADSRP if for every finite sequence of pairs of menu and choice paths (including repetitions), $k$, $\{(\rand{i}_k,\rand{j}_k)\}$ such that $\rand{j}_k\in\rand{J}$ and $\rand{i}_k\in\rand{I}_{\rand{j}_k}$, it follows that 
\[
\sum_{k}\rho\left(x_{\rand{i}_k|\rand{j}_k}\right)\leq \max_{\rand{r}\in\mathcal{R}}\sum_{k}a_{\rand{r},\rand{i}_k,\rand{j}_k}.
\]
\end{definition}

The next theorem provides a full characterization of DRUM. Let 
\[
\Delta^{L}=\left\{y\in\Real_{+}^{L+1}\::\:\sum_{l=1}^{L+1} y_l=1\right\}
\]
denote the $L$-dimensional simplex.

\begin{theorem}\label{thm:main} 
The following are equivalent:
\begin{enumerate}
    \item $\rho$ is consistent with DRUM.
    \item There exists $\nu\in \Delta^{|\mathcal{R}|-1}$ such that $\rho=A_T\nu$.
    \item There exists $\nu\in \Real^{|\mathcal{R}|}_{+}$ such that $\rho=A_T\nu$.
    \item $\rho$ satisfies ADSRP. 
\end{enumerate}
\end{theorem}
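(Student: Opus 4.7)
The plan is to establish the four-way equivalence via the chain $(1) \Leftrightarrow (2) \Leftrightarrow (3) \Leftrightarrow (4)$, where the last equivalence carries the real content via finite-dimensional convex duality.

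For $(1) \Leftrightarrow (2)$: since each $u^t$ is injective, the set $\argmax_{y \in B^t_{j_t}} u^t(y)$ is a singleton for every $u \in \mathcal{U}$, and the induced sequence of maximizers corresponds to a unique preference profile $\rand{r}(u) \in \mathcal{R}$ (monotonicity with respect to $>^t$ is inherited from the monotonicity of $u$). Pushing $\mu$ forward along $u \mapsto \rand{r}(u)$ produces $\nu_{\rand{r}} := \mu(\{u : \rand{r}(u) = \rand{r}\})$, and the DRUM integral collapses to $\rho_{\rand{j}}(x_{\rand{i}|\rand{j}}) = \sum_{\rand{r}} \nu_{\rand{r}} a_{\rand{r},\rand{i},\rand{j}}$, giving $\rho = A_T \nu$ with $\nu \in \Delta^{|\mathcal{R}|-1}$. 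Conversely, given such $\nu$, pick any realizer $u_{\rand{r}} \in \mathcal{U}$ for each $\rand{r} \in \mathcal{R}$ (one exists by the very definition of $\mathcal{R}$) and take $\mu$ to be the discrete measure assigning mass $\nu_{\rand{r}}$ to $u_{\rand{r}}$.

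For $(2) \Leftrightarrow (3)$, only $(3) \Rightarrow (2)$ requires work. The structural observation is that every column of $A_T$ satisfies $\sum_{\rand{j},\rand{i}} a_{\rand{r},\rand{i},\rand{j}} = |\rand{J}|$ (for each menu path $\rand{j}$, exactly one choice path is selected by $\rand{r}$), and likewise $\sum_{\rand{j},\rand{i}} \rho_{\rand{j}}(x_{\rand{i}|\rand{j}}) = |\rand{J}|$ because each $\rho_{\rand{j}}$ is a probability measure. Summing entries on both sides of $\rho = A_T \nu$ therefore forces $\sum_{\rand{r}} \nu_{\rand{r}} = 1$.

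For $(3) \Leftrightarrow (4)$: by the preceding step, $(3)$ is equivalent to $\rho$ lying in the polytope $P = \hull\{a_{\rand{r}} : \rand{r} \in \mathcal{R}\}$, which by the support-function characterization of finite convex hulls holds iff $c\tr \rho \leq \max_{\rand{r} \in \mathcal{R}} c\tr a_{\rand{r}}$ for \emph{every} $c \in \Real^{d_\rho}$. To see this is equivalent to ADSRP, which restricts $c$ to nonnegative integer vectors encoding finite multisets of pairs $(\rand{j}_k, \rand{i}_k)$, I would argue in three reductions: (i) replacing $c$ by $c + \kappa (1,\dots,1)\tr$ shifts both sides by $\kappa |\rand{J}|$ (by the column-sum identity above), so the inequality is translation invariant and we may assume $c \geq 0$; (ii) both sides depend continuously on $c$, so rational $c \geq 0$ suffice; (iii) by positive homogeneity, clearing denominators reduces to $c$ with nonnegative integer entries. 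The main obstacle is precisely this reduction: one must carefully justify that restricting the support-function test to nonnegative integer multiplicities loses no content. The key enabling fact is the shared aggregate-sum identity for $\rho$ and for each column of $A_T$, which buys translation invariance in $c$ and lets us discard the negative components of a would-be separating vector.
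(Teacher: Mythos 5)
Your proposal is correct and follows essentially the same route as the paper, which establishes $(i)\iff(ii)\iff(iii)$ by the standard mixture/pushforward adaptation of Kitamura--Stoye and handles $(iv)$ via the McFadden--Richter/Border duality argument (the paper invokes Farkas' lemma where you use the support-function characterization of the polytope, but these are the same finite-dimensional separation argument, and both reductions to nonnegative integer $c$ rest on the identical column-sum identity $\mathbf{1}\tr a_{\rand{r}}=|\rand{J}|=\mathbf{1}\tr\rho$). Your write-up is in fact more self-contained than the paper's, which largely defers to the cited references.
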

The proof of Theorem~\ref{thm:main}  is analogous to the proofs for RUM in \citet{mcfadden1990stochastic,mcfadden2005revealed}, KS, and \citet{kawaguchi2017testing}.  Theorem~\ref{thm:main} (iii) is amenable to statistical testing using the test developed in KS. However, the number of columns in $A_T$ grows exponentially with $T$. Thus, naively, testing DRUM may seem impossible for relatively small $T$ even if one uses the tools of \citet{smeulders2021nonparametric}. The next lemma shows that the computational complexity of computing $A_T$, $T\geq 1$ does not grow that much relative to the computation complexity of computing $A_{1}$.

\begin{lemma}\label{lemma: Kronecker A}
    Let $A^t$ be a matrix constructed under the assumption that $\mathcal{T}=\{t\}$. That is,  $A^t$ is the matrix that encodes static rational types at time $t$. Then $A_T=\otimes_{t\in\mathcal{T}}A^t$ up to a permutation of its rows.
\end{lemma}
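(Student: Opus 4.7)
\textbf{Proof plan for Lemma~\ref{lemma: Kronecker A}.} The plan is to exploit the fact that both the row indexing (pairs of menu and choice paths) and the column indexing (dynamic preference profiles) of $A_T$ have a product structure across time, and that the defining indicator factorizes over time periods.

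First, I would verify that $\mathcal{R}=\times_{t\in\mathcal{T}}\mathcal{R}^t$. A dynamic profile $\rand{r}=(r^t)_{t\in\mathcal{T}}$ is \emph{dynamic-rational} iff there exists $u_r=(u^t_r)_{t\in\mathcal{T}}\in\mathcal{U}$ whose period-$t$ argmax on each menu in $B^t_{\cdot}$ is consistent with $r^t$. Since $\mathcal{U}=\times_{t\in\mathcal{T}}U^t$ and each $r^t$ may be rationalized by an independently chosen $u^t\in U^t$ (which is required to be an extension of $>^{t}$), the profile $\rand{r}$ lies in $\mathcal{R}$ iff each $r^t$ lies in $\mathcal{R}^t$. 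This gives the claimed Cartesian product structure for the columns.

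Second, I would factorize the entries. By definition,
\[
a_{\rand{r},\rand{i},\rand{j}}=\prod_{t\in\mathcal{T}}\Char{\argmax_{x\in B^t_{j_t}}u^t_r(x)=x^t_{i_t|j_t}}=\prod_{t\in\mathcal{T}} a^t_{r^t,i_t,j_t},
\]
where $a^t_{r^t,i_t,j_t}$ is the corresponding entry of $A^t$. This works because the argmax in period $t$ depends only on $u^t_r$, and the alternative $x^t_{i_t|j_t}$ depends only on $(i_t,j_t)$; no temporal coupling is forced by the preference profile itself (the stochastic coupling is carried entirely by $\mu$, not by individual $\rand{r}$). Combined with the column decomposition $\mathcal{R}=\times_t\mathcal{R}^t$, every entry of $A_T$ is literally a product of one entry from each $A^t$, indexed by the corresponding components of the row and column labels, which is exactly the definition of the Kronecker product $\otimes_{t\in\mathcal{T}}A^t$.

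Finally, I would address the qualifier ``up to permutation of rows.'' Rows of $A_T$ are indexed by observed pairs $(\rand{j},\rand{i})$ with $\rand{j}\in\rand{J}$ and $\rand{i}\in\rand{I}_{\rand{j}}$, listed in some chosen order, whereas rows of $\otimes_{t\in\mathcal{T}}A^t$ come in the lexicographic block order dictated by the Kronecker construction over $\times_{t\in\mathcal{T}}\mathcal{J}^t$ and $\times_{t\in\mathcal{T}}\mathcal{I}^t_{j_t}$; similarly for columns under the bijection $\rand{r}\leftrightarrow(r^t)_{t\in\mathcal{T}}$. Reindexing is a harmless relabeling, so the equality holds up to a row (and column) permutation.

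The only real obstacle is bookkeeping: making the bijections between dynamic indices and tuples of static indices precise, and checking that the column bijection is compatible with the Kronecker block ordering. Once that is set up, the factorization of $a_{\rand{r},\rand{i},\rand{j}}$ gives the result immediately, so no additional combinatorial argument is required beyond unwinding definitions.
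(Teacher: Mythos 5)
Your proposal is correct and follows essentially the same route as the paper: the paper's proof builds the columns of $A_T$ recursively as Kronecker products $a^1_k\otimes a^2_l\otimes\cdots$ of static type vectors, using exactly your key observation that the product structure of $\mathcal{U}=\times_{t\in\mathcal{T}}U^t$ imposes no cross-period restrictions, so $\mathcal{R}=\times_{t\in\mathcal{T}}\mathcal{R}^t$ and the entries factorize. Your write-up merely makes the entrywise factorization and the row/column reindexing more explicit than the paper does.
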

\begin{proof}
Note that the $k$-th and the $l$-th columns of $A_1$ and $A_2$, $a^1_{k}$ and $a^2_{l}$, encode the choices of particular types of consumers at time $t=1$ and $t=2$ (i.e., their choices in each menu at $t=1$ and $t=2$). Since there are no restrictions across $t$ on these deterministic types, we can generate the $(k,l)$-type, $a^1_k\otimes a^2_{l}$, that encodes what is picked in pairs of menus such that each menu is taken from two different time periods. Next, if we take some column from $A^3$, we can repeat the above step and obtain a composite type for three time periods. Repeating this exercise $T$ times for all possible combinations of columns will lead to a matrix that is equal to $A_T$ up to a permutation of rows.   
\end{proof}
Lemma~\ref{lemma: Kronecker A} substantially simplifies the computation of $A_T$ given that one can use the methods in KS and \citet{smeulders2021nonparametric} to construct $A^t$. In instances in which the menu structure is such that $A^t=A^{s}$ for $t\neq s$, significant computational savings are achieved. Note that $A^t=A^{s}$ can occur without the observed menus in $t$ and the observed menus in $s$ being the same. In fact, in the demand setting, $A^t$ depends only on the intersection structure induced by the budgets and not on the specific prices (see examples in the next section). Lemma~\ref{lemma: Kronecker A} also allows exploiting sparsity because the Kronecker product propagates any zero entry in $A_t$. The Kronecker product structure of the mixture representation of DRUM shows its structure is modular. Indeed, the structure of $A_T$ is built from its static components. This property allows one to parallelize the computation of $A_T$.\footnote{Since each factor of the Kronecker product can be computed independently, we can parallelize along the time dimension.} This modularity is exploited to obtain a recursive characterization of DRUM.

Unfortunately, the DRUM characterization in Theorem~\ref{thm:main} does not provide an intuitive understanding of the behavioral implications of DRUM. In the next sections, we provide just such an intuitive characterization of DRUM. This characterization demonstrates that DRUM provides additional implications relative to RUM in longitudinal data. That is, we show that requiring consistency with (static) RUM for all conditional and marginal probabilities is not enough to guarantee consistency with DRUM. In fact, the new conditions will affect the joint distribution $\rho$. 

\section{Axiomatic Characterization of DRUM via Linear Inequality Restrictions}\label{sec: behavioral characterization}
In order to understand the axiomatic structure of DRUM, we provide a characterization of many of its special cases via linear inequalities. We also provide a way to obtain a general axiomatic characterization of DRUM via linear inequalities when its static counterpart is known.   First, we need some preliminary mathematical results. 

\subsection*{$\mathcal{H}$- and $\mathcal{V}$-representations}
Theorem~\ref{thm:main} (iii) states that to test whether $\rho$ is consistent with DRUM it is enough to check whether it belongs to the convex cone
\[
\left\{A_Tv\::\:v\geq 0\right\}.
\]
This is called the $\mathcal{V}$-representation of the cone. The Weyl-Minkowski theorem states that there exists an equivalent representation of the cone (the $\mathcal{H}$-representation) via some matrix $B_T$:
\[
\left\{z\::\:H_Tz\geq 0\right\}.
\]
The $\mathcal{V}$-representation of the cone associated with DRUM provides an interpretation of the former as the observed distribution over choices is a finite mixture of deterministic types (KS, \citealp{smeulders2021nonparametric}).\footnote{\citet{kitamura2018nonparametric} were the first to notice, that in the static case, checking whether a stochastic demand is consistent with RUM amounts to checking whether its vector representation belongs to a convex cone. They also introduced the Weyl-Minkowski theorem to the study of RUM in economics.} Unfortunately, the $\mathcal{V}$-representation does not give any direct restrictions on the observed $\rho$. As a result, the analyst can hardly use the $\mathcal{V}$-representation to arrive at any helpful intuition about the empirical content of DRUM. In contrast, the $\mathcal{H}$-representation can deliver direct and sometimes intuitive restrictions on the data (see Section~\ref{sec: 2x2 case} or the dynamic BM inequalities in Section~\ref{sec: BM charachterization}). 
\par
In theory, if one possesses $A_T$, one can obtain $H_T$. Unfortunately, as noted in KS, the construction of $H_T$ from $A_T$ is a nontrivial task that becomes computationally burdensome even for moderate $T$ since the number of columns of $A_T$ grows exponentially with $T$. In Lemma~\ref{lemma: Kronecker A}, we showed that one could use the recursive structure of $A_T$ to simplify its construction substantially. In this section, we show that the same intuition carries over to the construction of $H_T$: one can move from the $\mathcal{H}$-representation of RUM to the $\mathcal{H}$-representation of DRUM with only a small computational costs. Our next result generalizes the Weyl-Minkowski theorem in a direction that is useful for our recursive setup. Henceforth, we assume that all cones are finite-dimensional and are subsets of a Euclidean vector space. 

\begin{proposition}\label{thm:weylmiknowskyrecursive} If
\[
\left\{K^tv\::\:v\geq0\right\}=\left\{z\::\:L^tz\geq0\right\}
\]
for all $t\in\mathcal{T}$, then
\[
\left\{\left(\otimes_{t\in\mathcal{T}}K^t\right)v\::\:v\geq0\right\}\subseteq\left\{z\::\:\left(\otimes_{t\in\mathcal{T}}L^t\right)z\geq0\right\}.
\]
\end{proposition}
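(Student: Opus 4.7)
The plan is to reduce the claim to a simple entrywise nonnegativity statement about the product matrices $L^t K^t$, and then invoke the mixed-product property of the Kronecker product.

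First, I would take an arbitrary element $z=(\otimes_{t\in\mathcal{T}}K^t)v$ with $v\ge 0$ and try to verify $(\otimes_{t\in\mathcal{T}}L^t)z\ge 0$. By the mixed-product property of Kronecker products, $(\otimes_{t\in\mathcal{T}}L^t)(\otimes_{t\in\mathcal{T}}K^t)=\otimes_{t\in\mathcal{T}}(L^t K^t)$, so the target inequality becomes $(\otimes_{t\in\mathcal{T}}(L^t K^t))v\ge 0$. Thus it suffices to show that each matrix $L^t K^t$ has only nonnegative entries, because the Kronecker product of entrywise nonnegative matrices is entrywise nonnegative, and an entrywise nonnegative matrix applied to a nonnegative vector returns a nonnegative vector.

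Entrywise nonnegativity of $L^t K^t$ follows directly from the hypothesis $\{K^tv:v\ge 0\}=\{z:L^tz\ge 0\}$. Indeed, for any standard basis vector $e_i\ge 0$, the column $K^t e_i$ lies in the cone $\{K^tv:v\ge 0\}$, which by hypothesis coincides with $\{z:L^tz\ge 0\}$; hence $L^t(K^t e_i)\ge 0$, i.e.\ every column of $L^t K^t$ is nonnegative. Stringing these observations together gives $(\otimes_{t\in\mathcal{T}}L^t)z = (\otimes_{t\in\mathcal{T}}(L^t K^t))v \ge 0$, establishing the claimed inclusion.

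There is no serious obstacle here; the proof is essentially a bookkeeping exercise that relies on (i) the mixed-product identity for Kronecker products and (ii) the elementary fact that the Kronecker product preserves entrywise nonnegativity. The only subtlety worth flagging is that the proposition asserts containment rather than equality, which is consistent with the fact that the argument only uses one direction of the hypothesis (namely that $K^tv$ lies in the halfspace cone defined by $L^t$); the reverse inclusion would require additional structure, which is exactly why the authors later have to invoke the more delicate entangleability results of Aubrun et al.\ to upgrade $\subseteq$ to $=$ in the DRUM application.
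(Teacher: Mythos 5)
Your proof is correct and rests on the same key observation as the paper's: the hypothesis forces each $L^tK^t$ to be entrywise nonnegative (since every column $K^te_i$ lies in the cone cut out by $L^t$), so that $\otimes_{t\in\mathcal{T}}\left(L^tK^t\right)$ maps the nonnegative orthant into itself. The paper reaches this conclusion through an iterative matricization of the Kronecker product, whereas you invoke the mixed-product identity in one step; your packaging is cleaner, but the argument is substantively identical, so there is nothing to add.
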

Proposition~\ref{thm:weylmiknowskyrecursive} is a direct extension of Theorem~A in \citet{aubrun2021entangleability} and Theorem~$7.15$ in \citet{jossede2020tensor} to more than two time periods.\footnote{We thank Chris Chambers for pointing out an error in the proof of a previous version of this result.} 
Note that if $K^t$ represents a model that can be expressed as mixtures of deterministic behavior (i.e., columns of $K^t$), then Proposition~\ref{thm:weylmiknowskyrecursive} allows one to easily construct testable conditions of the dynamic extensions of this model using its one-time $\mathcal{H}$-representations.
\par 
Next, we provide not only testable conditions but a full equivalence of the recursive $\mathcal{H}$-representation via Kronecker products and the corresponding $\mathcal{V}$-representation. To do this, first define the Kronecker power for a matrix $C$, $C^{\otimes_k}=\otimes_{j=1}^k C$ for any integer $k\geq 1$.\footnote{For $k=0$, the Kronecker power is equal to scalar $1$.} We say that the cone $\left\{Cv\::\:v\geq0\right\}$ is proper if $C$ is full row rank, the cone is closed, and any line in the vector space that contains the cone is not in the cone. For any $\phi^t$ in the interior of $\{L^{t\prime}v\::\: v\geq0\}$ (e.g., the case in which $\phi^t$ is a strict convex combination of columns of $L^{t\prime}$) and any $k\geq1$, define the projection map as follows:
\[
\gamma_{k}^{\phi^t}=\frac{1}{k}\sum_{j=1}^k\phi^{t,\otimes(j-1)}\otimes I^t \otimes \phi^{t,\otimes(k-j)},
\]
where $I^t$ is the identity matrix in the vector space containing the cone $\left\{K^tv\::\:v\geq0\right\}$. Define 
\[
\Gamma^{\rands{\phi}}_{\rand{k}}=I^1\otimes \left(\otimes_{t\in\mathcal{T}\setminus\{1\}}\gamma_{k_t}^{\phi^t}\right)
\]
for a given collection of static operators $\gamma^{\phi^t}_{k_t}$, $t\in\mathcal{T}\setminus\{1\}$. 

\begin{theorem}\label{thm:sufficiencyminimaltensorequalmaximaltensor} 
Suppose $K^t$ is proper for all $t\in\mathcal{T}$. Then
\[
\left\{\left(\otimes_{t\in\mathcal{T}}K^t\right)v\::\:v\geq0\right\}=\bigcap_{k_1=1,k_2\cdots,k_T\geq1}\left\{\Gamma^{\rands{\phi}\prime}_{\rand{k}}z\::\:\left(\otimes_{t\in\mathcal{T}}L^{t,\otimes_{k_t}}\right)z\geq0\right\}.
\]
Moreover,  $K^t$ does not have full column rank for at most one $t\in\mathcal{T}$ if and only if
\[
\left\{\left(\otimes_{t\in\mathcal{T}}K^t\right)v\::\:v\geq0\right\}=\left\{z\::\:\left(\otimes_{t\in\mathcal{T}}L^t\right)z\geq0\right\}.
\]
\end{theorem}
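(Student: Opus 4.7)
The plan is to identify both sides with minimal and maximal tensor products of cones, and then invoke the entangleability theorem of \citet{aubrun2021entangleability} together with the symmetric-extension (de Finetti) hierarchy developed in \citet{aubrun2022monogamy} and Theorem~$7.15$ of \citet{jossede2020tensor}. Under properness, $\{(\otimes_t K^t)v:v\geq 0\}$ is by construction the minimal tensor product of the single-factor cones $\{K^tv:v\geq 0\}$, while $\{z:(\otimes_t L^t)z\geq 0\}$ is their maximal tensor product. A cone $\{Kv:v\geq 0\}$ is classical (simplicial) exactly when $K$ has full column rank, and the tensor product of classical cones is again classical, so I would record these facts first. I would also record that both minimal and maximal tensor products are associative, which legitimizes grouping factors when iterating.

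For the second (equality) statement I would induct on $T$, at each step grouping the cones as $(\otimes_{t<T}K^t)\otimes K^T$ and invoking the two-cone entangleability theorem: the minimal and maximal tensor products of two proper cones coincide iff at least one of them is classical. If at most one $K^t$ fails full column rank, then by the induction hypothesis and ``classical$\otimes$classical=classical'' either the product of the first $T-1$ factors is already classical or $K^T$ is classical; in either case the two-factor equality is immediate. For the converse, if two factors $s,t$ fail full column rank, contract the remaining $T-2$ factors against interior points of the respective dual cones. This yields a positive linear map that sends the $T$-fold maximal tensor cone into the two-factor maximal tensor cone $\{K^sv\}\otimes_{\max}\{K^tv\}$ and the $T$-fold minimal tensor cone into $\{K^sv\}\otimes_{\min}\{K^tv\}$. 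Lifting a separating element supplied by the two-cone entangleability theorem (by tensoring with the $\phi^{t'}$) produces an element of the $T$-fold max cone outside the $T$-fold min cone.

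For the first (intersection) statement, the characterization is a tensor analogue of the finite-order quantum de Finetti theorem. The map $\gamma^{\phi^t}_{k_t}$ embeds $V^t$ into $(V^t)^{\otimes k_t}$ as the position-averaged symmetric extension along $\phi^t$, and its transpose $\Gamma^{\boldsymbol{\phi}\prime}_{\rand{k}}$ projects a vector $z$ in the enlarged space $V^1\otimes(V^2)^{\otimes k_2}\otimes\cdots\otimes(V^T)^{\otimes k_T}$ back to $V^1\otimes\cdots\otimes V^T$. The constraint $(\otimes_t L^{t,\otimes k_t})z\geq 0$ says $z$ lies in the maximal tensor cone of the enlarged system, so the set inside the intersection is precisely the ``vectors in the original space with a $\rand{k}$-symmetric extension in the maximal sense.'' The inclusion ``min $\subseteq$ intersection'' is immediate, because pure tensors admit trivial symmetric extensions of every order, and this property is preserved under conic combinations. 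The converse is the de Finetti limit: having maximal-tensor symmetric extensions of every order $k_t$ in each factor $t\geq 2$ forces membership in the minimal tensor cone. The asymmetry $k_1=1$ simply reflects that it suffices to symmetrically extend in all-but-one factor, by the same grouping argument used for the second statement.

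\textbf{Main obstacle.} The hard step is the de Finetti direction in the first statement: one must quantitatively control how the max-tensor symmetric-extension cones shrink onto the minimal tensor cone as $k_t\to\infty$, for which the hypothesis that $\phi^t$ lies in the \emph{interior} of the dual cone is crucial (it makes the lift $\gamma^{\phi^t}_{k_t}$ a strictly positive map), together with a compactness argument in each factor. The converse direction of the second statement, while conceptually clean, requires a delicate choice of interior points to ensure that the contraction map genuinely relates the $T$-factor min/max cones to the corresponding two-factor min/max cones in a rank-faithful way.
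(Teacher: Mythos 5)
Your proposal is correct and, at its core, follows the same route as the paper: both sides are identified with the minimal and maximal tensor products of the cones $\{K^tv\,:\,v\geq0\}$, the intersection (de Finetti) statement is delegated to Theorem~1 of \citet{aubrun2022monogamy}, the ``moreover'' statement to Corollary~4 of \citet{aubrun2021entangleability}, and the passage from two factors to $T$ factors is handled by associativity of the tensor/Kronecker product, exactly as the paper does. The one point of genuine divergence is the sufficiency half of the ``moreover'' part, which the paper proves self-containedly and you instead derive by induction on $T$ from the two-cone entangleability theorem plus ``classical $\otimes$ classical $=$ classical.'' The paper's argument is more elementary: properness gives full row rank of each $K^t$, hence of $K_T=\otimes_t K^t$, so any $z$ with $L_Tz\geq0$ admits a (possibly signed) solution of $K_Tv=z$; the full-column-rank factors are then peeled off one at a time (each being square and invertible), reducing to $L^1K^1v\geq0$, at which point the static Weyl--Minkowski equivalence for $K^1$ supplies nonnegative replacements that reassemble into a nonnegative solution, contradicting the assumption that none exists. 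Your induction is shorter but rests on the identification ``classical $\iff$ full column rank of the given generating matrix,'' which is valid only in one direction (a rank-deficient generating matrix can still span a simplicial cone); since the paper makes the same identification when translating Corollary~4 into its rank condition, this is not a gap relative to the paper's own proof, but it is worth being aware that the converse direction of the ``moreover'' statement really uses non-classicality of the cones rather than rank deficiency of the matrices per se. Your treatment of the first statement (min $\subseteq$ intersection is trivial; the reverse inclusion is the de Finetti limit, requiring $\phi^t$ interior to the dual cone) matches the paper, which likewise does not reprove that direction.
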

For $T=2$, Theorem~\ref{thm:sufficiencyminimaltensorequalmaximaltensor} is proved in \citet{aubrun2022monogamy}. It easily extends to more than two time periods due to the associativity of the Kronecker product.\footnote{See also Remark 1 on page 9 of \citet{aubrun2022monogamy}.} The ``moreover'' part of our Theorem~\ref{thm:sufficiencyminimaltensorequalmaximaltensor} above is established in Theorem~A  and Corollary~$4$ in \citet{aubrun2021entangleability}. The first part of Theorem~\ref{thm:sufficiencyminimaltensorequalmaximaltensor}  essentially provides an approximation result that allows us to obtain the $\mathcal{H}$-representation of a dynamic model via its one-time counterparts by using extensions of the model to $k_t$ time periods. For some cones the number of extensions can be infinite but in practice we can use a finite number of extensions with the knowledge that in the limit this produces an exact characterization. The simplest case of the previous result happens under additional full column rank requirements. 

\subsection*{$\mathcal{H}$-representation of DRUM}
Proposition~\ref{thm:weylmiknowskyrecursive} gives the necessary conditions for building the $\mathcal{H}$-representation of DRUM from its static components (i.e., where $K^t=A^t$ and $L^t=H^t$, where $H^t$ is the matrix from the $\mathcal{H}$-representation of a cone generated by $A^t$). We show that despite the fact that $A^t$ does not generate a proper cone (because it is never of full row rank), we can use  Theorem~\ref{thm:sufficiencyminimaltensorequalmaximaltensor} to obtain sufficient conditions. The row rank is not full because of the ``adding-up'' constraint\textemdash one alternative has to be picked from \emph{every} menu. Hence, the sum of all rows belonging to the same menu will give the row of ones. In the running example with binary menus, with matrix $A^t$ given by Table~\ref{tab: A_t binary}, the sum of the first two rows is equal to the sum of the third and the fourth rows and is equal to the sum of the last two rows. However, Theorem~\ref{thm:sufficiencyminimaltensorequalmaximaltensor} can still be used to obtain the characterization of DRUM as Theorem~\ref{thm: WM stable rho} demonstrates.   To formalize this, consider the following submatrix of $A^t$, $t\in\mathcal{T}$: from every menu except the first one, pick the last alternative and remove the corresponding row from $A^t$. Let $A^{t*}$ denote the resulting matrix. Observe that when $\mathcal{T}=\{t\}$, $\rho$ is consistent with DRUM if and only if $\rho^*$, defined analogously to $A^{t*}$, is such that $A^{t*}\nu^*=\rho^*$ for some $\nu^*\geq 0$. In particular, given the simplex constraints on a given $\rho$, we can safely drop the adding-up constraints in matrix $H^t$ from now on when computing the $\mathcal{H}$-representation of $A^t$. The reason is that the adding-up constraints are guaranteed to hold. Before stating the next theorem, we need to introduce a key behavioral condition implied by DRUM. 

\begin{definition}[Stability]
We say that $\rho$ is \emph{stable} if 
$
\sum_{i\in\mathcal{I}^t_j}\rho\left(x_{\rand{i}|\rand{j}}\right)
$
is the same for all $j\in\mathcal{J}^t$, for any $t\in\mathcal{T}$ and $x_{\rand{i}|\rand{j}}$.
\end{definition}
Stability means that the marginal distribution of choices at any $t$ does not depend on the menu in any other $t'\neq t$. Under stability, the marginal distribution of choices will not change due to either the menu the consumers faced in the past or the menus the consumers will face in the future.  Recall that we have assumed that the stochastic utility process does not depend on the budgets. This condition is an implication of that assumption.\footnote{Stability and the simplex constraints on $\rho$ are formally defining a vector subspace within which $A^t$ is associated with a proper cone.} 
Stability was first defined in \citet{straleckinotes}. \citet{chambers2021correlated} call this condition  ``marginality'' in their domain. 
\par 
We also need a notion of uniqueness of RUM and DRUM that is directly connected to the requirement of full column rank in Theorem~\ref{thm:sufficiencyminimaltensorequalmaximaltensor}. 
\begin{definition}[Uniqueness] 
We say that $A^t$ generates a unique RUM when the system $\rho=A^t\nu$ has a unique solution for all stochastic choice functions $\rho$. Also, we say that DRUM (associated with matrix $\otimes_{t\in\mathcal{T}}A^t$) satisfies uniqueness when for all  $t\in\mathcal{T}\setminus{\{t'\}}$, $A^t$ generates a unique RUM for a given  period $t'$. 
\end{definition}
\par
Set $\phi^{*,t}$ to be the average of all columns of $H^{t\prime}$. In other words, $\phi^{*,t}$ represents a testable linear inequality of static RUM such that for $\mathcal{T}=\{ t \}$, $\phi^{*,t\prime}\rho\geq 0$. Define  implicitly the associated operator $\Gamma^{\rands{\phi}^*\prime}_{\rand{k}}$ as well.

\begin{theorem}\label{thm: WM stable rho}
Assume that $A^{t*}$ is full row rank for all $t\in \mathcal{T}$. Then $\rho$ is consistent with DRUM if and only if 
 $\rho$ is stable and
\begin{align}\label{eq: entangled}
\rho\in \bigcap_{k_1=1,k_2,\cdots,k_T\geq1}\left\{\Gamma^{\rands{\phi}^*\prime}_{\rand{k}}z\::\:\left(\otimes_{t\in\mathcal{T}}H^{t,\otimes_{k_t}}\right)z\geq0\right\}.    
\end{align} 
Moreover, $\rho$ is consistent with a unique DRUM if and only if $\rho$ is stable and $(\otimes_{t\in\mathcal{T}}H^t)\rho\geq 0$.
\end{theorem}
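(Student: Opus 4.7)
The plan is to apply Theorem~\ref{thm:main} together with Lemma~\ref{lemma: Kronecker A} to rewrite consistency with DRUM as the existence of $\nu\geq 0$ with $\rho=(\otimes_{t\in\mathcal{T}}A^{t})\nu$, and then to invoke Theorem~\ref{thm:sufficiencyminimaltensorequalmaximaltensor} applied to the Kronecker product of the static cones generated by the $A^{t}$. The main obstacle is that $A^{t}$ does not itself generate a proper cone: the rows of $A^{t}$ indexed by a fixed menu sum to the all-ones row (the adding-up equality), so $A^{t}$ is never of full row rank and Theorem~\ref{thm:sufficiencyminimaltensorequalmaximaltensor} cannot be applied to $A^{t}$ directly. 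Resolving this gap by passing to the reduced matrix $A^{t*}$ on the stability-plus-simplex slice is the crux of the argument.

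First I would verify that stability is necessary. Each column $a_{\rand{r}}$ of $A_{T}$ records the choices of a dynamic rational type that picks exactly one alternative from each menu at each date, so holding $(i_{s})_{s\neq t}$ and all $j_{s}$ fixed and summing $a_{\rand{r},\rand{i},\rand{j}}$ over $i_{t}\in\mathcal{I}^{t}_{j_{t}}$ yields a quantity that is independent of $j_{t}$. Integrating this identity against $\nu$ shows that the marginal $\sum_{i_{t}}\rho(x_{\rand{i}|\rand{j}})$ does not depend on $j_{t}$, which is stability.

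Next I would neutralize the failure of properness by restricting to the affine slice cut out by stability and the simplex constraints. On that slice, the map $\rho\mapsto\rho^{*}$ obtained by dropping, from each menu except a reference menu, the row corresponding to the last alternative is an affine bijection, and $\rho=A_{T}\nu$ is equivalent to $\rho^{*}=(\otimes_{t}A^{t*})\nu$. Under the hypothesis that every $A^{t*}$ is full row rank, the static cone $\{A^{t*}v:v\geq 0\}$ is proper: it is closed because finitely generated, it spans its ambient space by full row rank, and it is pointed because all entries of $A^{t*}$ are nonnegative, so any element of the cone is componentwise nonnegative, which forces $x=-x$ in the cone only for $x=0$. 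The matrix $H^{t}$ then provides a bona fide $\mathcal{H}$-representation of this proper cone on the stable/simplex slice.

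With properness secured, I would apply Theorem~\ref{thm:sufficiencyminimaltensorequalmaximaltensor} with $K^{t}=A^{t*}$ and $L^{t}=H^{t}$ to obtain the recursive characterization (\ref{eq: entangled}); converting back along the affine bijection $\rho^{*}\leftrightarrow\rho$ produces the stated condition on $\rho$. For the ``moreover'' part, the uniqueness hypothesis is exactly that $A^{t*}$ has full column rank for all but at most one $t\in\mathcal{T}$, which is the hypothesis of the second part of Theorem~\ref{thm:sufficiencyminimaltensorequalmaximaltensor}; the conclusion of that clause collapses the intersection over $(k_{1},\ldots,k_{T})$ to the single inequality system $(\otimes_{t\in\mathcal{T}}H^{t})\rho\geq 0$, completing the proof.
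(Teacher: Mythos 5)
Your proposal is correct and follows essentially the same route as the paper: necessity of stability from the adding-up structure of the rational types, reduction to the reduced matrices $A^{t*}$ to obtain proper cones, and then an appeal to Theorem~\ref{thm:sufficiencyminimaltensorequalmaximaltensor} (with the full-column-rank clause delivering the ``moreover'' part). Be aware, though, that the step you dispatch as ``the map $\rho\mapsto\rho^{*}$ is an affine bijection on the stable/simplex slice, so $\rho=A_{T}\nu$ iff $\rho^{*}=(\otimes_{t}A^{t*})\nu$'' is precisely where the paper spends nearly all of its effort: it verifies, by an explicit recursive computation with the matrices $G^{t}$ satisfying $A^{t-}=G^{t}A^{t*}$ and the relabeling operators $W^{t}$ and $Y^{t}$, that any solution of the reduced system $A_{T}^{*}v=\rho^{*}$ also satisfies the dropped equations $A_{T}^{-}v=\rho^{-}$ when $\rho$ is stable, so a complete write-up must supply that verification rather than assert the equivalence.
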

The condition that $A^{t*}$ is full row rank for all $t\in \mathcal{T}$ is satisfied in all the examples we are aware of. Importantly, the condition holds for the abstract setups of \citet{li2021axiomatization} and of \citet{chambers2021correlated}, as proved by \citet{dogan2022every}. It holds as well in the finite abstract setup with limited menu variation, as proven in \citet{saito2017axiomatizations}. We conjecture that this condition  is true in many other settings and have verified it to be true in the demand setup with up to six budgets per period with a maximal intersection pattern, and with $K \in \{2,3,4,5\}$.\footnote{Note that the fact that the full row rank condition is the only condition that needs to be verified in each case as the cone associated with RUM is closed, and any line in the vector space that contains the cone is not in the cone.} 
\par
Note that stability is a set of equality restrictions on $\rho$. Since any equality restriction can be represented as two inequality restrictions, Theorem~\ref{thm: WM stable rho} allows us to recursively obtain the $\mathcal{H}$-representation of DRUM from the $\mathcal{H}$-representation of its static components for any time window, for the unique DRUM case.  In other words, one just needs to derive the $\mathcal{H}$-representation of a (unique) RUM and then convert it to the dynamic setting and add the constraints implied by stability. This delivers a substantial gain over the direct computation of the $\mathcal{H}$-representation since the existing numerical algorithms transforming $\mathcal{V}$-representations to $\mathcal{H}$-representations are known to work only for small and moderate-size problems. That is, the computational complexity of the dynamic problem is only bounded by the computational complexity of the static one. 

To better interpret our axiomatization of DRUM, note that for $T=2$,
\[
\otimes_{t=1}^T H^t\rho=\text{vec}\left(H^2 R H^{1\prime}\right),
\]
where $\text{vec}(C)$ is the vector obtained by stacking columns of matrix $C$ and $R$ is the matrix such that $\text{vec}\left(R\right)=\rho$. Since every column of $R$ corresponds to a subvector of $\rho$ with a fixed choice in period $t=1$, in the case of unique RUM, $V_2=H^2R$  captures the distributions over static preferences types in $t=2$ conditional on choices in $t=1$. The requirement that 
\[
\otimes_{t=1}^T H^t\rho=\text{vec}(H^2RH^{1\prime})=\text{vec}(V_2H^{1\prime})\geq0
\]
is effectively equivalent to requiring that $V_2\tr$ can be written as a mixture over linear orders at $t=1$. In other words, in the symmetric case where $H^1=H^2$, we just recursively apply the same RUM restrictions to linear combinations of $\rho$. The necessity of these conditions for DRUM is not surprising because its separable structure in time (i.e., DRUM describes DMs maximizing a sequence of random utilities in time). In turns out, these conditions (together with stability) are sufficient when DRUM is unique. 

When DRUM is not unique, then additional restrictions implied by DRUM can emerge. In particular, the separability of DRUM does not translate to its axiomatic structure. The reason is that for unique DRUM, each period RUM (except one) is associated with a convex cone geometrically equivalent to a simplex. The cone associated with the $\mathcal{H}-$representation is, in this case, the Kronecker product of the faces of the different simplices. The vertex of such a cone coincides precisely with the collection of deterministic preference profiles that characterize the unique DRUM.
In contrast, when uniqueness fails for more than one time period, then the cones associated with RUM in each period are no longer equivalent to simplices, in that case, the Kronecker product of the facets of this cone may have vertices that are not consistent with a deterministic preference profile, in addition to the vertex associated with the $\mathcal{V}$-representation of DRUM. We will call these additional vertices \emph{nonrational preference profiles}. DRUM further restricts behavior beyond the recursive application of the static conditions such that we rule out these nonrational preference profiles. The way we can eliminate these additional restrictions is by creating a \emph{thought experiment} where we replicate one time period to obtain additional \emph{experiments} where we hope extensions of the nonrational preference profiles are not longer in the cone produced by the Kronecker product of the facets of the RUM cones. Then we project this extension back to the original time window, and the projection is chosen in such a way as to keep the projection to the interior of the cone. In other words, each of these replications weakens the effect of nonrational preference profiles allowing us to isolate the DRUM types. 

Remarkably, since Theorem~\ref{thm:sufficiencyminimaltensorequalmaximaltensor} provides a necessary and sufficient condition, if DRUM is not unique, then there is no hope that the $\mathcal{H}$-representation of DRUM is just a Kronecker product of the static matrices associated with the $\mathcal{H}$-representation of static RUM. That is why the problem of providing a full characterization of DRUM is a hard problem. 

For the case of the nonunique DRUM, we still obtain the recursive linear inequalities, implied by the Kronecker product of the static RUM inequalities, as necessary conditions for consistency with DRUM. But to obtain a sufficient condition, we have to do more work. Nevertheless, our result provides an explicit way to compute the additional  restrictions on $\rho$ consistent with DRUM. Indeed, applying the results in \citet{aubrun2022monogamy}, we obtain equation/condition~\eqref{eq: entangled} that describes a decreasing sequence of outer approximations to the convex cone associated with DRUM. \citet{doherty2004complete} shows that using these outer approximations can do a good job approximating some cones of interest for finite $k$. 

The general characterization of (static) RUM for our demand setup via $\mathcal{H}$-representation is yet to be discovered \citep{stoye2019revealed}. Only special cases are fully solved: the case of two budgets \citep{hoderlein2014revealed}, and the case of three goods and three budgets (KS). This stands in contrast with the abstract setup  solved in \citet{block1960random} and \citet{falmagne1978representation}. Fortunately, the BM inequalities can be modified in our discretized setup, as we will see below, to deal with the demand setup. Nevertheless, our result implies that once the generic $\mathcal{H}$-representation of RUM becomes available, the analogous DRUM characterization will also become available. 

We want to highlight that our result bears a conceptual resemblance to \citet{debreu1963limit} result on the core convergence to the competitive economy allocation. In that result, as in ours, there is a gap between the core and the competitive economy allocation. Using replicas of the economy, \citet{debreu1963limit} can eliminate that gap. We replicate one of the periods until we eliminate the gap between the axiomatic structure of DRUM and the recursive application of the static axioms of RUM. Similarly to this classical result of \citet{debreu1963limit}, Theorem~\ref{thm: WM stable rho} shows how one can shrink the gap by diminishing the relative importance of the nonrational preference profiles in the cone associated with the Kronecker product of the facets (or axioms) of static RUM.

\subsection*{Back to the Binary Menus Example.}

In our running example, $A^t$ is not full column rank. That means that the dynamic triangle conditions (and stability) are necessary but not sufficient conditions for DRUM.  In our running example, for $T=2$, we will now obtain explicitly all components of equation~\eqref{eq: entangled}, and explain how they provide testable implications for DRUM that become sufficient as we take large enough $k$. For $k=1$, we conclude that if $\rho$ is consistent with DRUM, then it satisfies $\otimes_{t\in \mathcal{T}}H^t\rho\geq 0$ or the dynamic triangle conditions. We then focus on the second extension of the theory for $k=2$. This means that we consider three virtual periods, $T^{v,3}=3$. Let $\rand{J}^v$ be the set of all menu paths in the virtual time window. Define $\rho^v$ as a dynamic stochastic choice function on $\rand{J}^v$.
\par 
To simplify the exposition and to obtain a reduction of dimensionality, we will use the fact that $A^t$, in Table~\ref{tab: A_t binary}, can be simplified without loss of generality by computing $A^{t*}$ as the submatrix with all rows of $A^t$ except for row~$4$ and row~$6$. Due to the simplex constraints of DRUM these rows are redundant.   We can then obtain the $\mathcal{H}$-representation matrix $H^{t*}$ from Table~\ref{table:H binary} by deleting columns~$4$ and $6$, and rows~$2$, $4$, $5$, and $6$. We have to retain the nonnegativity constraints that are not associated with the deleted rows in $A^t$. Since we are imposing the simplex and stability constraints on $\rho$, we can simplify equation~\eqref{eq: entangled} to work with the reduced $H^{t*}$ and the reduced $\rho^*$, of which  the latter is equivalent to the subarray of $\rho$ obtained after we delete all entries with a choice path containing one of the deleted rows of $A^t$. (We define $\rho^{*,v}$ in an analogous way to $\rho^*$.)  
\par
We set $\phi^{*,2}$ as the average of all the triangle conditions in Table~\ref{table:H binary} and the nonnegativity constraints:
\[
\phi^{*,2}=\frac{1}{6}\left(\begin{array}{cccccccccccc}
     2&2&1&1 
\end{array}\right)\tr.
\]
Recall that $I^t$ is a diagonal matrix of dimension $4$ for all $t\in \mathcal{T}$.
The projection mapping $\gamma_{2}^{\phi^{*,2}}$ is given by a matrix of size $4\times 16$ whose rows with nonnegative entries add up to $1$:
\[
\left(
\begin{array}{cccccccccccccccc}
 \frac{1}{3} & \frac{1}{6} & \frac{1}{12} & \frac{1}{12} & \frac{1}{6} & 0 & 0 & 0 &
   \frac{1}{12} & 0 & 0 & 0 & \frac{1}{12} & 0 & 0 & 0 \\
 0 & \frac{1}{6} & 0 & 0 & \frac{1}{6} & \frac{1}{3} & \frac{1}{12} & \frac{1}{12} & 0 &
   \frac{1}{12} & 0 & 0 & 0 & \frac{1}{12} & 0 & 0 \\
 0 & 0 & \frac{1}{6} & 0 & 0 & 0 & \frac{1}{6} & 0 & \frac{1}{6} & \frac{1}{6} &
   \frac{1}{6} & \frac{1}{12} & 0 & 0 & \frac{1}{12} & 0 \\
 0 & 0 & 0 & \frac{1}{6} & 0 & 0 & 0 & \frac{1}{6} & 0 & 0 & 0 & \frac{1}{12} &
   \frac{1}{6} & \frac{1}{6} & \frac{1}{12} & \frac{1}{6} \\
\end{array}
\right).
\] 
The matrix $\gamma_2^{\phi^{*,2}}$  can be interpreted as a linear operator taking weighted averages. In fact, the associated matrix $\Gamma^{\phi^{*,2}\prime}_{2}$
reduces any $\rho^v$ of size $64\times 1$ to a vector of size $16\times 1$, that is the size of $\rho^*$. 
Recall that we defined the matrix $H^t$ by Table~\ref{table:H binary} and nonnegativity constraints. If $\rho$ is consistent with DRUM, then there is some virtual $\rho^v$ such that the following will be satisfied: 
\begin{align*}
    &\rho^*=\Gamma^{\phi^{*,2}\prime}_{2}\rho^{*,v} \\
    &\otimes_{t=1}^3H^{*,t}\rho^{*,v} \geq 0.
\end{align*}
It should be evident that the dynamic triangle inequality conditions are implied by the previous conditions, but new emergent conditions also appear. In particular, we write down explicitly the first entries of the $\rho^*$ vector (with each entry representing the probability of a choice path) in terms of the $\rho^{*,v}$ vector (the remaining entries can be computed easily by the reader):
\begin{align*}
    \rho^*_1&=\frac{\rho_1^{*,v}}{3}+\frac{\rho_2^{*,v}}{6}+\frac{\rho_3^{*,v}}{12}+\frac{\rho_4^{*,v}}{12}+\frac{\rho_5^{*,v}}{6}+\frac{\rho_9^{*,v}}{12}+\frac{\rho_{13}^{*,v}}{12} \\
    \rho^*_2&=\frac{\rho_2^{*,v}}{6}+\frac{\rho_5^{*,v}}{6}+\frac{\rho_6^{*,v}}{3}+\frac{\rho_7^{*,v}}{12}+\frac{\rho_8^{*,v}}{12}+\frac{\rho_{10}^{*,v}}{12}+\frac{\rho_{14}^{*,v}}{12}.
\end{align*}

These two equations illustrate the fact that $\rho_1$ and $\rho_2$ become connected or dependent through $\rho^{*,v}$ entries $\rho_2^{*,v}$ and $\rho_5^{*,v}$. In addition, we can obtain new inequalities explicitly from these relations. Behaviorally, these additional conditions state that when $\rho$ is consistent with DRUM, then $\rho$ must be the result of projecting, back to $T=2$, a $\rho^v$ that is consistent with the dynamic triangle conditions, in a larger time window. The projection mapping $\Gamma^{\phi^{*,2}\prime}_{2}$ can be interpreted as a \textit{projection device} based on the average of testable implications of the static case $\phi^{*,2}$. This device asks what the choices of the virtual DMs associated with $\rho^v$ in the actual time window are. We can make analogous statements for any $k\geq 3$.

\subsection{The Simple Setup: two budgets per time period}\label{sec: 2x2 case}
Here we illustrate our main results in the demand environment with two budgets in each time period $B^{*,t}_{1}$ and $B^{*,t}_{2}$ such that $B^{*,t}_{1}\cap B^{*,t}_{2}\neq \emptyset$ and $w_{1,t}/p_{1,t,K}>w_{2,t}/p_{2,t,K}$ for all $t\in \mathcal{T}$. To simplify the analysis, we assume that the intersection patches are picked with probability zero. Thus, in each time period there are four patches $x^t_{1|1},x^t_{2|1},x^t_{1|2}$, and $x^t_{2|2}$ (see Figure~\ref{fig:simple-setup} for a graphical representation of the case with $K=2$ goods).\footnote{Formally, $x^t_{1|1}=\{y\in B^{*,t}_1\::\:p\tr_{2,t}y>w_{2,t}\}$, $x^t_{2|1}=\{y\in B^{*,t}_1\::\:p\tr_{2,t}y<w_{2,t}\}$, $x^t_{1|2}=\{y\in B^t_2\::\:p\tr_{1,t}y<w_{1,t}\}$, and $x^t_{2|2}=\{y\in B^{*,t}_2\::\:p\tr_{1,t}y>w_{1,t}\}$.} We call choice path configurations implied by these four patches the \emph{simple setup} choice paths. An example of a budget path for $T=2$ is $(2,1)$ (i.e. $B^1_2$ and $B^2_1$), and an example of a choice path in this budget path is $\left(x_{1|2}^{1},x_{1|1}^{2}\right)$. 
\par 
In this setup, there are three rational demand types per time period that are described in Table~\ref{tab:demandtypes1}.\footnote{We use the convenient notation developed in \citet{im2021non}.} Each demand type $\theta^t_{i,j}$ picks the $i$-th patch in menu $B_1^t$ and the $j$-th patch in menu $B^t_2$ at time $t$.
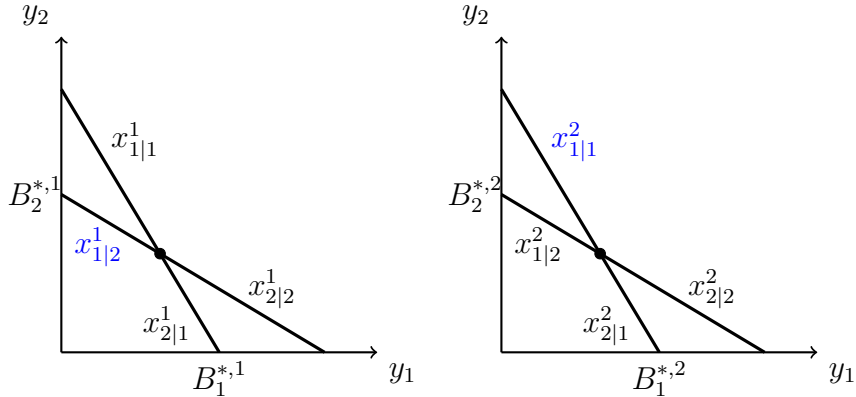
\begin{figure}[h]
\begin{centering}
\begin{tikzpicture}[scale=0.7] 
\draw[thick,->] (0,0) -- (6,0) node[anchor=north west] {$y_1$}; 
\draw[thick,->] (0,0) -- (0,6) node[anchor=south east] {$y_2$}; 
\draw[very thick] (0,5) -- (3,0); 
\draw[very thick] (5,0) -- (0,3); 
\draw [fill=black] (1.875,1.875) circle[radius=.1]; 
\draw (0.7,2) node [color=blue] {$x^1_{1|2}$}; \draw (2,0.5) node {$x^1_{2|1}$}; \draw (1.4,4) node {$x^1_{1|1}$}; \draw (4,1.2) node {$x^1_{2|2}$}; \draw (3,-0.5) node {$B^{*,1}_{1}$}; \draw (-0.5,3) node {$B^{*,1}_{2}$};
\end{tikzpicture}
\begin{tikzpicture}[scale=0.7] 
\draw[thick,->] (0,0) -- (6,0) node[anchor=north west] {$y_1$}; 
\draw[thick,->] (0,0) -- (0,6) node[anchor=south east] {$y_2$}; 
\draw[very thick] (0,5) -- (3,0); 
\draw[very thick] (5,0) -- (0,3); 
\draw [fill=black] (1.875,1.875) circle[radius=.1]; 
\draw (0.7,2) node {$x^2_{1|2}$}; \draw (2,0.5) node {$x^2_{2|1}$}; \draw (1.4,4) node [color=blue] {$x^2_{1|1}$}; \draw (4,1.2) node {$x^2_{2|2}$}; \draw (3,-0.5) node {$B^{*,2}_{1}$}; \draw (-0.5,3) node {$B^{*,2}_{2}$};
\end{tikzpicture}
\par\end{centering}
\caption{Simple setup for $K=2$ goods and no intersection patches. \label{fig:simple-setup}}
\end{figure}
\begin{table}[h]
\begin{centering}
\begin{tabular}{|c|c|c|}
\hline 
 Type/Budget & $B^t_{1}$ & $B^t_{2}$\\
\hline 
$\theta^t_{1,1}$ & $x^t_{1|1}$ & $x^t_{1|2}$\\
\hline 
$\theta^t_{1,2}$ & $x^t_{1|1}$ & $x^t_{2|2}$\\
\hline 
$\theta^t_{2,2}$ & $x^t_{2|1}$ & $x^t_{2|2}$\\
\hline 
\end{tabular}
\par\end{centering}
\caption{Choices of 3 rational types in menus $B^t_1$ and $B^t_2$ at time $t$.}\label{tab:demandtypes1}
\end{table}
\par 
We can now write down the associated $A_T$ matrix. Since there are two intersecting budgets in every time period, $A^t=A^{t'}$ for all $t,t'\in\mathcal{T}$. Thus, by Lemma~\ref{lemma: Kronecker A}, we can compute the matrix $A^t$ for one period. We display the matrix $A^t$ in Table~\ref{tab: A_t}. Note that it is easy to verify that $A^t$ has full column rank; the DRUM associated with $\otimes_{t\in\mathcal{T}}A^t$ is thus unique according to our definition. This allows us to obtain necessary and sufficient conditions explicitly. Using $A^t$, we can write down the matrix $A_T$ for any $T$ (e.g., see Table~\ref{tab: A_T for simplesetup}  for $T=2$ or $A_T=A^1\otimes A^2$).
\begin{table}
\begin{centering}
\scalebox{0.9}{
\begin{tabular}{c!{\vrule width 2pt}c|c|c|c|}
 & $\theta^t_{1,1}$& $\theta^t_{1,2}$ & $\theta^t_{2,2}$\\
\noalign{\hrule height 2pt}
$x^t_{1|1}$ & $1$ & $1$ & - \\
\hline 
$x^t_{2|1}$ & - & - & $1$\\
\hline 
$x^t_{1|2}$ & $1$ & - & - \\
\hline 
$x^t_{2|2}$ & - & $1$ & $1$
\end{tabular}
}
\par\end{centering}
\caption{The matrix $A^t$ for 2 budgets. $``-''$ corresponds to zero.}\label{tab: A_t}
\end{table}
\par 
     
\begin{table}
\begin{centering}
\scalebox{0.6}{
\begin{tabular}{c!{\vrule width 2pt}c|c|c|c|c|c|c|c|c}
 & $(\theta^1_{1,1},\theta^2_{1,1})$ & $(\theta^1_{1,1},\theta^2_{1,2})$ & $(\theta^1_{1,1},\theta^2_{2,2})$ & $(\theta^1_{1,2},\theta^2_{1,1})$ & $(\theta^1_{1,2},\theta^2_{1,2})$ & $(\theta^1_{1,2},\theta^2_{2,2})$ & $(\theta^1_{2,2},\theta^2_{1,1})$ & $(\theta^1_{2,2},\theta^2_{1,2})$ & $(\theta^1_{2,2},\theta^2_{2,2})$\\
\noalign{\hrule height 2pt}
$\left(x^1_{1|1},x^2_{1|1}\right)$ & $1$ & $1$ & - & 1 & 1 & - & - & - & -\\
\hline 
$\left(x^1_{1|1},x^2_{2|1}\right)$ & - & - & 1 & - & - & 1 & - & - & -\\
\hline 
$\left(x^1_{1|1},x^2_{1|2}\right)$ & $1$ & - & - & 1 & - & - & - & - & -\\
\hline 
$\left(x^1_{1|1},x^2_{2|2}\right)$ & - & 1 & 1 & - & 1 & 1 & - & - & -\\
\hline 
$\left(x^1_{2|1},x^2_{1|1}\right)$ & - & - & - & - & - & - & 1 & 1 & -\\
\hline 
$\left(x^1_{2|1},x^2_{2|1}\right)$ & - & - & - & - & - & - & - & - & 1\\
\hline 
$\left(x^1_{2|1},x^2_{1|2}\right)$ & - & - & - & - & - & - & 1 & - & -\\
\hline 
$\left(x^1_{2|1},x^2_{2|2}\right)$ & - & - & - & - & - & - & - & 1 & 1\\
\hline 
$\left(x^1_{1|2},x^2_{1|1}\right)$ & $1$ & 1 & - & - & - & - & - & - & -\\
\hline 
$\left(x^1_{1|2},x^2_{2|1}\right)$ & - & - & 1 & - & - & - & - & - & -\\
\hline 
$\left(x^1_{1|2},x^2_{1|2}\right)$ & 1 & - & - & - & - & - & - & - & -\\
\hline 
$\left(x^1_{1|2},x^2_{2|2}\right)$ & - & 1 & 1 & - & - & - & - & - & -\\
\hline 
$\left(x^1_{2|2},x^2_{1|1}\right)$ & - & - & - & 1 & 1 & - & 1 & 1 & -\\
\hline 
$\left(x^1_{2|2},x^2_{2|1}\right)$ & - & - & - & - & - & 1 & - & - & 1\\
\hline 
$\left(x^1_{2|2},x^2_{1|2}\right)$ & - & - & - & 1 & - & - & 1 & - & -\\
\hline 
$\left(x^1_{2|2},x^2_{2|2}\right)$ & - & - & - & - & 1 & 1 & - & 1 & 1
\end{tabular}
}
\par\end{centering}
\caption{The matrix $A_T$ for 2 time periods with 2 budgets per period. $``-''$ corresponds to zero.}\label{tab: A_T for simplesetup}
\end{table}

\subsubsection*{$\mathrm{D}$-monotonicity}
We introduce a new behavioral restriction on $\rho$ that, together with stability, characterizes DRUM in the simple setup. We first introduce a static notion of \emph{dominance} among patches. 

\begin{definition}[Patch-Revealed Dominance] We say that patch $x^t_{i|j}$ is revealed dominant to $x^t_{i'|j'}$ if $x^t_{i_t|j_t}>^{t} x^t_{i'_t|j'_t}$.
\end{definition}
Patch-revealed dominance requires that all elements in the dominant patch are directly revealed preferred (in Afriat's sense) to the dominated patch, and that all the elements of the dominated patch are not directly revealed preferred to the elements of the dominant patch. We can visualize this ordering in Figure~\ref{fig:simple-setup}, where $x^1_{1|1}>^{t} x^1_{1|2}$ and $x^2_{2|2}>^{t} x^2_{2|1}$. Let $x_{\rand{i}|\rand{j}}\downarrow_{t} x^t_{i'_t|j'_t}$ denote a choice path where the $t$-th component of $x_{\rand{i}|\rand{j}}$ was replaced by $x^t_{i'_t|j'_t}$. We show that if $\rho$ is consistent with DRUM and if $x^t_{i'_t|j'_t}>^{t} x^t_{i_t|j_t}$, then
\[
\rho\left(x_{\rand{i}|\rand{j}}\downarrow_{t} x^t_{i'_t|j'_t}\right)\geq \rho\left(x_{\rand{i}|\rand{j}}\right).
\]
We illustrate the necessity for the simple case where  $T=1$; in that case $A_T\nu=\rho$ can be rewritten as
\begin{align*}
    \nu_1+\nu_2&=\rho(x^1_{1|1}),\quad\quad \nu_3=\rho(x^1_{2|1}),\\
    \nu_1&=\rho(x^1_{1|2}),\quad\quad
    \nu_2+\nu_3=\rho(x^1_{2|2}).
\end{align*}
Then the following two inequalities, consistent with monotonicity, have to be satisfied:
\begin{align*}
    0\leq\nu_2=\rho(x^1_{1|1})-\rho(x^1_{1|2}),\\
    0\leq\nu_2=\rho(x^1_{2|2})-\rho(x^1_{2|1}).
\end{align*}
In fact, we can write down the $\mathcal{H}$-representation of static RUM for the simple setup using matrix $H^{2,t}$ as defined in Table~\ref{table:B2x2}, capturing this monotonicity condition such that $H^{2,t}\rho \geq 0$.
\begin{table}[ht]
\centering
\begin{tabular}{cccc}
\hline
$x^t_{1|1}$ & $x^t_{2|1}$ &  $x^t_{1|2}$ & $x^t_{2|2}$ \\
\hline
1 & - & -1 & -  \\
1 & - & - & - \\
- & 1 & - & - \\
- & - & 1 & - \\
\hline
\end{tabular}
\caption{The $\mathcal{H}$-representation of RUM for $2$ goods and $2$ budgets.}
\label{table:B2x2}
\end{table} 
\par 
For $T\geq2$, DRUM implies that $\rho$ satisfies  dynamic monotonicity. For illustrative purposes, set $T=2$. Then we get the new condition by exploiting the recursive structure of the matrix $A_T$, as shown here:
\[
\rho=A_T\nu=A^1\otimes A^{2}\nu=\left( \begin{array}{ccc}
    A^{1} & A^{1} & 0 \\
    0 & 0 & A^{1}  \\
    A^{1} & 0 & 0 \\  
    0 & A^{1} & A^{1}  
\end{array}
\right) \left( \begin{array}{c}
    \nu^1_{1}\\
    \nu^1_{2}\\
    \nu^1_{3}
\end{array}
\right)=\left( \begin{array}{c}
    A^{1}(\nu^1_{1}+\nu^1_{2})\\
    A^{1}\nu^1_{3}\\
    A^{1}\nu^1_{1}\\
    A^{1}(\nu^1_{2}+\nu^1_{3})
\end{array}
\right).
\]
We can next derive the following system of equations:
\begin{align*}
A^1\nu_1= \left[\rho^1_{1|1}-\rho^1_{1|2}\right],\\
A^1\nu_2= \left[\rho^1_{2|2}-\rho^1_{2|1}\right],
\end{align*}
where $\rho^1_{i|j}$ is a vector of all probabilities that correspond to all choice paths that contain patch $x^1_{i|j}$. For example,
\[
\rho^1_{1|1}=\left(
\begin{array}{c}
     \rho\left(\left(x^1_{1|1}, x^2_{1|1}\right)\right)\\
     \rho\left(\left(x^1_{1|1}, x^2_{2|1}\right)\right)\\
     \rho\left(\left(x^1_{1|1}, x^2_{1|2}\right)\right)\\
     \rho\left(\left(x^1_{1|1}, x^2_{2|2}\right)\right).
\end{array}
\right)
\]
Repeating the argument for  $T=1$, from $A^1\nu_1= \left[\rho^1_{1|1}-\rho^1_{1|2}\right]$, we derive the following:
\begin{align}\label{ineq: 2order mon}
    0\leq\left[\rho\left(\left(x^1_{1|1}, x^2_{1|1}\right)\right)-\rho\left(\left(x^1_{1|2}, x^2_{1|1}\right)\right)\right]-\left[\rho\left(\left(x^1_{1|1}, x^2_{1|2}\right)\right)-\rho\left(\left(x^1_{1|2}, x^2_{1|2}\right)\right)\right].
\end{align}
The following inequalities have to hold under DRUM:
\begin{align*}
\rho\left(\left(x^1_{1|1}, x^2_{1|1}\right)\right)-\rho\left(\left(x^1_{1|2}, x^2_{1|1}\right)\right)\geq 0,\\
\rho\left(\left(x^1_{1|1}, x^2_{1|2}\right)\right)-\rho\left(\left(x^1_{1|2}, x^2_{1|2}\right)\right)\geq 0.
\end{align*}
Thus, Inequality~\eqref{ineq: 2order mon} shown just above imposes a restriction on how the distribution over patches can grow. In particular, it implies that the increase in probability caused by switching from patch $x^1_{1|2}$ to the dominant patch $x^1_{1|1}$ is bigger if the patch in the second time period, $x^2_{1|1}$, dominates $x^2_{1|2}$. In other words, there is some form of complementarity between dominant patches in different time periods.
In order to generalize the above arguments for an arbitrary but finite time window, we would need to work with higher-order differences  (that are in fact the Kronecker products of the static monotonicity conditions). To do this, we next introduce the difference operator.
\begin{definition}[Difference operator]
For any $t$, $x^{t}_{i'_{t}|i'_{t}}$, and $x_{\rand{i}|\rand{j}}$, let $\mathrm{D}\left(x^t_{i'_t|j'_t}\right)[\cdot]$ be a linear operator such that
\[
\mathrm{D}\left(x^t_{i'_t|j'_t}\right)\left[f(x_{\rand{i}|\rand{j}})\right]=f\left( x_{\rand{i}|\rand{j}}\downarrow_{t}x^t_{i'_t|j'_t}\right)-f\left(x_{\rand{i}|\rand{j}}\right)
\]
for any $f$ that maps choice paths to reals.
\end{definition}
The $\mathrm{D}$-operator applied to $\rho$ calculates the difference in $\rho$ when only one patch in a choice path was replaced. When the operator is applied twice to two different time periods, it computes the difference in differences. That is, for $t_1\neq t_2$, we have the following equalities:
\begin{align*}
&\mathrm{D}\left(x^{t_2}_{i'_{t_2}|j'_{t_2}}\right)\mathrm{D}\left(x^{t_1}_{i'_{t_1}|j'_{t_1}}\right)\left[f\left(x_{\rand{i}|\rand{j}}\right)\right]=\mathrm{D}\left(x^{t_2}_{i'_{t_2}|j'_{t_2}}\right)\left[f\left(x_{\rand{i}|\rand{j}}\downarrow_{t_1}x^{t_1}_{i'_{t_1}|j'_{t_1}}\right)-f\left(x_{\rand{i}|\rand{j}}\right)\right]=\\
&\mathrm{D}\left(x^{t_2}_{i'_{t_2}|j'_{t_2}}\right)\left[f\left(x_{\rand{i}|\rand{j}}\downarrow_{t_1}x^{t_1}_{i'_{t_1}|j'_{t_1}}\right)\right]-\mathrm{D}\left(x^{t_2}_{i'_{t_2}|j'_{t_2}}\right)\left[f\left(x_{\rand{i}|\rand{j}}\right)\right]=\\
&\left[f\left(x_{\rand{i}|\rand{j}}\downarrow_{t_1}x^{t_1}_{i'_{t_1}|j'_{t_1}}\downarrow_{t_2}x^{t_2}_{i'_{t_2}|j'_{t_2}}\right)-f\left(x_{\rand{i}|\rand{j}}\downarrow_{t_1}x^{t_1}_{i'_{t_1}|j'_{t_1}}\right)\right]-\left[f\left(x_{\rand{i}|\rand{j}}\downarrow_{t_2}x^{t_2}_{i'_{t_2}|j'_{t_2}}\right)-f\left(x_{\rand{i}|\rand{j}}\right)\right],
\end{align*}
where the second equality uses the linearity of $\mathrm{D}\left(x^{t_2}_{i'_{t_2}|j'_{t_2}}\right)$.

Similarly, we can apply $\mathrm{D}$ any $K$ number of times. Let 
\[
\rands{\mathcal{T}}=\left\{\left(t_k\right)_{k=1}^K\::\:t_k\in\mathcal{T},\:t_k< t_{k+1}, K\leq T\right\}
\]
be a collection of all possible increasing sequences of length at most $T$. For any $\rand{t}\in\rands{\mathcal{T}}$ and any $x^{\rand{t}}_{\rand{i}'|\rand{j}'}=\left(x^t_{i'_t|j'_t}\right)_{t\in \rand{t}}$, we define
\[
\mathrm{D}\left(x^{\rand{t}}_{\rand{i}'|\rand{j}'}\right)\left[f\left(x_{\rand{i}|\rand{j}}\right)\right]=\mathrm{D}\left(x^{t_K}_{i'_{t_K}|j'_{t_K}}\right)\dots\mathrm{D}\left(x^{t_2}_{i'_{t_2}|j'_{t_2}}\right)\mathrm{D}\left(x^{t_1}_{i'_{t_1}|j'_{t_1}}\right)\left[f\left(x_{\rand{i}|\rand{j}}\right)\right],
\]
where $K$ is the length of $\rand{t}$.
\begin{definition}[$\mathrm{D}$-monotonicity]
We say that $\rho$ is $\mathrm{D}$-monotone if for any $\rand{t}\in \rands{\mathcal{T}}$, $x^{\rand{t}}_{\rand{i}'|\rand{j}'}$, and any $x_{\rand{i}|\rand{j}}$ such that $x^{t}_{i'_{t}|j'_{t}}>^{t} x^{t}_{i_{t}|j_{t}}$ for all $t\in\rand{t}$, it is the case that
\[
\mathrm{D}\left(x^{\rand{t}}_{\rand{i}'|\rand{j}'}\right)\left[\rho\left(x_{\rand{i}|\rand{j}}\right)\right]\geq0.
\]
\end{definition}

$\mathrm{D}$-monotonicity is the generalization to our dynamic setup of the Weak Axiom of Stochastic Revealed Preference (WASRP) introduced by \cite{bandyopadhyay1999stochastic}. Since the set of utility functions we are considering is monotone, our condition coincides with the \textit{stochastic substitutability} condition in \citet{bandyopadhyay2004general} when $\mathcal{T} = \{t\}$. Whenever the domain of choices is incomplete, \citet{dasgupta2007regular} shows that WASRP is sufficient but not necessary for regularity. This observation translates to the dynamic case as well.\footnote{In that regard,  $\mathrm{D}$-monotonicity is not implied by any of the conditions derived in \citet{li2021axiomatization} or \citet{chambers2021correlated} that require complete menu variation and use generalizations of the static regularity conditions for the dynamic or correlated case.} We emphasize that $\mathrm{D}$-monotonicity imposes stronger restrictions on the data than WASRP does. We also note that the strength of these restrictions increases as the time window expands (see our Monte Carlo experiments in Appendix~\ref{appendix: montecarlo}). As such, $\mathrm{D}$-monotonicity can be used to derive informative counterfactual bounds on the longitudinal distribution of out-of-sample demand with sufficiently long panels. Given these observations, we are ready to present the main result of this section.
\begin{theorem}\label{thm:DRUM2x2} 
For the simple setup, the following are equivalent:
\begin{enumerate}
    \item $\rho$ is consistent with DRUM. 
    \item $\rho$ is $\mathrm{D}$-monotone and stable. 
    \item $\rho$ is stable and $\otimes_{t\in\mathcal{T}}H^{2,t}\rho\geq0$.
\end{enumerate}
\end{theorem}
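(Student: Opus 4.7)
My plan is to deduce Theorem~\ref{thm:DRUM2x2} as a direct specialization of Theorem~\ref{thm: WM stable rho}, supplemented by an explicit translation between the rows of the Kronecker product $\otimes_{t\in\mathcal{T}}H^{2,t}$ and the $\mathrm{D}$-monotonicity inequalities. The equivalence (1) $\Leftrightarrow$ (3) comes from the uniqueness machinery, while (2) $\Leftrightarrow$ (3) is obtained by inspecting what each row of the tensor-product matrix does when applied to $\rho$.

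For (1) $\Leftrightarrow$ (3), I would verify the hypotheses of Theorem~\ref{thm: WM stable rho} in the simple setup. The matrix $A^t$ of Table~\ref{tab: A_t} is $4\times 3$, its three columns indexed by $\theta^t_{1,1},\theta^t_{1,2},\theta^t_{2,2}$ are visibly linearly independent, so $A^t$ has full column rank. Hence DRUM (associated with $\otimes_{t\in\mathcal{T}}A^t$) is unique. Next I would check that $A^{t*}$, obtained by removing the row of $x^t_{2|2}$ (the last alternative in menu $B^t_2$), is the $3\times 3$ matrix whose determinant is directly computable and nonzero, so $A^{t*}$ has full row rank. The ``moreover'' clause of Theorem~\ref{thm: WM stable rho} then gives that $\rho$ is consistent with DRUM if and only if $\rho$ is stable and $(\otimes_{t\in\mathcal{T}}H^t)\rho\geq 0$ for any $\mathcal{H}$-representation $H^t$ of the one-period cone. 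I would finally argue that, once the simplex and stability constraints are imposed, the inequalities contained in the reduced matrix $H^{2,t}$ of Table~\ref{table:B2x2} span exactly the nonredundant one-period inequalities (the missing fourth nonnegativity $\rho(x^t_{2|2})\geq 0$ and the second monotonicity $\rho(x^t_{2|2})\geq\rho(x^t_{2|1})$ both follow from stability). This matches (3) with the characterization given by Theorem~\ref{thm: WM stable rho}.

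For (2) $\Leftrightarrow$ (3), I would spell out each row of $\otimes_{t\in\mathcal{T}}H^{2,t}$. Every row of $H^{2,t}$ is either the monotonicity functional $e_{x^t_{1|1}}-e_{x^t_{1|2}}$ or a nonnegativity functional $e_{x^t_{i|j}}$. By multilinearity of the Kronecker product, each row of $\otimes_{t\in\mathcal{T}}H^{2,t}$ corresponds to a choice of one such functional per period. Let $\rand{t}\subseteq\mathcal{T}$ index the periods on which monotonicity is selected, and let $x^t_{i_t|j_t}$ be the patch selected on the remaining periods. Applying this tensor row to $\rho$ produces exactly the higher-order difference
\[
\mathrm{D}\!\left(x^{\rand{t}}_{\rand{1}|\rand{1}}\right)\!\bigl[\rho(x_{\rand{i}|\rand{j}})\bigr],
\]
where on $\rand{t}$ the replacement is the dominating patch $x^t_{1|1}$ of the dominated $x^t_{1|2}$, and on $\mathcal{T}\setminus\rand{t}$ the coordinate $x^t_{i_t|j_t}$ is the patch selected by the nonnegativity row. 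This realizes $(\otimes_{t\in\mathcal{T}}H^{2,t})\rho\geq 0$ as a distinguished subfamily of $\mathrm{D}$-monotonicity conditions. For the converse, I would invoke stability: in each period there are only two dominance pairs, $x^t_{1|1}>^t x^t_{1|2}$ and $x^t_{2|2}>^t x^t_{2|1}$, and stability forces $\rho(x^t_{1|1})-\rho(x^t_{1|2})=\rho(x^t_{2|2})-\rho(x^t_{2|1})$ even after conditioning on fixed choices in the other periods. Hence every $\mathrm{D}$-monotonicity inequality collapses to one of the canonical Kronecker rows, so (2) and (3) are equivalent.

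The main obstacle I expect is the combinatorial bookkeeping in the second step: one must verify that the stability equalities really do identify each of the $2^{|\rand{t}|}$ possible ``direction'' choices of a $\mathrm{D}$-monotonicity condition with the single canonical direction appearing in $H^{2,t}$, while simultaneously accounting for all the nonnegativity coordinates that are tensored in. This is a careful but routine induction on $|\rand{t}|$, using that the difference operators in distinct periods commute and that applying stability in period $t$ interacts linearly with the differences taken in other periods.
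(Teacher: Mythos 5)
Your proposal is correct, but it routes the equivalences differently from the paper. The paper proves (i)$\Leftrightarrow$(ii) directly and self-containedly: necessity of $\mathrm{D}$-monotonicity by block-partitioning the system $A_T\nu=\rho$ using $A_T=A^1\otimes A_{T-1}$ and reading off nonnegative higher-order differences of the form $A_{T-k}\nu_2\geq0$; sufficiency by explicitly constructing the candidate mixture $\nu=H_T\rho$ from Kronecker products of the left inverse $(A_1\tr A_1)^{-1}A_1\tr$ and verifying, using stability and $\mathrm{D}$-monotonicity, that $\nu\geq 0$ and $A_T\nu=P_{A_T}\rho=\rho$. The equivalence with (iii) is then obtained exactly as you do, as a corollary of Theorem~\ref{thm: WM stable rho} (full column rank of $A^t$ gives uniqueness, full row rank of $A^{t*}$ gives properness on the stable subspace). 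You instead take (i)$\Leftrightarrow$(iii) from Theorem~\ref{thm: WM stable rho} and get (ii)$\Leftrightarrow$(iii) by identifying each row $r^1\otimes\cdots\otimes r^T$ of $\otimes_{t\in\mathcal{T}}H^{2,t}$ with a canonical-direction $\mathrm{D}$-monotonicity inequality, then using stability to collapse the opposite dominance pair $x^t_{2|2}>^t x^t_{2|1}$ and the omitted coordinate $x^t_{2|2}$ onto that subfamily. Both routes are sound. The paper's constructive argument buys an explicit rationalizing $\nu$ (hence Corollary~\ref{cor:unique} for free) and does not lean on the entangleability machinery for the sufficiency direction; your route is shorter but outsources the hard cone-duality step entirely to Theorem~\ref{thm: WM stable rho} and leaves the combinatorial collapse, which you rightly flag as the delicate point, as a sketched induction. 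If you write it up, the one step to make explicit is the recovery of inequalities whose free coordinates sit on the omitted patch: for any functional $L$ in the remaining periods, stability gives $L[\rho(\cdot,x^t_{2|2})]=(L\otimes m^t)\rho+(L\otimes e_{x^t_{2|1}})\rho$ with $m^t=e_{x^t_{1|1}}-e_{x^t_{1|2}}$, so these follow from the Kronecker rows by induction on the number of such coordinates.
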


The fact that $(i)$ implies $(ii)$ is easy to verify. The converse is proved constructively.  The equivalence of $(iii)$ and $(i)$ is a corollary of Theorem~\ref{thm: WM stable rho}. Importantly, in that case, we can set $k_t=1$ for all $t\in \mathcal{T}$. This is because $A^t$ in the simple setup is full column rank and we can use Theorem~\ref{thm:sufficiencyminimaltensorequalmaximaltensor}. 
\begin{cor}\label{cor:unique}
    For the simple setup if  $\rho=A_T\nu=A_T\nu'$ for some $\nu,\nu'\in\Delta^{|\mathcal{R}-1|}$, then $\nu=\nu'$.
\end{cor}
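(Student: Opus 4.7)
The plan is to reduce the uniqueness claim to a linear-algebraic statement about the rank of $A_T$ and then invoke the Kronecker-product structure given by Lemma~\ref{lemma: Kronecker A}. Specifically, if $A_T$ has full column rank, then $A_T(\nu-\nu')=0$ forces $\nu=\nu'$, so the entire task boils down to verifying that $A_T$ has full column rank in the simple setup.

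First, I would inspect the matrix $A^t$ displayed in Table~\ref{tab: A_t} and check that its three columns are linearly independent. This is essentially immediate: the column corresponding to $\theta^t_{2,2}$ has a $1$ in the row for patch $x^t_{2|1}$ while the other two columns have zero there; the column for $\theta^t_{1,1}$ has a $1$ in the row for $x^t_{1|2}$ while the column for $\theta^t_{1,2}$ has zero there; hence linear independence follows by reading off coefficients row by row. Thus $\operatorname{rank}(A^t)=3=$ number of columns for every $t\in\mathcal{T}$.

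Second, I would apply Lemma~\ref{lemma: Kronecker A}, which gives $A_T = \otimes_{t\in\mathcal{T}} A^t$ up to a permutation of its rows. Row permutations preserve rank, and the Kronecker product satisfies $\operatorname{rank}(A\otimes B)=\operatorname{rank}(A)\cdot\operatorname{rank}(B)$. Iterating across $t\in\mathcal{T}$ gives
\[
\operatorname{rank}(A_T)=\prod_{t\in\mathcal{T}}\operatorname{rank}(A^t)=3^{T}=|\mathcal{R}|,
\]
which is exactly the number of columns of $A_T$. Hence $A_T$ has full column rank.

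Third, given $A_T\nu=A_T\nu'$ with $\nu,\nu'\in\Delta^{|\mathcal{R}|-1}$, I would write $A_T(\nu-\nu')=0$ and conclude from full column rank that $\nu-\nu'=0$, i.e., $\nu=\nu'$. There is no serious obstacle here; the only subtlety is to be explicit that the row permutation relating $A_T$ to $\otimes_{t\in\mathcal{T}}A^t$ does not affect the kernel, so that full column rank passes to $A_T$ itself. Everything else is a direct consequence of the structural result in Lemma~\ref{lemma: Kronecker A} together with standard facts about Kronecker products.
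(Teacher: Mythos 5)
Your proposal is correct and follows essentially the same route as the paper, which observes in the text that $A^t$ in Table~\ref{tab: A_t} has full column rank and relies on the Kronecker structure $A_T=\otimes_{t\in\mathcal{T}}A^t$ from Lemma~\ref{lemma: Kronecker A}; the explicit left inverse $H_1=(A_1\tr A_1)^{-1}A_1\tr$ used in the sufficiency proof of Theorem~\ref{thm:DRUM2x2} is just this same full-column-rank fact in operational form. Your rank computation $\rank(A_T)=3^T=|\mathcal{R}|$ and the remark that the row permutation preserves the kernel are exactly the details the paper leaves implicit.
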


Stability and $\mathrm{D}$-monotonicity are logically independent as we demonstrate in the next counterexample to DRUM.\footnote{Another example of a $\rho$ that fails both conditions of the simple setup is discussed in Section~\ref{secc:AfriatMcFadden}.}   

\begin{example}[Violation of $\mathrm{D}$-monotonicity] 
Consider the stochastic demand presented in Table~\ref{table:mon+intensity}. It satisfies stability. However, it fails to satisfy $\mathrm{D}$-monotonicity because $\rho\left(\left(x^1_{1|2},x^2_{1|1}\right)\right)-\rho\left(\left(x^1_{1|2},x^2_{1|2}\right)\right)=-\frac{1}{4}$ and $x^2_{1|1}>^{t} x^2_{1|2}$.  
\end{example}

\begin{table}
\begin{centering}
\begin{tabular}{c!{\vrule width 2pt}c|c|c|c}
& $x^2_{1|1}$ & $x^2_{2|1}$ & $x^2_{1|2}$ & $x^2_{2|2}$\\
\noalign{\hrule height 2pt}
$x^1_{1|1}$ & $3/4$ & - & $3/4$ & -\\
\hline 
$x^1_{2|1}$ & - & $1/4$ & $1/4$ & -\\
\hline 
$x^1_{1|2}$ & - & $1/4$ & $1/4$ & \\
\hline 
$x^1_{2|2}$ & $3/4$ & - & $3/4$ & -
\end{tabular}
\par\end{centering}
\caption{Matrix representation of $\rho$ for $T=2$ that violates  $\mathrm{D}$-monotonicity, but satisfies simple stability. ``-'' corresponds to zero}\label{table:mon+intensity}
\end{table}

\subsubsection*{Generalization of $\mathrm{D}$-monotonicity for the demand setup with three goods and three budgets per period}
Our results can be used to construct a set of necessary and sufficient conditions for DRUM in the demand setup on the basis of the $\mathcal{H}$-representation of (static, $\mathcal{T}=\{t\}$) DRDM for the case of three goods and three budgets.\footnote{In each period there are $3$ budgets with maximal intersections as in Example~$3.2$ in KS.} The $\mathcal{V}$-representation in this case is given by matrix $A^t$ in Table~\ref{tab:At3x3}. The $\mathcal{H}$-representation, $H^{3,t}$, is displayed in Table~\ref{table:B3x3} (without the nonnegativity constraints). We note that $H^{3,t}=H^{3,s}$ for any $s,t\in\mathcal{T}$. We can then establish the following direct implication of Theorem~\ref{thm: WM stable rho} since $A^t$ in this case is such that $A^{t*}$ is full row rank.  

\begin{table}[]
    \centering
\scalebox{.9}{
\begin{tabular}{ccccccccccccccccccccccccc|c}
\hline
     1&1&1&1&1&1&1&1&1&1&1&1&-&-&-&-&-&-&-&-&-&-&-&-&-&$x^t_{1|1}$\\
     -&-&-&-&-&-&-&-&-&-&-&-&1&1&1&1&1&-&-&-&-&-&-&-&-&$x^t_{2|1}$\\
     -&-&-&-&-&-&-&-&-&-&-&-&-&-&-&-&-&1&1&1&1&1&-&-&-&$x^t_{3|1}$\\
     -&-&-&-&-&-&-&-&-&-&-&-&-&-&-&-&-&-&-&-&-&-&1&1&1&$x^t_{4|1}$\\
     1&1&1&1&-&-&-&-&-&-&-&-&1&1&1&1&-&1&1&-&-&-&1&1&-&$x^t_{1|2}$\\
     -&-&-&-&1&1&1&1&-&-&-&-&-&-&-&-&-&-&-&1&-&-&-&-&-&$x^t_{2|2}$\\
     -&-&-&-&-&-&-&-&1&1&-&-&-&-&-&-&1&-&-&-&1&-&-&-&1&$x^t_{3|2}$\\
     -&-&-&-&-&-&-&-&-&-&1&1&-&-&-&-&-&-&-&-&-&1&-&-&-&$x^t_{4|2}$\\
     1&-&-&-&1&-&-&-&1&-&1&-&1&-&-&-&1&1&-&1&1&1&1&-&1&$x^t_{1|3}$\\
     -&1&-&-&-&1&-&-&-&-&-&-&-&1&-&-&-&-&1&-&-&-&-&1&-&$x^t_{2|3}$\\
     -&-&1&-&-&-&1&-&-&1&-&1&-&-&1&-&-&-&-&-&-&-&-&-&-&$x^t_{3|3}$\\
     -&-&-&1&-&-&-&1&-&-&-&-&-&-&-&1&-&-&-&-&-&-&-&-&-&$x^t_{4|3}$\\
\hline
\end{tabular}
}
    \caption{$A^t$ for $3$ goods and $3$ budgets. ``-'' corresponds to zero.}
    \label{tab:At3x3}
\end{table}

\begin{table}[ht]
\centering
\begin{tabular}{cccccccccccc}
\hline
$x^t_{1|1}$ & $x^t_{2|1}$ & $x^t_{3|1}$ & $x^t_{4|1}$ & $x^t_{1|2}$ & $x^t_{2|2}$ & $x^t_{3|2}$ & $x^t_{4|2}$ & $x^t_{1|3}$ & $x^t_{2|3}$ & $x^t_{3|3}$ & $x^t_{4|3}$\\
\hline
- & - & - & -1 & - & - & - & -1 & 1 & 1 & 1& - \\
- & - & - & -1 & 1 & - & - & - & 1 & - & - & - \\
1 & - & - & - & 1 & - & - & - & - & - & - & -1 \\
1 & - & - & - & - & - & - & -1 & 1 & - & - & - \\
- & -1 & - & -1 & 1 & - & 1 & - & - & - & - & - \\
- & - & - & - & 1 & 1 & - & - & - & -1 & - & -1\\
- & - & -1 & -1 & - & - & - & - & 1 & 1 & - & -\\
\hline
\end{tabular}
\caption{The $\mathcal{H}$-representation of RUM for $3$ goods and $3$ budgets excluding nonnegativity.}
\label{table:B3x3}
\end{table}

\begin{cor} For the demand setup  ($K=3$,  $J^t=3$ for all $t\in\mathcal{T}$), the following are equivalent: 
\begin{enumerate}
    \item $\rho$ is consistent with DRUM. 
    \item $\rho$ is stable and 
    \[
    \rho\in \bigcap_{k_1,\cdots,k_T\geq1}\left\{\Gamma^{\rands{\phi}^*\prime}_{\rand{k}}z\::\:\left(\otimes_{t\in\mathcal{T}}H^{3,t,\otimes_{k_t}}\right)z\geq0\right\}.
    \]   
\end{enumerate}
\end{cor}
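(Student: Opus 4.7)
The strategy is to invoke Theorem~\ref{thm: WM stable rho} directly, so the real work is to verify its hypothesis in this particular environment, namely that the reduced matrix $A^{t*}$ has full row rank in each period. Given the explicit $A^t$ in Table~\ref{tab:At3x3} and the convention that $A^{t*}$ is obtained by deleting the row corresponding to the last patch of each menu other than the first, the reduced matrix has $4+3+3=10$ rows and $25$ columns. I would verify full row rank by exhibiting a nonsingular $10\times 10$ submatrix: reading off Table~\ref{tab:At3x3}, I would pick, for each of the ten retained patches, a rational type (column) whose support on the retained rows contains that patch, and would argue that by elementary row operations these ten columns can be brought to a triangular pattern. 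Equivalently, I can simply appeal to the authors' own remark immediately preceding the corollary, that this verification has been performed computationally for the demand setup with up to six budgets per period under the maximal intersection pattern and $K\in\{2,3,4,5\}$; the $K=3$, $J^t=3$ case is a special instance.

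Once the rank condition is in hand, the remaining ingredient is that the matrix $H^{3,t}$ of Table~\ref{table:B3x3}, augmented by the nonnegativity constraints, is indeed the $\mathcal{H}$-representation of the cone $\{A^t v : v\geq 0\}$ generated by the columns of $A^t$. This is precisely the $\mathcal{H}$-representation of static RUM in the $3$-good, $3$-budget demand setup derived in \citet{kitamura2018nonparametric}, Example~3.2, so I would simply cite that result. With the identity $\{K^t v : v\geq 0\}=\{z : L^t z\geq 0\}$ instantiated by $K^t=A^t$ and $L^t=H^{3,t}$, and with $A^{t*}$ full row rank in each period, Theorem~\ref{thm: WM stable rho} delivers the equivalence term by term: $\rho$ is DRUM-consistent iff $\rho$ is stable and lies in the stated intersection. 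The minor cosmetic discrepancy that the corollary writes $k_1\geq 1$ while the theorem fixes $k_1=1$ is harmless, since $\Gamma^{\rands{\phi}^*\prime}_{\rand{k}}$ places $I^1$ in the first tensor slot; adding the constraints for $k_1>1$ only intersects the already-exact characterization with further outer approximations that themselves contain the DRUM cone.

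The main obstacle is the rank verification. There is no obvious structural shortcut because the combinatorics of patches under the maximal intersection pattern in three goods is specific and does not admit a clean closed-form description; the cleanest route is either a mechanical computation on the explicit $A^t$ of Table~\ref{tab:At3x3} or the citation of the pre-existing verification. Everything else is a one-line instantiation of Theorem~\ref{thm: WM stable rho}.
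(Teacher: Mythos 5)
Your proposal is correct and follows the same route as the paper, which treats this as a direct instantiation of Theorem~\ref{thm: WM stable rho} once the full row rank of $A^{t*}$ is verified (the paper relies on exactly the computational verification you cite, for the maximal-intersection demand setup with $K\in\{2,3,4,5\}$ and up to six budgets). Your added remarks — the $10\times 25$ dimensions of the reduced matrix, the identification of $H^{3,t}$ with the $\mathcal{H}$-representation from KS Example~3.2, and the observation that the $k_1\geq 1$ versus $k_1=1$ discrepancy is harmless because the extra sets are outer approximations containing the DRUM cone — are all consistent with the paper's argument.
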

For $k=1$ we obtain necessary conditions in the form of $(\otimes_{t\in \mathcal{T}}H^{3,t})\rho\geq 0$. The last three rows of the matrix displayed in Table~\ref{table:B3x3} and their Kronecker product in time are consistent with $\mathrm{D}$-monotonicity.\footnote{Note that if $\mathrm{P}=(\mathrm{P}_{\rand{j}})_{\rand{j}\in\rand{J}}$ is consistent with DRDM, then the stochastic demand system consisting of 2 different budget paths $(\mathrm{P}_{\rand{j}},\mathrm{P}_{\rand{j}'})$ would also be consistent with DRDM. Moreover, note that $(\mathrm{P}_{\rand{j}},\mathrm{P}_{\rand{j}'})$ form the simple setup since at every time period there are exactly two budgets. Thus, $\mathds{D}$-monotonicity is a simple necessary condition for DRUM.} The rest of the conditions produce new testable implications. For instance, for $T=2$, the following conditions are implied by monotonicity in row~$4$:
\begin{align*}
    D^{*1}(x_{i|j}^2)=\left[\rho\left(\left(x^1_{1|2}, x_{i|j}^2\right)\right)+ \rho\left(\left(x^1_{3|2}, x_{i|j}^2\right)\right)
    -\rho\left(\left(x^1_{2|1}, x_{i|j}^2\right)\right) -\rho\left(\left(x^1_{4|1}, x_{i|j}^2\right)\right) \right]\geq 0.
\end{align*}
The interaction of row~$3$, a triangle condition, and row~$4$, a monotonicity condition, gives the implication:
\begin{align*}
    D^{*1}(x_{1|1}^2)+D^{*1}(x_{1|2}^2)-D^{*1}(x_{4|3}^2)\geq0.
\end{align*}

\par
Obtaining sufficient conditions in terms of an $H^{3,t}$ matrix would require more work because $A^t$ is no longer full column rank in this case. Nevertheless, we can obtain sufficient conditions explicitly from the knowledge of $H^{3,t}$ alone. We set $\phi^{*,t}$ as the average of all conditions implied by $H^{3,t}$. The rest of the computations for any $k\geq2$ can be done in the same way as in the running example for binary menus.

\subsection{Characterization of DRUM via Recursive Block Marschak Inequalities}\label{sec: BM charachterization}
Here, we provide a characterization, via linear inequalities, based on the BM inequalities for the case of limited menu variability and with monotonicity of utilities.  

Recall that $\mathcal{J}^t$ denotes the set of observed menus at time $t$.  Let $\bar{\mathcal{J}}^t=\{1,2,\dots,2^{\abs{X^t}}-1\}$ be the ``virtual'' set of menus such that there is a one-to-one mapping between $j_t\in \bar{\mathcal{J}}^t$ and a nonempty subset of $X^{t}$ (i.e., the virtual data set has full menu variation). We also assume that this mapping is such that the first $\mathcal{J}^t$ indexes correspond to observed menus $B^t_j$. That is $\mathcal{J}^t$ is the set of all observed menus and $\bar{\mathcal{J}}^t\setminus \mathcal{J}^t$ is the set of all ``virtual menus'' that were not observed in the data. Using this extended definition of menus, we can define a set of all menu paths $\bar{\rand{J}}$ (including the observed ones). Note that the set of observed menu paths $\rand{J}$ is a subset of $\bar{\rand{J}}$. Finally, as before, given a menu path $\rand{j}\in\bar{\rand{J}}$, let $\bar{\rho}$ be a distribution over choice paths $\rand{i}$ in $\rand{j}$. That is, $\bar{\rho}\left(x_{\rand{i}|\rand{j}}\right)\geq0$ and $\sum_{\rand{i}\in\rand{I}_{\rand{j}}}\bar{\rho}\left(x_{\rand{i}|\rand{j}}\right)=1$ for all $\rand{j}\in\bar{\rand{J}}$. Recall that the observed $\rho$ is defined as
\[
\rho=\left(\rho\left(x_{\rand{i}|\rand{j}}\right)\right)_{\rand{j}\in\rand{J},\rand{i}\in\rand{I}_{\rand{j}}}.
\]
Similarly, the extended stochastic choice function is defined by
\[
\bar{\rho}=\left(\bar{\rho}\left(x_{\rand{i}|\rand{j}}\right)\right)_{\rand{j}\in\bar{\rand{J}},\rand{i}\in\rand{I}_{\rand{j}}}.
\]

Next we define some properties of $\bar{\rho}$. 

\begin{definition}
We say that $\bar{\rho}$ \emph{agrees} with $\rho$ if they coincide on observed menu paths. That is, $\bar{\rho}\left(x_{\rand{i}|\rand{j}}\right)=\rho\left(x_{\rand{i}|\rand{j}}\right)$ for all $\rand{j}\in\rand{J}$ and $\rand{i}\in\rand{I}_{\rand{j}}$.
\end{definition}
If  $\bar{\rho}$ agrees with $\rho$, then it is an extension of the latter to virtual menu paths.

\begin{definition}[Increasing Utility (IU) Consistency]
We say that $\bar{\rho}$ is \emph{IU-consistent} if $\bar{\rho}\left(x_{\rand{i}|\rand{j}}\right)=0$ whenever there exists $t\in\mathcal{T}$ and $I'_t\subseteq\mathcal{I}^t_{j_t}$ such that $\cup_{i'_t\in I'_t}\{x_{i'_t|j_t}\}>^{t} \{x_{i_t|j_t}\}$.
\end{definition}
IU-consistency captures the empirical content of the monotonicity of the utility functions with respect to $>^{t}$.

\begin{definition}[BM inequalities] We say that  $\bar{\rho}$ satisfies the BM inequalities if for all $t\in\mathcal{T}$, $\rand{j}\in\bar{\rand{J}}$, and $\rand{i}\in\rand{I}_{\rand{j}}$
\[
\mathds{B}^t(\rand{i},\rand{j})=\sum_{j_t':B^t_{j_t}\subseteq B^t_{j_t'}}(-1)^{\abs{B^t_{j_t'}\setminus B^t_{j_t}}}\bar{\rho}\left(x_{\rand{i}|\rand{j'}}\right)\geq 0.
\]
\end{definition}
Note that the BM inequalities are linear in $\bar{\rho}$. Hence, we can construct matrix $\bar{H}^t$ with elements in $\{-1,0,1\}$ such that each row of $\bar{H}^t$ corresponds to one BM inequality.
\par 
We are now ready to state the two main results of this section: first, a BM characterization of RUM for our setup, and second, an analogous characterization for DRUM.

\begin{theorem}\label{thm: static RUM-BM}
Let $\mathcal{T}=\{t\}$. For a given $\rho$, the following are equivalent:
\begin{enumerate}
    \item $\rho$ is consistent with RUM.
    \item There exists $\bar{\rho}$ that agrees with $\rho$, is IU-consistent, and satisfies the BM inequalities. 
    \item There exists $\bar{\rho}$ that agrees with $\rho$, is IU-consistent, and is such that $\bar{H}^t\bar{\rho}\geq0$.
\end{enumerate}
\end{theorem}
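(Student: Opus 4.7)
The plan is to reduce the substantive equivalence (i) $\Leftrightarrow$ (ii) to the classical Block--Marschak--Falmagne characterization of (static) RUM with full menu variation, and then let IU-consistency do the extra work of imposing monotonicity with respect to $>^{t}$. The equivalence (ii) $\Leftrightarrow$ (iii) is purely notational: by construction, every row of $\bar H^{t}$ encodes exactly one BM inequality, so $\bar H^{t}\bar\rho\geq 0$ is just the stacked form of the BM inequalities.

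For (i) $\Rightarrow$ (ii), start with a RUM measure $\mu$ on $U^{t}$ rationalizing $\rho$ and use the same $\mu$ to define $\bar\rho$ on every virtual menu $j\in\bar{\mathcal{J}}^{t}$ by the analogous integral. Agreement on observed menus is immediate. IU-consistency follows because monotonicity of every $u\in U^{t}$ forces $u(x_{i|j})<\max_{i'\in I'}u(x_{i'|j})$ whenever $\cup_{i'\in I'}\{x_{i'|j}\}>^{t}\{x_{i|j}\}$, so $x_{i|j}$ is never an argmax. The BM inequalities then follow from the classical M\"obius-inversion identity applied under the integral: $\mathds{B}^{t}(\rand{i},\rand{j})$ equals the $\mu$-mass of the event ``$x_{i|j}$ is the argmax of $B^{t}_{j_{t}}$ but not of any strict superset,'' which is manifestly nonnegative.

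For the main direction (ii) $\Rightarrow$ (i), apply the classical Falmagne theorem to $\bar\rho$ (which is now defined on every nonempty subset of $X^{t}$, so the full-menu hypothesis is met) to obtain a probability measure $\mu^{*}$ on strict linear orders on $X^{t}$ whose induced RUM reproduces $\bar\rho$. The remaining task is to show that $\mu^{*}$ is concentrated on linear orders $r$ whose injective utility $u_{r}$ is monotone on $>^{t}$, so that $\mu^{*}$ pushes forward to a measure on $U^{t}$ that rationalizes the restriction $\rho$. Suppose, for contradiction, that $\mu^{*}(r)>0$ for some non-monotone $r$; then there are $S>^{t}S'$ with $\max_{y\in S'}u_{r}(y)>\max_{y\in S}u_{r}(y)$. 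Let $x^{*}\in S'$ attain the maximum and set $B=S\cup S'$: the $r$-choice from $B$ is $x^{*}$, so $\bar\rho(x^{*},B)\geq \mu^{*}(r)>0$. At the same time, the natural property $S>^{t}S'\Rightarrow S>^{t}\{x^{*}\}$ (inherited by all setups in the paper) means IU-consistency applied to menu $B$ forces $\bar\rho(x^{*},B)=0$, a contradiction. Hence $\mu^{*}$ is supported on monotone orders, yielding consistency with RUM.

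The main obstacle is exactly this last step: translating IU-consistency, which constrains choices via subsets lying \emph{inside a single menu}, into full monotonicity with respect to the unrestricted partial order $>^{t}$ on $2^{X^{t}}\setminus\{\emptyset\}$. The gap is closed by the property that $S>^{t}S'$ implies $S>^{t}\{x\}$ for every $x\in S'$, which is automatic in both the demand setup of Section~\ref{subsec: demand setup} (where $>^{t}$ is induced by the vector order on patches) and the finite abstract setup of Section~\ref{subsec: abstract stup} (where $>^{t}$ is empty, so the implication is vacuous). Once this property is in hand, the argument above closes; otherwise one would need to enforce the singleton-reduction axiomatically or to restrict $U^{t}$ accordingly.
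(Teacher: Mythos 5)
Your proposal is correct and follows essentially the same route as the paper: extend $\rho$ to all virtual menus via the rationalizing measure for (i)$\Rightarrow$(ii), invoke the Falmagne/Block--Marschak theorem on the extended $\bar\rho$ for (ii)$\Rightarrow$(i), and use IU-consistency to show the resulting mixture puts zero mass on linear orders that fail to extend $>^{t}$, with (ii)$\Leftrightarrow$(iii) being definitional. Your explicit verification that $S>^{t}S'$ implies $S>^{t}\{x^{*}\}$ for $x^{*}\in S'$ is a welcome (and slightly more careful) treatment of a step the paper handles tersely via a binary virtual menu.
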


The BM inequalities provide the $\mathcal{H}$-representation of RUM for the static case. For this extended setup with $\bar{\rho}$, we say that $\bar{A}^t$ generates a unique RUM when the system $\bar{\rho}=\bar{A}^t\nu$ has a unique solution for all choice functions $\bar{\rho}$. We also say that DRUM (associated to matrix $\otimes_{t\in\mathcal{T}}\bar{A}^t$) satisfies uniqueness when, for all  $t\in\mathcal{T}\setminus{\{t'\}}$, $\bar{A}^t$ generates a unique RUM for an arbitrary period $t'$. We can obtain the modified $\mathcal{H}$-representation, $\bar{H}^{t}$, for the unique RUM for $\mathcal{T}=\{t\}$ from \citet{turansick2022identification}. The rows of $\bar{H}^{t}$correspond to: (i) BM inequalities, (ii) nonnegativity constraints, and (iii) the uniqueness restriction in Theorem~$1$ in \citet{turansick2022identification}.\footnote{The conditions in \citet{turansick2022identification} are equivalent to having some BM polynomials to be equal to zero. This can be expressed in our setup by adding to the $\mathcal{H}$-representation the row corresponding to the relevant BM polynomial and with the same row multiplied by $-1$. This guarantees that the relevant BM polynomial will be zero.}  

\begin{theorem}\label{thm: unique DRUM-BM}
    The following are equivalent:
   \begin{enumerate}
        \item $\rho$ is consistent with DRUM satisfying uniqueness. 
        \item  $\rho$ is IU-consistent, stable, and satisfies  $(\otimes_{t\in \mathcal{T}}\bar{H}^{t})\bar{\rho}\geq 0$. 
    \end{enumerate}
\end{theorem}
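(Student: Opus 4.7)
My plan is to combine the static BM characterization in Theorem~\ref{thm: static RUM-BM} with the ``moreover'' clause of Theorem~\ref{thm: WM stable rho}, which delivers an exact $\mathcal{H}$-representation as a Kronecker product under uniqueness. The key observation is that the extended choice function $\bar{\rho}$ on virtual menu paths $\bar{\rand{J}}$ stands in the same relation to the dynamic matrix $\bar{A}_T=\otimes_{t\in\mathcal{T}}\bar{A}^t$ (up to row permutation, by Lemma~\ref{lemma: Kronecker A}) as $\bar{\rho}^{(t)}$ does to $\bar{A}^t$ in each period. Uniqueness of DRUM is then the statement that $\bar{A}^t$ has full column rank for all but possibly one $t$, after the IU-consistency and Turansick uniqueness rows encoded in $\bar{H}^t$ collapse the solution set.

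\textbf{Forward direction ($1\Rightarrow 2$).} Assume $\rho$ comes from a unique DRUM, and write $\rho=A_T\nu$ with $\nu\in\Delta^{|\mathcal{R}|-1}$. Define $\bar{\rho}=\bar{A}_T\nu$, extending $\rho$ to all of $\bar{\rand{J}}$. Monotonicity of each $u^t$ with respect to $>^t$ forces $\bar{\rho}$ to be IU-consistent; independence of $\mu$ from the menu path makes $\bar{\rho}$ stable. Applying the static Theorem~\ref{thm: static RUM-BM} (with the Turansick rows) to each period gives $\bar{H}^t\bar{\rho}^{(t)}\geq 0$ for each marginal. Proposition~\ref{thm:weylmiknowskyrecursive} applied to the $\mathcal{V}$-representations $K^t=\bar{A}^t$ and $\mathcal{H}$-representations $L^t=\bar{H}^t$ yields $(\otimes_{t\in\mathcal{T}}\bar{H}^t)\bar{\rho}\geq 0$, which is the required inequality.

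\textbf{Reverse direction ($2\Rightarrow 1$).} Suppose $\bar{\rho}$ agrees with $\rho$, is IU-consistent and stable, and satisfies $(\otimes_t\bar{H}^t)\bar{\rho}\geq 0$. Since unique DRUM by definition means that each $\bar{A}^t$ (for all $t\in\mathcal{T}$ save at most one) has full column rank, the ``moreover'' clause of Theorem~\ref{thm:sufficiencyminimaltensorequalmaximaltensor} applies: the cone $\{z:(\otimes_t\bar{H}^t)z\geq 0\}$ coincides exactly with $\{(\otimes_t\bar{A}^t)\nu:\nu\geq 0\}$. Hence there exists $\nu\geq 0$ with $\bar{\rho}=\bar{A}_T\nu$; stability and the simplex structure force $\sum_k\nu_k=1$. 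Restricting both sides to observed menu paths gives $\rho=A_T\nu$, and Theorem~\ref{thm:main} identifies this $\nu$ as the law of a DRUM. IU-consistency of $\bar{\rho}$ ensures the realized utilities are monotone with respect to $>^t$, as required by the definition of DRUM.

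\textbf{Anticipated obstacle.} The technical hinge is verifying that the hypotheses of Theorem~\ref{thm:sufficiencyminimaltensorequalmaximaltensor} genuinely hold for $\bar{A}^t$ in the extended setting, rather than for the observed $A^t$: one must check that IU-consistency together with the Turansick rows embedded in $\bar{H}^t$ cut down the ambient cone to a proper, full-row-rank one (after quotienting by the stability subspace), so that uniqueness is equivalent to full column rank of $\bar{A}^{t\ast}$. Once this geometric reduction is in place, the recursive Weyl--Minkowski result of Theorem~\ref{thm: WM stable rho} delivers the equivalence directly; the rest is bookkeeping on how $\bar{\rho}$ restricts to $\rho$ and how rows corresponding to IU-forbidden choices are identically zero on both sides.
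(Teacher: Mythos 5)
Your proposal is correct and follows essentially the same route as the paper, which proves this result simply by combining Theorem~\ref{thm: static RUM-BM} (the static BM/$\bar{H}^t$ characterization) with the recursive Weyl--Minkowski machinery of Theorem~\ref{thm: WM stable rho} and the ``moreover'' clause of Theorem~\ref{thm:sufficiencyminimaltensorequalmaximaltensor} under the full-column-rank reading of uniqueness. The obstacle you flag at the end\textemdash that $\bar{A}^t$ is not of full row rank and one must pass to $\bar{A}^{t*}$ after imposing stability\textemdash is exactly what the paper resolves via Lemma~\ref{lemma: BM full rank} inside the proof of Theorem~\ref{thm: WM stable rho}, so no new argument is needed.
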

The result is a direct consequence of Theorem~\ref{thm: WM stable rho} and Theorem~\ref{thm: static RUM-BM}. Theorem~\ref{thm: unique DRUM-BM} is not exactly a $\mathcal{H}$-representation of DRUM; however, it becomes one when all menus in $\bar{\rand{J}}$ are observed, as is assumed in \citet{chambers2021correlated} and in \citet{li2021axiomatization}.  Note, moreover, that our proof can be used in the environments of \citet{chambers2021correlated} and \citet{li2021axiomatization} in the finite abstract setting because the primitive order $>^{t}$ can be empty.  Notice that the results in \citet{li2021axiomatization} are a special case of Theorem~\ref{thm: unique DRUM-BM} because when the choice set is of cardinality $3$ or less, RUM is unique \citep{fishburn1998stochastic,turansick2022identification}. Our result also generalizes Theorem~$3$ in \citet{chambers2021correlated}, which is equivalent to the special case $T=2$.  
\par 
We generalize the BM inequalities for the case of unobserved menus. Even if, for our primitive, this recursive characterization of DRUM is not an $\mathcal{H}$-representation of DRUM, this characterization has several advantages: (i) it avoids the computation of matrix $A_T$ associated with the $\mathcal{V}$-representation, which can be computationally burdensome (\citealp{kitamura2019nonparametric} note that computing $A^t$ is NP hard); (ii) it provides further intuition about the additional empirical bite of DRUM in comparison to RUM; and (iii) it means DRUM can be tested with a linear program.\footnote{For the statistical problem, we can use tools in KS and \citet{fang2023inference}.}
\par
Note that one cannot weaken the uniqueness assumption at all. In fact, the result fails to be true when $A^t$ is associated with nonunique static RUM for more than two periods. In that case, we have to go back to Theorem~\ref{thm: DRUM-BM}. 
\par 
We set $\phi^{*,t}$ for any $t\in\mathcal{T}$ as the average of rows of $\bar{H}^t$. The typical entry of $\bar{H}^t$ corresponding to the BM inequalities is: 
\[
\bar{H}^t_{(i_t,j_t),(i'_t,j'_t)}=(-1)^{|B^t_{j'_t}\setminus B^t_{j_t}|}\Char{x_{i_t|j_t}=x_{i'_t|j'_t},B^t_{j_t}\subseteq B^t_{j'_t}, x_{i_t|j'_t}\in B^t_{j'_t}}.
\]
Hence, $\phi^{*,t}_{i_t\prime,j_t\prime}=\sum_{i_t,j_t}\bar{H}^t_{(i_t,j_t),(i'_t,j'_t)}$.  As a direct application of Theorem~\ref{thm: WM stable rho}, we can now establish the following result.

\begin{theorem}\label{thm: DRUM-BM}
    The following are equivalent:
   \begin{enumerate}
        \item $\rho$ is consistent with DRUM. 
        \item There exists $\bar{\rho}$ that agrees with $\rho$, is IU-consistent, stable, and satisfies  
        \[
        \bar{\rho}\in \bigcap_{k_1,\cdots,k_T\geq1}\left\{\Gamma^{\rands{\phi}^*\prime}_{\rand{k}}z\::\:\left(\otimes_{t\in\mathcal{T}}\bar{H}^{t,\otimes_{k_t}}\right)z\geq0\right\}.
        \]
    \end{enumerate}
\end{theorem}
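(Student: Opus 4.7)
The plan is to reduce the result to a combined application of Theorem~\ref{thm: static RUM-BM} (the BM characterization of static RUM with partial menu variation) and Theorem~\ref{thm: WM stable rho} (the recursive $\mathcal{H}$-representation of DRUM via Kronecker products), both applied to the extended primitive $\bar\rho$ on the full menu lattice $\bar{\rand J}$. Accordingly, the argument will work on the virtual data set $\bar\rho$ rather than on $\rho$ directly, using the ``agrees-with'' extension to move back and forth between the observed and the virtual stochastic choice functions.

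For the direction (i) $\Rightarrow$ (ii), I would start from the measure $\mu$ on $\mathcal{U}$ rationalizing $\rho$ under DRUM and define $\bar\rho$ on every $\rand j\in\bar{\rand J}$ by the same formula $\bar\rho(x_{\rand i|\rand j})=\int\prod_{t}\mathds{1}\{\argmax_{y\in B^{t}_{j_t}}u^{t}(y)=x^{t}_{i_t|j_t}\}d\mu(u)$. This $\bar\rho$ trivially agrees with $\rho$; IU-consistency follows because each $u^t\in U^t$ is $>^{t}$-monotone on subsets, so any choice path violating the order has integrand identically zero; stability follows because $\mu$ does not depend on the menu path, so marginalizing over any period's menu leaves the marginal distribution on the other periods unchanged. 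For the Kronecker inclusion, the static Theorem~\ref{thm: static RUM-BM} gives $\bar H^{t}\bar\rho^{\,t}\geq 0$ for each period marginal, hence $\bar\rho\in\{v\geq 0:\bar A^{t}v=\bar\rho^{\,t}\}$ via the $\mathcal{V}$--representation; then Proposition~\ref{thm:weylmiknowskyrecursive} (the necessity half of the recursive Weyl--Minkowski theorem) applied to $K^{t}=\bar A^{t}$ and $L^{t}=\bar H^{t}$ yields $(\otimes_{t\in\mathcal T}\bar H^{t,\otimes_{k_t}})\bar\rho'\geq 0$ for the tensor-power extensions $\bar\rho'$ of $\bar\rho$, and the projection $\Gamma^{\rands{\phi}^{*}\prime}_{\rand k}$ (defined as in Theorem~\ref{thm:sufficiencyminimaltensorequalmaximaltensor} using $\phi^{*,t}$ the row-average of $\bar H^{t}$, which lies in the interior of the dual cone by construction) recovers $\bar\rho$ from $\bar\rho'$. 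This is exactly the claimed intersection condition.

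For the harder direction (ii) $\Rightarrow$ (i), suppose $\bar\rho$ is IU-consistent, stable, agrees with $\rho$, and lies in the displayed intersection. The key technical step is to verify the hypothesis of Theorem~\ref{thm: WM stable rho}, namely that the reduced matrix $\bar A^{t*}$ (obtained from $\bar A^{t}$ by deleting, in each menu except the first, the row corresponding to the last alternative) has full row rank. Under IU-consistency this reduces to full row rank of the RUM matrix on the relevant support, which follows from the $\mathcal{V}$--representation side of Theorem~\ref{thm: static RUM-BM} together with the fact that distinct rational types on $X^{t}$ are linearly independent on the menus they touch (this is a standard fact about Block--Marschak matrices). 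Given full row rank, Theorem~\ref{thm: WM stable rho} applies and yields exactly the equivalence: stability plus membership in $\bigcap_{\rand k}\{\Gamma^{\rands{\phi}^{*}\prime}_{\rand k}z:(\otimes_{t}\bar H^{t,\otimes_{k_t}})z\geq 0\}$ is equivalent to $\bar\rho=(\otimes_{t}\bar A^{t})\nu$ for some $\nu\geq 0$ with $\sum\nu=1$. By Lemma~\ref{lemma: Kronecker A} this is exactly consistency of $\bar\rho$ with DRUM on the full menu lattice; restricting the induced $\mu$ on $\mathcal{U}$ back to the observed menu paths, and using the agreement $\bar\rho(x_{\rand i|\rand j})=\rho(x_{\rand i|\rand j})$ for $\rand j\in\rand J$, gives the DRUM representation of $\rho$.

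The main obstacle I expect is the full-row-rank check on $\bar A^{t*}$ in the general BM setting with nontrivial primitive order $>^{t}$: although the paper asserts the analogous condition holds ``in all examples we are aware of'' and cites \citet{dogan2022every} and \citet{saito2017axiomatizations} for the finite abstract cases, the combination of IU-consistency (which zeroes out some rows and some types) with arbitrary $>^{t}$ must be handled carefully to ensure that deleting the ``last alternative'' rows does not leave surviving rows that become linearly dependent. A clean way to handle this is to restrict attention to the quotient space where IU-inconsistent entries are set to zero and show that, on that quotient, the remaining BM polynomial map is surjective onto the space of stable $\bar\rho^{\,t}$; the surjectivity is precisely what Theorem~\ref{thm: static RUM-BM} provides, and it is the piece that allows Theorem~\ref{thm: WM stable rho} to kick in without the need for the uniqueness assumption used in Theorem~\ref{thm: unique DRUM-BM}.
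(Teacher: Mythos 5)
Your proposal is correct and follows essentially the same route as the paper: for necessity, extend $\rho$ to the virtual menu lattice via the rationalizing measure and apply the static BM characterization together with Proposition~\ref{thm:weylmiknowskyrecursive}; for sufficiency, combine Theorem~\ref{thm:sufficiencyminimaltensorequalmaximaltensor} and Theorem~\ref{thm: WM stable rho} to obtain a mixture over \emph{all} linear order profiles and then use IU-consistency to show the mixture puts zero mass on profiles that fail to extend $>^{t}$. The one ``obstacle'' you flag---that IU-consistency might interfere with the full-row-rank check on $\bar{A}^{t*}$---dissolves in the paper's treatment: Lemma~\ref{lemma: BM full rank} establishes full row rank for the reduced matrix built over all linear orders on the full virtual menu lattice (via \citet{saito2017axiomatizations} and \citet{dogan2022every}), and IU-consistency enters only afterwards as a support restriction on the resulting $\nu$, so no quotient-space argument is needed.
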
 

To understand the intuition behind the $\mathcal{H}$-representation of DRUM, we focus on a necessary condition that it implies. We define a new set of inequalities we call DRUM-BM. 
\begin{definition}[DRUM-BM inequalities] We say that $\bar{\rho}$ satisfies the DRUM-BM inequalities if, for all $t\in\mathcal{T}$, $\rand{j}\in\bar{\rand{J}}$, and $\rand{i}\in\rand{I}_{\rand{j}}$, we have that $\mathds{B}_{t}(\rand{i},\rand{j})\geq0$, where $\mathds{B}_{T}(\rand{i},\rand{j})=\mathds{B}^T(\rand{i},\rand{j})$ and
\[
\mathds{B}_{t}(\rand{i},\rand{j})=
\sum_{j_{t}':B^t_{j_t}\subseteq B^t_{j_t'}}(-1)^{\abs{B^t_{j'_t}\setminus B^t_{j_t}}}\mathds{B}_{t+1}(\rand{i},\rand{j'})
\]
for all $t\in\mathcal{T}\setminus\{T\}$.
\end{definition}
Now we establish the following result. 
\begin{cor}
    If $\rho$ is consistent with DRUM, then $\bar{\rho}$ satisfies the DRUM-BM inequalities. 
\end{cor}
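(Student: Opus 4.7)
The plan is to derive the corollary directly from Theorem~\ref{thm: DRUM-BM} by specializing to the simplest possible choice of parameters. Since $\rho$ is consistent with DRUM, Theorem~\ref{thm: DRUM-BM} produces an extension $\bar{\rho}$ that agrees with $\rho$, is IU-consistent, is stable, and belongs to $\bigcap_{k_1,\dots,k_T\geq 1}\{\Gamma^{\rands{\phi}^*\prime}_{\rand{k}}z:(\otimes_{t\in\mathcal{T}}\bar{H}^{t,\otimes_{k_t}})z\geq 0\}$. I would first specialize to $k_1=\cdots=k_T=1$. Inspecting the definition of $\gamma_k^{\phi^t}$ with $k=1$ and the convention that the zeroth Kronecker power equals the scalar $1$ gives $\gamma_1^{\phi^t}=I^t$, so $\Gamma^{\rands{\phi}^*\prime}_{\rand{1}}$ reduces to the identity operator. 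Hence $\bar{\rho}$ itself satisfies $(\otimes_{t\in\mathcal{T}}\bar{H}^t)\bar{\rho}\geq 0$.

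Next, let $H^t_{\mathrm{BM}}$ denote the submatrix of $\bar{H}^t$ consisting of the rows that encode BM inequalities (discarding the rows corresponding to nonnegativity constraints). Since row selection commutes with the Kronecker product, the vector $(\otimes_{t\in\mathcal{T}}H^t_{\mathrm{BM}})\bar{\rho}$ is a subvector of $(\otimes_{t\in\mathcal{T}}\bar{H}^t)\bar{\rho}$, and is therefore componentwise nonnegative. It then remains to identify each entry of $(\otimes_{t\in\mathcal{T}}H^t_{\mathrm{BM}})\bar{\rho}$ with the quantity $\mathds{B}_1(\rand{i},\rand{j})$ produced by the DRUM-BM recursion. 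For this, I would unfold the Kronecker product as the composition of the static BM operators at times $T,T-1,\dots,1$ acting sequentially on the corresponding coordinate of the tensor $\bar{\rho}$. Because the $t$-th copy of $H^t_{\mathrm{BM}}$ acts solely on the $t$-th coordinate, the signs $(-1)^{|B^t_{j_t'}\setminus B^t_{j_t}|}$ and the inclusion indicators $\mathbf{1}\!\left[B^t_{j_t}\subseteq B^t_{j_t'}\right]$ factor across $t$, and associativity of the Kronecker product ensures that this composition produces exactly the recursive expression defining $\mathds{B}_1(\rand{i},\rand{j})$.

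The main obstacle is the notational bookkeeping in the last step, where one must carefully match the tensor structure to the $t$-indexed nesting of the DRUM-BM recursion. Once that identification is carried out, the componentwise nonnegativity of the subvector $(\otimes_{t\in\mathcal{T}}H^t_{\mathrm{BM}})\bar{\rho}$ translates into $\mathds{B}_1(\rand{i},\rand{j})\geq 0$ for every choice path $(\rand{i},\rand{j})$, which is the DRUM-BM inequalities. Note that this argument uses only the $\rand{k}=(1,\dots,1)$ slice of Theorem~\ref{thm: DRUM-BM}, so DRUM-BM is strictly weaker than (and necessary but not in general sufficient for) consistency with DRUM, consistent with the fact that sufficiency requires the stronger projected conditions obtained as $k_t$ grows.
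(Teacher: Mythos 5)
Your overall route is the intended one: the paper gives no explicit proof of this corollary, presenting it as an immediate consequence of Theorem~\ref{thm: DRUM-BM}, and your argument---take the $\rand{k}=(1,\dots,1)$ slice of the intersection, note that $\gamma_1^{\phi^t}=I^t$ so $\Gamma^{\rands{\phi}^*\prime}_{\rand{k}}$ is the identity, conclude $(\otimes_{t\in\mathcal{T}}\bar{H}^t)\bar{\rho}\geq0$, and then identify the BM-rows of the Kronecker product with the nested recursion---is exactly that reasoning made explicit. The identification step is also right: because the static BM operator at time $t$ acts only on the $t$-th tensor coordinate, the operators at different times commute and their composition is the Kronecker product of the corresponding submatrices, which is what the recursion $\mathds{B}_{t}(\rand{i},\rand{j})=\sum_{j_{t}'}(-1)^{\abs{B^t_{j'_t}\setminus B^t_{j_t}}}\mathds{B}_{t+1}(\rand{i},\rand{j'})$ computes.

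There is one genuine (though easily repaired) gap: the DRUM-BM inequalities are defined as $\mathds{B}_{t}(\rand{i},\rand{j})\geq0$ for \emph{all} $t\in\mathcal{T}$, whereas your argument only delivers the fully nested quantity $\mathds{B}_{1}(\rand{i},\rand{j})\geq0$. For $t\geq2$, $\mathds{B}_{t}$ corresponds to applying the BM operators only at periods $s\geq t$ and leaving the earlier coordinates untouched, i.e.\ to the matrix $\left(\otimes_{s<t}I^s\right)\otimes\left(\otimes_{s\geq t}H^{s}_{\mathrm{BM}}\right)$. You can close the gap in either of two ways: (a) observe that the identity rows $I^s$ are themselves rows of $\bar{H}^s$ (the nonnegativity constraints), so this matrix is again a submatrix of $\otimes_{s\in\mathcal{T}}\bar{H}^s$ and the same subvector argument applies; or (b) bypass Theorem~\ref{thm: DRUM-BM} entirely and argue from the mixture representation of Theorem~\ref{thm:main}: extending $\rho$ to $\bar{\rho}$ with the same measure $\nu$ over preference profiles, inclusion--exclusion shows that each $\mathds{B}_{t}(\rand{i},\rand{j})$ equals the $\nu$-measure of the event that, for every $s\geq t$, $x^s_{i_s|j_s}$ is top in $B^s_{j_s}$ and is dominated by every alternative outside $B^s_{j_s}$; being a probability, it is nonnegative. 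Route (b) is more elementary and makes the corollary independent of the heavier machinery, while your route (a) keeps the statement as a literal specialization of the theorem.
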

The DRUM-BM inequalities are as intuitive as the BM inequalities for the static case but they are not enough for DRUM. This is because these conditions interact with the hierarchical theory extensions for the nonunique DRUM case, giving rise to emergent conditions.

\section{Relationship with the Samuelson-Afriat and the McFadden-Richter frameworks} \label{secc:AfriatMcFadden}
In this section, we study the implications of DRUM for time series and cross-sections. First, we look at a time series with the assumption of constant utility across time periods as in the Samuelson-Afriat framework. In this case, DRUM implies that the (deterministic) Strong Axiom of Revealed Preference (SARP) has to hold in time series. Second, we study cross-sections, like the ones described in \citet{mcfadden1990stochastic} and \citet{mcfadden2005revealed} that are obtained by \emph{marginalizing} or \emph{pooling} panels. Marginalization and pooling correspond to the empirical practices of using cross-sections that correspond to one or many time periods, respectively. We show that if $\rho$ is consistent with DRUM, then any marginal distribution derived from it is rationalizable by RUM. At the same time, not every DRUM-consistent panel is RUM-rationalizable when pooled. Importantly, marginal consistency with RUM is not sufficient for consistency with DRUM.  

\subsection{The Samuelson-Afriat framework}
DRUM has no testable implications for a time series without further restrictions. That is, if we observe $\rho_{\rand{j}}$ for a single budget path $\rand{j}$, then there are no testable restrictions of DRUM. (We need at least two observed budget paths to test DRUM.) However, in the Samuelson-Afriat framework, one needs only a time-series of choices from budgets in order to test utility maximization. The reason for this is that the Samuelson-Afriat framework makes an additional assumption on the stochastic process, namely, that $\mu$ is such that $u^t=u^s$ $\mu-\as$ for all $t,s\in\mathcal{T}$. We call this restriction \emph{constancy} of the stochastic utility process. Under this restriction, the testable implications of DRUM in a time series are re-established. We need some preliminaries to formalize this intuition. To simplify the exposition, all the results in this section are for the demand setup.

\begin{definition}[Strong Axiom of Revealed Path Dominance, SARPD] For a given $\rand{j}\in\rand{J}$, we say that $\rho_{\rand{j}}=(\rho(x_{\rand{i}|\rand{j}}))_{\rand{i}\in\rand{I}_{\rand{j}}}$ satisfies SARPD if 
\[
\rho\left(x_{\rand{i}|\rand{j}}\right)=0
\]
whenever there is a finite set of patches from $x_{\rand{i}|\rand{j}}$, $\left\{x^{t_n}_{i_{t_n}|j_{t_n}}\right\}_{n=1}^N$, such that
$x^{t_1}_{i_{t_1}|j_{t_1}}\succeq^* x^{t_2}_{i_{t_2}|j_{t_2}}\succeq^*\dots\succeq^* x^{t_N}_{i_{t_N}|j_{t_N}}$ and $x^{t_N}_{i_{t_N}|j_{t_N}}\succeq^* x^{t_1}_{i_{t_1}|j_{t_1}}$(where  $x^t_{i_t|j_t}\succeq^* x^s_{i_s|j_s}$ whenever there are $x\in x^t_{i_t|j_t}$ and $y\in x^s_{i_s|j_s}$ such that $p_{j_t}'(x-y)\geq 0$).
\end{definition}
SARPD requires that the probability of observing a choice path that contains consumption bundles that form a revealed preference cycle is zero.  It is  analogous to the Strong Axiom of Revealed Preferences (SARP) in the Samuelson-Afriat framework. Using SARPD, we can establish the following result:

\begin{proposition}\label{prop:constantDRUMimpliesSARP} 
If $\rho$ is rationalized by DRUM with $\mu$ that satisfies constancy, then $\rho_{\rand{j}}$ satisfies SARPD for all $\rand{j}\in\rand{J}$.
\end{proposition}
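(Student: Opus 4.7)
The plan is to argue by contradiction. I would suppose $\rho$ is rationalized by DRUM with $\mu$ supported on constant trajectories ($u^t=u$ for all $t$, $\mu$-a.s.), and that for some observed $\rand{j}\in\rand{J}$ the patches along a choice path $x_{\rand{i}|\rand{j}}$ admit a $\succeq^*$-cycle while $\rho(x_{\rand{i}|\rand{j}})>0$. Using constancy, the DRUM integral can be rewritten as
\[
\rho\left(x_{\rand{i}|\rand{j}}\right)=\int\prod_{t\in\mathcal{T}}\Char{\argmax_{y\in B^{*,t}_{j_t}}u(y)\in x^t_{i_t|j_t}}d\tilde\mu(u)
\]
for an induced measure $\tilde\mu$ on $U^*$, so the claim reduces to showing that the integrand vanishes for every $u$ in the support of $\tilde\mu$. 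Strict concavity and monotonicity of $u\in U^*$ make the period-$t$ maximizer $y_t(u):=\argmax_{y\in B^{*,t}_{j_t}}u(y)$ unique and pin it on the budget hyperplane, so the question becomes: no $u\in U^*$ can have $y_{t_n}(u)\in x^{t_n}_{i_{t_n}|j_{t_n}}$ simultaneously for every $n$ in the cycle.

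The core step is to lift the patch cycle into a bundle-level revealed preference cycle on the optimizer trajectory. For each link, the definition of $\succeq^*$ supplies witnesses $\tilde y_n,\tilde z_{n+1}$ in consecutive patches with $p_{j_{t_n},t_n}\tr(\tilde y_n-\tilde z_{n+1})\geq 0$; because every element of a patch sits on a common budget hyperplane, this affordability is invariant to the choice of $\tilde y_n$ inside $x^{t_n}_{i_{t_n}|j_{t_n}}$, so I would substitute $\tilde y_n=y_{t_n}(u)$ and apply optimality in period $t_n$ to obtain $u(y_{t_n}(u))\geq u(\tilde z_{n+1})$. Transferring this inequality to the next-period optimizer, so that $u(y_{t_n}(u))\geq u(y_{t_{n+1}}(u))$, uses that $\tilde z_{n+1}$ and $y_{t_{n+1}}(u)$ belong to the same patch on $B^{*,t_{n+1}}_{j_{t_{n+1}}}$, together with strict concavity of $u$. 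Chaining around the cycle then forces $u(y_{t_1}(u))=\cdots=u(y_{t_N}(u))$, and strict concavity plus uniqueness of the optimizer forces $y_{t_1}(u)=\cdots=y_{t_N}(u)$; having this common point lie in every patch of the cycle is a geometric degeneracy that (generically) contradicts the distinct hyperplane structure underlying the cycle.

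The main obstacle is the witness-to-optimizer transfer. The existential form of $\succeq^*$ only guarantees that \emph{some} point in the downstream patch is affordable in the upstream period's budget, not that the unique optimizer $y_{t_{n+1}}(u)$ is. I expect to resolve this by exploiting the coarsest-partition construction of patches (each patch has uniform status with respect to every same-period budget hyperplane) together with strict monotonicity of $u$, which should propagate affordability from an interior witness to the optimizer via convex combinations that remain feasible at period $t_n$. Once this transfer is in place, the remaining argument is essentially Afriat's SARP/GARP applied to the optimizer trajectory $(y_{t_n}(u),p_{j_{t_n},t_n})_{n=1}^N$, with strict concavity supplying a strict inequality somewhere around the loop that breaks the cycle and delivers the desired contradiction.
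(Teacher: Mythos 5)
Your overall architecture matches the paper's: constancy collapses the mixture to a single utility $u$ per draw, and a $\succeq^*$-cycle among the patches of a positive-probability choice path is supposed to produce the chain $u(y^{t_1})>\cdots>u(y^{t_1})$, the standard SARP contradiction. The paper asserts this chain in one line; you are more careful and correctly isolate the problematic link, but your proposed repair of that link fails, and this is a genuine gap. The relation $\succeq^*$ is existential: it only guarantees that \emph{some} point $\tilde z_{n+1}$ of the downstream patch satisfies $p_{j_{t_n},t_n}\tr\tilde z_{n+1}\leq w_{j_{t_n},t_n}$. To run the chain you need the \emph{optimizer} $y_{t_{n+1}}(u)$ to be affordable at period $t_n$'s prices; knowing $u(y_{t_n}(u))\geq u(\tilde z_{n+1})$ gets you nothing, because optimality at $t_{n+1}$ gives $u(y_{t_{n+1}}(u))\geq u(\tilde z_{n+1})$, an inequality pointing the wrong way. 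Your fix---that "each patch has uniform status with respect to every same-period budget hyperplane" lets you propagate affordability from the witness to the optimizer---rests on a false premise: patches are the coarsest partition only with respect to budgets \emph{within the same period}. A period-$t_{n+1}$ patch can straddle the period-$t_n$ hyperplane (this is exactly why the paper must introduce \emph{pooled} patches in Section~\ref{sec:pooling}), so part of the patch is affordable at $t_n$ and part is not, and no convex-combination or monotonicity argument can transfer affordability from $\tilde z_{n+1}$ to $y_{t_{n+1}}(u)$.

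Your endgame is also not a proof. If the weak chain closes, you conclude that all optimizers have equal utility and invoke strict concavity to force them to coincide; but uniqueness of the maximizer is per-budget, so two tangency points on different hyperplanes can share a utility value without being equal, and the residual case of a single common bundle lying on every hyperplane in the cycle is dismissed only "(generically)," which does not deliver a contradiction. The intended argument---make each link strict via uniqueness of the constrained maximizer and chain strict inequalities---requires affordability of the actual chosen bundles, i.e., precisely the step that is missing. To close the gap you would need either to strengthen $\succeq^*$ so that dominance is witnessed uniformly over the patches (e.g., via pooled patches), or to run the revealed-preference chain directly on the realized bundles $y_{t_n}(u)$ rather than on patch representatives.
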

Note that DRUM bounds above the probability of choice paths that contain a revealed preference cycle. To see this, we consider again the simple setup with $T=2$. There are two choice paths that contain a revealed preference cycle: $\left(x_{1|2}^1,x_{2|1}^2\right)$ and $\left(x_{2|1}^1,x_{1|2}^2\right)$. We focus on the first choice path without loss of generality. Using $\mathrm{D}$-monotonicity, we know that 
\[
\rho\left(\left(x_{1|2}^1,x_{2|1}^2\right)\right)\leq \rho\left(\left(x_{1|1}^1,x_{2|2}^2\right)\right).
\]
This means that the probability of a choice path that contains a  violation of SARP (i.e., it contains a revealed preference cycle), is bounded above by the probability of a choice path that contains no such cycles. That is, DRUM meaningfully restricts the probability of choice paths with revealed preference cycles. This \emph{endogenous} bound on the probability of a choice path that contains a revealed preference cycle has an important advantage with respect to measures of deviations from rationality like the Critical Cost Efficiency Index \citep{afriat1973efficiency}. Indeed, in the literature that uses this index, it is an open question how to set a threshold below which the level of deviations from static utility maximization is deemed reasonable. In our setup, we convert this problem into a population one and then bound endogenously the fraction of consumers  that have choices that involve revealed preference cycles.  Importantly, notice that if $\rho$ is degenerate taking values on $\{0,1\}$ for a given budget path then $\mathrm{D}$-monotonicity is equivalent to the Weak Axiom of Revealed Preference by \citet{samuelson1938note} in the Samuelson-Afriat framework. To see this, note that the probability of choice paths with revealed preference cycles of size $2$ (i.e., violations of the weak axiom) under the degeneracy of $\rho$ must be zero. 

\subsection{Marginal and Conditional Distributions}
Given a budget path $\rand{j}$, let $\rho^\textrm{c}_{t,\rand{j}}$ and $\rho^\textrm{m}_{t,\rand{j}}$ be the conditional and the marginal distributions, respectively, over patches implied by $\rho_{\rand{j}}$. That is,
\begin{align*}
    \rho^\textrm{c}_{t,\rand{j}}\left(x_{\rand{i}|\rand{j}}\right)&=\dfrac{\rho\left(x_{\rand{i}|\rand{j}}\right)}{\sum_{i\in\mathcal{I}^t_{j_t}}\rho\left(x_{\rand{i}|\rand{j}}\right)},\\
    \rho^\textrm{m}_{t,\rand{j}}\left(x_{i_t|j_t}\right)&=\sum_{\tau\in\mathcal{T}\setminus\{t\}}\sum_{i\in\mathcal{I}^{\tau}_{j_\tau}}\rho\left(x_{\rand{i}|\rand{j}}\right),
\end{align*}
where the conditional distribution is defined only when $\sum_{i\in\mathcal{I}^t_{j_t}}\rho\left(x_{i_t|j_t}\right)\neq 0$. Given the marginal distribution of a budget path, we can also define the slicing distribution as
\[
\rho^\textrm{s}_{t}\left(x_{i_t|j_t}\right)=\sum_{\rand{j}\in\rand{J}}\rho^\textrm{m}_{t,\rand{j}}\left(x_{i_t|j_t}\right)F(\rand{j}|j_t),
\]
where $F(\rand{j}|j_t)$ is the conditional probability of observing the budget path $\rand{j}$ conditional on the $t$-th budget being $j_t$ in the data. The slicing distribution is a mixture of marginal distributions. It corresponds to the situation in which the researcher only focuses on one cross-section. 

\begin{proposition}\label{prop:DRUMimpliesmarginalRUM}
If $\rho$ is rationalized by DRUM, then $\rho^\textrm{c}_{t,\rand{j}}$, $\rho^\textrm{m}_{t,\rand{j}}$, and $\rho^\textrm{s}_{t}$ are rationalized by RUM for any $t\in\mathcal{T}$ and $\rand{j}\in\rand{J}$.
\end{proposition}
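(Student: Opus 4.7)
\textbf{Proof plan for Proposition~\ref{prop:DRUMimpliesmarginalRUM}.} The strategy is to start from the DRUM representation, $\rho_{\rand{j}}(x_{\rand{i}|\rand{j}})=\int\prod_{t}\Char{\argmax_{y\in B_{j_t}^t}u^t(y)=x^t_{i_t|j_t}}d\mu(u)$, and in each of the three cases construct an appropriate probability measure on a single-period utility space $U^t$ that delivers a RUM representation. The crucial structural facts being exploited are (i) $\sum_{i_\tau\in\mathcal{I}^\tau_{j_\tau}}\Char{\argmax_{y\in B^\tau_{j_\tau}}u^\tau(y)=x^\tau_{i_\tau|j_\tau}}=1$ for almost every $u^\tau$ (uniqueness of the maximizer by injectivity), and (ii) the DRUM exogeneity assumption that $\mu$ does not depend on the observed menu path $\rand{j}$.

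\emph{Marginal.} Marginalizing over the choice indices in every period $\tau\neq t$ and applying Fubini together with fact (i) collapses the product in the integrand to a single factor, yielding
\[
\rho^{\mathrm{m}}_{t,\rand{j}}(x_{i_t|j_t})=\int \Char{\argmax_{y\in B^t_{j_t}}u^t(y)=x^t_{i_t|j_t}}\,d\mu_t(u^t),
\]
where $\mu_t$ is the marginal of $\mu$ on $U^t$. This is exactly a RUM representation (with mixing measure $\mu_t$) of the stochastic choice function obtained by varying $j_t$.

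\emph{Conditional.} For a fixed choice path configuration in periods $\tau\neq t$, let $A$ denote the event $\{u\in\mathcal{U}\::\:\argmax_{y\in B^\tau_{j_\tau}}u^\tau(y)=x^\tau_{i_\tau|j_\tau}\text{ for all }\tau\neq t\}$. The denominator $\sum_{i_t\in\mathcal{I}^t_{j_t}}\rho(x_{\rand{i}|\rand{j}})$ equals $\mu(A)$ by the same telescoping argument, and is therefore independent of $j_t$. When $\mu(A)>0$, Bayes' rule gives
\[
\rho^{\mathrm{c}}_{t,\rand{j}}(x_{\rand{i}|\rand{j}})=\int\Char{\argmax_{y\in B^t_{j_t}}u^t(y)=x^t_{i_t|j_t}}\,d\nu(u^t),
\]
where $\nu$ is the marginal on $U^t$ of the conditional probability measure $\mu(\cdot\mid A)$. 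Since $\nu$ does not depend on $j_t$, this is a RUM representation of the collection of conditional choice distributions obtained by varying $j_t$ while keeping the conditioning event fixed.

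\emph{Slicing.} By the marginal step, $\rho^{\mathrm{m}}_{t,\rand{j}}(x_{i_t|j_t})$ depends on $\rand{j}$ only through $j_t$—this is precisely where DRUM exogeneity is used, since $\mu_t$ is a marginal of $\mu$ and $\mu$ is menu-path independent. Writing $F(\cdot\mid j_t)$ for the conditional distribution over budget paths that contain $j_t$, we therefore get
\[
\rho^{\mathrm{s}}_{t}(x_{i_t|j_t})=\sum_{\rand{j}\in\rand{J}}\rho^{\mathrm{m}}_{t,\rand{j}}(x_{i_t|j_t})F(\rand{j}\mid j_t)=\int\Char{\argmax_{y\in B^t_{j_t}}u^t(y)=x^t_{i_t|j_t}}\,d\mu_t(u^t),
\]
which is once again a RUM representation with mixing measure $\mu_t$. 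The main subtlety across the three cases is the second one: one must observe that the denominator does not depend on $j_t$, which justifies viewing the conditional as a genuine stochastic choice function indexed by $j_t$. The rest is bookkeeping and a single application of Fubini's theorem.
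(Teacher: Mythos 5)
Your proof is correct, but it takes a genuinely different route from the paper's. The paper argues on the axiomatic ($\mathcal{H}$-) side: it states a lemma asserting an ASRP-type inequality, namely that for any finite sequence of patches $\{(i_k,j_k)\}$ at time $t$ one has $\sum_k\rho^1_{i_k|j_k}\leq\rho^{-1}\max_{r_t\in\mathcal{R}_t}\sum_k a_{t,r_t,i_k,j_k}$, and then invokes the McFadden--Richter Axiom of Stochastic Revealed Preference characterization of static RUM to conclude that the marginal and (under an interiority assumption ruling out zero-probability conditioning events) the conditional distributions are RUM-rationalizable, with the remaining cases handled "recursively and for any permutation of time." You instead work on the representational ($\mathcal{V}$-) side and explicitly exhibit the rationalizing measure in each case: the marginal $\mu_t$ of $\mu$ on $U^t$ for $\rho^{\mathrm{m}}_{t,\rand{j}}$ and $\rho^{\mathrm{s}}_t$, and the $U^t$-marginal of $\mu(\cdot\mid A)$ for $\rho^{\mathrm{c}}_{t,\rand{j}}$. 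Your approach buys several things: it is self-contained (the paper's lemma is stated but not proved, and the reduction from the lemma to RUM-consistency is only asserted); it correctly isolates the two load-bearing facts (uniqueness of the maximizer, so the indicators telescope to one, and menu-path independence of $\mu$, so the mixing measure does not vary with $j_t$); it makes explicit why the conditional is a well-defined stochastic choice function indexed by $j_t$ (the denominator $\mu(A)$ is $j_t$-free); and it actually covers the slicing distribution, which the paper's proof never explicitly addresses. What the paper's route buys in exchange is the quantitative ASRP inequality itself, which is of independent interest as a testable implication, though it is not needed to establish the proposition. One cosmetic point: the interchange of the (finite) sum over $(i_\tau)_{\tau\neq t}$ with the integral needs only linearity of the integral, so the appeal to Fubini is harmless but unnecessary.
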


Proposition~\ref{prop:DRUMimpliesmarginalRUM} means that if $\rho$ is consistent with DRUM, then the data are consistent with RUM in any given cross-section (i.e., slice). In this sense, the empirical implications of DRUM when an analyst has access to only a slice of choices is the same as the empirical implications of RUM. However, consistency of the marginal or slicing distributions does not exhaust the empirical content of DRUM. This is illustrated in Example~\ref{ex:RUMmarginalisnotDRUM}.  

\begin{example}\label{ex:RUMmarginalisnotDRUM}[Marginals are consistent with WASRP but not rationalized by DRUM]
Consider $\rho$ as presented in Table~\ref{tab:marginalRumNotDrum}. This $\rho$ violates stability and $\mathrm{D}$-monotonicity. So DRUM cannot possibly explain it. At the same time its marginal probabilities at $t=1$ are consistent with the WASRP: $\rho^\textrm{m}_{1,(2,1)}\left(x^1_{1|2}\right)=\frac{1}{2}$, $\rho^\textrm{m}_{1,(1,1)}\left(x^1_{2|1}\right)=\frac{1}{2}$; and $\rho^\textrm{m}_{1,(2,2)}\left(x^1_{1|2}\right)=\frac{2}{3}$, $\rho^\textrm{m}_{1,(1,2)}\left(x^1_{2|1}\right)=\frac{1}{3}$. Thus, each of these marginal distributions is consistent with RUM.\footnote{Recall that WASRP is the necessary and sufficient condition for marginal probabilities to be rationalized by RUM in the sense of Proposition~\ref{prop:DRUMimpliesmarginalRUM}.} Moreover, the slicing distribution would satisfy $\rho^\textrm{s}_{1}\left(x^1_{2|1}\right)=F((1,1)|1)\frac{1}{2}+F((1,2)|1)\frac{1}{3}$ and $\rho^\textrm{s}_{1}\left(x^1_{1|2}\right)=F((2,1)|2)\frac{1}{2}+F((2,2)|2)\frac{2}{3}$. As a result, depending on $F$,
\[
\rho^\textrm{s}_{1}\left(x^1_{1|2}\right)+\rho^\textrm{s}_{1}\left(x^1_{2|1}\right)\in\left[5/6,7/6\right].
\]
Thus, if, for example, all budget paths are observed with equal conditional probabilities, then $\rho^\textrm{s}_{1}\left(x^1_{1|2}\right)+\rho^\textrm{s}_{1}\left(x^1_{2|1}\right)=1$. The slicing distribution, therefore, is also consistent with RUM.
\end{example}

\begin{table}
\begin{centering}
\begin{tabular}{c!{\vrule width 2pt}c|c|c|c}
& $x^2_{1|1}$ & $x^2_{2|1}$ & $x^2_{1|2}$ & $x^2_{2|2}$\\
\noalign{\hrule height 2pt}
$x^1_{1|1}$ & 1/6 & 1/3 & 2/3 & -\\
\hline 
$x^1_{2|1}$ & 1/3 & 1/6 & 1/6 & 1/6\\
\hline 
$x^1_{1|2}$ & 1/6 & 1/3 & 2/3 & -\\
\hline 
$x^1_{2|2}$ & 1/3 & 1/6 & 1/6 & 1/6
\end{tabular}
\par\end{centering}
\caption{Matrix representation of $\rho$ that is consistent with RUM after slicing, but is not consistent with DRUM}\label{tab:marginalRumNotDrum}
\end{table}

\subsection{Pooling \label{sec:pooling}}
In practice, and in the absence of panel variation,  several years or time periods of choices from budgets are pooled before testing for consistency with RUM (KS,\citealp{deb2017revealed}). Here we explore a potential pitfall of this practice. We show that when a panel dataset that is consistent with DRUM is pooled, it may not be consistent with RUM. The spurious rejection of rationality may be driven by the fact that pooling requires us to ignore time labels and imposes the restriction that the distribution of preferences is independent across time. 
\par
First, we formally define \emph{pooling}. To simplify the exposition, assume that $B_j^{*,t}\neq B_{j'}^{*,t'}$ for all $t,t'\in\mathcal{T}$, $j\in\mathcal{J}^t$, and $j'\in\mathcal{J}^{t'}$. That is, there are no repeated budgets across time and agents. Let $\mathcal{J}=\{1,2,\dots, J\}$, where $J=\sum_{t\in\mathcal{T}}J^t$ is the total number of budgets.
\begin{definition}[Pooled Patches] Let 
$
\mathcal{X}=\bigcup_{t\in\mathcal{T}}\bigcup_{j\in \mathcal{J}^t} \{\xi^t_{k|j}\}
$
be the coarsest partition of $\bigcup_{t\in\mathcal{T}}\bigcup_{j\in\mathcal{J}^t}B^{*,t}_{j}$ such that
$
\xi^t_{k|j}\bigcap B^{*,t}_{j'}\in\{\xi^t_{k|j},\emptyset\}
$
for any $j,j'$, and $k$.
\end{definition}
The pooled patches $\{\xi^t_{k|j}\}$ partition every $x^t_{i|j}$ since $B^{*,t}_{j}$ may now intersect with budgets from different periods (see Figure~\ref{fig:pooledpatches}). 
\begin{figure}[h]
\begin{centering}
\begin{tikzpicture}[scale=0.6] 
\draw[thick,->] (0,0) -- (6,0) node[anchor=north west] {$y_1$}; 
\draw[thick,->] (0,0) -- (0,6) node[anchor=south east] {$y_2$}; 
\draw[very thick] (0,5) -- (3,0); 
\draw (2.1,3) node {$x^1_{1|1}$}; 
\draw (3,-0.5) node {$B^{*,1}_{1}$}; 
\end{tikzpicture}
\begin{tikzpicture}[scale=0.6] 
\draw[thick,->] (0,0) -- (6,0) node[anchor=north west] {$y_1$}; 
\draw[thick,->] (0,0) -- (0,6) node[anchor=south east] {$y_2$}; 
\draw[very thick] (5,0) -- (0,3); 
\draw (3,2) node {$x^2_{1|1}$}; 
\draw (5,-0.5) node {$B^{*,2}_{1}$}; 
\end{tikzpicture}
\begin{tikzpicture}[scale=0.6] 
\draw[thick,->] (0,0) -- (6,0) node[anchor=north west] {$y_1$}; 
\draw[thick,->] (0,0) -- (0,6) node[anchor=south east] {$y_2$}; 
\draw[very thick] (0,5) -- (3,0); 
\draw[very thick] (5,0) -- (0,3); 
\draw [fill=black] (1.875,1.875) circle[radius=.1]; 
\draw (0.7,2) node {$\xi^2_{1|1}$}; 
\draw (2,0.5) node {$\xi^1_{2|1}$}; 
\draw (1.4,4) node {$\xi^1_{1|1}$}; 
\draw (4,1.2) node {$\xi^2_{2|1}$}; 
\draw (3,-0.5) node {$B^{*,1}_{1}$}; 
\draw (-0.5,3) node {$B^{*,2}_{1}$};
\end{tikzpicture}
\par\end{centering}
\caption{$K=2$ goods, $T=2$ time periods, one budget per time period. The first and the second picture depict patches in 2 time periods. The third picture depicts new patches that arise after pooling the data. \label{fig:pooledpatches}}
\end{figure}
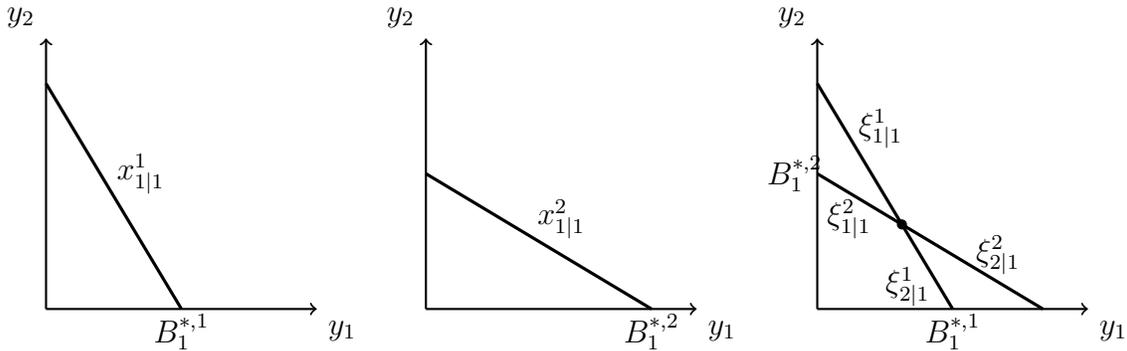
Given these new patches, we can define the pooled stochastic function $\rho^{\text{pool}}\left(\xi^t_{k|j}\right)$ as the probability of observing someone picking from patch $\xi^t_{k|j}$. Next, we construct an example in which $\rho$ is rationalizable by DRUM but the corresponding $\rho^{\text{pool}}$ is not consistent with RUM (in the sense of Proposition~\ref{prop:DRUMimpliesmarginalRUM}). Consider the setting with $K=2$ goods and $T=2$ time periods. In each time period $t$, there is only one budget $B_1^{*,t}$. Assume that $B_1^{*,1}\neq B_1^{*,2}$ and $B_1^{*,1}\cup B_1^{*,2}\neq\emptyset$ (see Figure~\ref{fig:pooledpatches}). Given that there is no budget variation for any given time period, there is only one choice path $\left(x^1_{1|1},x^2_{1|1}\right)$. Thus, the trivial $\rho\left(\left(x^1_{1|1},x^2_{1|1}\right)\right)=1$ is rationalizable by DRUM. After pooling, since the budgets overlap, there are four patches (we assume that there is no intersection patch). Since there is only one choice path, DRUM does not impose any restrictions on the choice of individuals on these two budgets. As a result, we can take  $\nu^1$ and $\nu^2$ from the DRUM definition such that $\rho^{\text{pool}}\left(\xi^2_{1|1}\right)+\rho^{\text{pool}}\left(\xi^1_{2|1}\right)>1$. This $\rho^{\text{pool}}$ cannot be consistent with RUM.

\section{Counterfactuals in the Demand Setup}\label{sec: counterfactuals}
This section shows how to conduct sharp counterfactual analyses within our framework in the demand setup.\footnote{\emph{Sharpness} in this setting means that we can compute the tightest set of parameters that are consistent with the observed data and the model.} The sharpness of our results follows from the fact that we have a full characterization of DRUM.\footnote{For early connections between nonparametric counterfactuals and specification testing, see \citet{varian1982nonparametric,varian1984nonparametric}. See \citet{blundell2008best}, \citet{norets2014semiparametric}, \citet{blundell2014bounding}, \citet{allen2019identification}, \citet{AK2021}, and \citet{AGUIAR2022PPPP} for recent examples in the analysis of demand, dynamic binary choice, and production.} To simplify the exposition, we focus on the simple setup (i.e., two intersecting budgets per period) for which we possess the $\mathcal{H}$-representation.

Given $\rho$ in the time window $\mathcal{T}$, we want to bound some known function of counterfactual stochastic demands at the counterfactual time $T+1$. We assume that consumers face a pair of prices $p_{1,T+1}$ and $p_{2,T+1}$ that are known to the analyst at $T+1$. Let income in each period be $1$. Denote the extended time window by $\mathcal{T}^{\mathrm{c}}=\mathcal{T}\cup\{T+1\}$. Similarly, the extended set of budget paths is denoted by $\rand{J}^{\mathrm{c}}$, and the extended vector representation of stochastic demand is denoted by $\rho^{\mathrm{c}}$.

Let $y^{\mathrm{c}}_{j_{T+1}}$ denote the counterfactual random demand of a consumer facing budget $j_{T+1}$ at time $T+1$. That is,
\[
y_{j_{T+1}}^{\mathrm{c}}=\argmax_{y\in B_{j_{T+1}}^{T+1}}u^{T+1}(y),
\]
where $u^{T+1}$ is a random utility function at time $T+1$. 

\begin{definition}[Counterfactual marginal and conditional demands]
Given $\rho$, $x_{\rand{i}|\rand{j}}$, and budget $j_{T+1}\in\mathcal{J}^{T+1}$, the counterfactual conditional and marginal demands $\rho^{*}\left(\cdot|j_{T+1},x_{\rand{i}|\rand{j}}\right)$ and $\rho^{**}\left(\cdot|j_{T+1}\right)$ are distributions over patches of $j_{T+1}$ such that
\begin{align*}
\rho_{j_{T+1}}^{*}\left(x^{T+1}_{i_{T+1}|j_{T+1}}|x_{\rand{i}|\rand{j}}\right)&=\rho^{\mathrm{c}}\left(\left(x^{t}_{i_t|j_t}\right)_{t\in\mathcal{T}^\mathrm{c}}\right)\Big/\rho\left(x_{\rand{i}|\rand{j}}\right),\\
\rho_{j_{T+1}}^{**}\left(x^{T+1}_{i_{T+1}|j_{T+1}}\right)&=\sum_{x_{\rand{i}|\rand{j}}}\rho^{\mathrm{c}}\left(\left(x^{t}_{i_t|j_t}\right)_{t\in\mathcal{T}^\mathrm{c}}\right)
\end{align*}
for any $\rho^{\mathrm{c}}$ that satisfies $\mathrm{D}$-monotonicity and stability, and is such that
\[
\rho\left(x_{\rand{i}|\rand{j}}\right)=\sum_{i_{T+1}\in\mathcal{I}^{T+1}_{j_{T+1}}}\rho^{\mathrm{c}}\left(\left(x^{t}_{i_t|j_t}\right)_{t\in\mathcal{T}^\mathrm{c}}\right).
\]
\end{definition}
The counterfactual conditional and marginal distributions fully characterize the choices of consumers in counterfactual situations, thus allowing us to compute sharp bounds for the expectation of any function of $y^{\mathrm{c}}$. For a given measurable function $g:X\to\Real$, let
\begin{align*}
    \underline{g}\left(x^t_{i_t|j_t}\right)=\inf_{y\in x^t_{i_t|j_t}}g(y),\quad\quad
    \overline{g}\left(x^t_{i_t|j_t}\right)=\sup_{y\in x^t_{i_t|j_t}}g(y).
\end{align*}
be the smallest and the largest values, respectively, $g$ can take over the patch $x^t_{i_t|j_t}$.
\begin{proposition}
     Given $\rho$, $x_{\rand{i}|\rand{j}}$, and budget $j_{T+1}\in\mathcal{J}^{T+1}$, 
     \begin{align*}
         {\scriptstyle \inf_{\rho_{j_{T+1}}^{*}}\sum_{i\in\mathcal{I}^{T+1}_{j_{T+1}}}\rho_{j_{T+1}}^{*}\left(x^{T+1}_{i|j_{T+1}}|x_{\rand{i}|\rand{j}}\right)\underline{g}\left(x^{T+1}_{i|j_{T+1}}\right)}&{\scriptstyle\leq \Exp{g\left(y_{j_{T+1}}^{\mathrm{c}}\right)|x_{\rand{i}|\rand{j}}}\leq\sup_{\rho_{j_{T+1}}^{*}}\sum_{i\in\mathcal{I}^{T+1}_{j_{T+1}}}\rho_{j_{T+1}}^{*}\left(x^{T+1}_{i|j_{T+1}}|x_{\rand{i}|\rand{j}}\right)\overline{g}\left(x^{T+1}_{i|j_{T+1}}\right)},\\
         {\scriptstyle\inf_{\rho_{j_{T+1}}^{**}}\sum_{i\in\mathcal{I}^{T+1}_{j_{T+1}}}\rho_{j_{T+1}}^{**}\left(x^{T+1}_{i|j_{T+1}}\right)\underline{g}\left(x^{T+1}_{i|j_{T+1}}\right)}&{\scriptstyle\leq \Exp{g\left(y_{j_{T+1}}^{\mathrm{c}}\right)}\leq\sup_{\rho_{j_{T+1}}^{**}}\sum_{i\in\mathcal{I}^{T+1}_{j_{T+1}}}\rho_{j_{T+1}}^{**}\left(x^{T+1}_{i|j_{T+1}}\right)\overline{g}\left(x^{T+1}_{i|j_{T+1}}\right),}
     \end{align*}
     where the infimum and supremum are taken over all possible counterfactual marginal and conditional distributions.
\end{proposition}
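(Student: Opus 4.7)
The plan is to reduce the problem to a linear programme over the polytope of DRUM-consistent extensions of $\rho$, and to establish sharpness by exploiting the freedom, within each patch, to locate the utility maximiser wherever we like.

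First I would argue that the admissible counterfactual $\rho^{\mathrm{c}}$ form a nonempty, compact polytope $\mathcal{P}$. Indeed, $\rho^{\mathrm{c}}$ is admissible if and only if (i) it agrees with $\rho$ on the observed marginal over $\rand{J}$ and (ii) it is DRUM-consistent on the extended window $\mathcal{T}^{\mathrm{c}}$. By Theorem~\ref{thm:DRUM2x2}, condition (ii) in the simple setup is equivalent to $\mathrm{D}$-monotonicity and stability, i.e.\ finitely many linear (in)equalities; combined with (i) this cuts out a polytope. The polytope is nonempty because any DRUM rationalisation of $\rho$ (which exists by assumption) extends to $\mathcal{T}^{\mathrm{c}}$ by, for example, appending an independent draw of a utility for the new period. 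The sets of induced conditional distributions $\rho^{*}_{j_{T+1}}(\cdot|x_{\rand{i}|\rand{j}})$ and marginal distributions $\rho^{**}_{j_{T+1}}(\cdot)$ are then affine images of $\mathcal{P}$, hence convex compact polytopes themselves.

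Next, for any fixed $\rho^{\mathrm{c}}\in\mathcal{P}$ and any DRUM representation that induces it, the event $\{y^{\mathrm{c}}_{j_{T+1}}\in x^{T+1}_{i|j_{T+1}}\}$ has probability $\rho^{*}_{j_{T+1}}(x^{T+1}_{i|j_{T+1}}|x_{\rand{i}|\rand{j}})$ conditional on the observed choice path (respectively $\rho^{**}_{j_{T+1}}(x^{T+1}_{i|j_{T+1}})$ unconditionally). On this event $\underline{g}(x^{T+1}_{i|j_{T+1}})\le g(y^{\mathrm{c}}_{j_{T+1}})\le\overline{g}(x^{T+1}_{i|j_{T+1}})$ pointwise by definition of $\underline{g}$ and $\overline{g}$. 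Summing across patches via the law of total expectation yields the bracketing chain for every fixed $\rho^{\mathrm{c}}$; taking the infimum and supremum over $\mathcal{P}$ delivers the outer bounds of the proposition.

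Finally, sharpness. Because $\mathcal{P}$ is compact and the objective is linear, the sup on the right-hand side of the upper bound is attained at some $\rho^{\mathrm{c},*}\in\mathcal{P}$; fix a DRUM representation $\mu$ of $\rho^{\mathrm{c},*}$. Conditional on the new-period maximiser falling in the patch $x^{T+1}_{i|j_{T+1}}$, I would perturb the associated utility functions so that their maximisers lie within $\varepsilon$ of an $\argmax$ of $g$ on that patch, while preserving which patch is chosen. This produces a DRUM realisation inducing the same $\rho^{\mathrm{c},*}$ whose expectation of $g$ is within $\varepsilon$ of the claimed supremum; letting $\varepsilon\downarrow 0$ establishes sharpness. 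The lower bound is symmetric. The main obstacle is this perturbation step: one must exhibit, for every target point in the relative interior of a patch, a continuous, strictly concave, monotone utility in $U^{*}$ whose constrained maximiser is precisely that point. A constructive remedy is to take a one-parameter family of quadratic bliss-point utilities with bliss point slid toward the desired location; continuity of the budget-constrained argmax preserves the patch so long as the target is in the relative interior, and boundary points follow by a limiting argument.
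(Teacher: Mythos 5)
Your argument for the stated inequalities is correct, and it is worth noting that the paper itself offers no proof of this proposition in Appendix~\ref{app: proofs} --- it is treated as immediate from the definitions. Your route is the natural one: the true extended system $\rho^{\mathrm{c}}$ generated by any DRUM rationalization of $\rho$ on $\mathcal{T}^{\mathrm{c}}$ satisfies $\mathrm{D}$-monotonicity and stability by the necessity direction of Theorem~\ref{thm:DRUM2x2}, hence induces one of the admissible $\rho^{*}_{j_{T+1}}$ (resp.\ $\rho^{**}_{j_{T+1}}$); the pointwise bounds $\underline{g}\le g\le\overline{g}$ on each patch plus the law of total expectation then sandwich $\Exp{g(y^{\mathrm{c}}_{j_{T+1}})|x_{\rand{i}|\rand{j}}}$ between the corresponding weighted sums, and passing to the infimum and supremum over the feasible polytope only widens the bracket. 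Your observation that feasibility is a finite system of linear (in)equalities, so the extrema are attained, is a useful addition.

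Two remarks. First, the sharpness step is not actually part of the displayed statement, so your proof does more than is asked; but since the surrounding text claims sharpness, the effort is not wasted. Second, within that step your ``constructive remedy'' of quadratic bliss-point utilities does not work as written: $U^{*}$ requires utilities to be \emph{monotone} as well as strictly concave, and a bliss-point utility fails monotonicity. The correct device is the one the paper leans on in the proof of Lemma~\ref{lem: demandisdrum} (following KS): demand can be perturbed arbitrarily within a patch while preserving rationalizability, using Afriat-style piecewise-linear supporting utilities (smoothed to be strictly concave and monotone) whose budget-constrained maximizer is any prescribed point of the patch. Substituting that construction for the bliss-point family closes the only gap in your argument.
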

Note that our results are complementary to those of \citet{kitamura2019nonparametric} that predict counterfactual stochastic demand for a new budget in a given cross-section using static RUM. We can use their techniques here as well. The main difference from their work is that we focus on the counterfactual prediction in the time-dimension allowing dynamic preference change. 

\section{Empirical Application: Binary Menus of Lotteries}\label{sec: application}
We study a sample of experimental subjects studied in \citet{ABBK23} that was surveyed in the MTurk platform between August $25$, $2018$ and September $17$, $2018$. We find evidence that deterministic rationality fails to explain the totality of the sample behavior, yet consistency with DRUM cannot be rejected. Moreover, restricting the linear orders in DRUM to those consistent with Expected Utility \citep{frick2019dynamic} cannot explain the population behavior. 

The sample contains $2135$ DMs. The grand choice set is the same across time $X^{t}=\{l_1,l_2,l_3\}$. These lotteries are defined over the set of prizes $Z=\{0,10,30,50\}$ in tokens. The lotteries are $l_1=(1/2,0,0,0,1/2)$, $l_2=(0,1/2,1/2,0)$ and $l_3=\frac{1}{2}l_1+\frac{1}{2}l_2$. The binary menus are $\{l_1,l_2\},\{l_1,l_3\}$, and $\{l_2,l_3\}$ at every $t\in\mathcal{T}$. In the experiment, consumers face one of these three menus uniformly at random and in uniformly random order. No DM faces the same binary menu twice. That means that there are six menu paths. Note that the design ensures that the probability of facing any of the six menu paths is uniform across DMs. Payments are made at random for one of the choices made by the DMs. (For details of the payment, recruitment, and sample demographics see \citealp{ABBK23}.)  

Concerns about limited consideration are not first-order here because these binary menus were shown to DMs under a low-cost-of-attention treatment (see \citealp{ABBK23} for details). In addition, DMs already faced a task focused on varying the attention cost, as described in \citet{ABBK23}, before facing the binary-menu task. This means DMs are familiar with the lotteries when they face  the binary menus. In addition, we do not have concerns about measurement error due to the experimental design and discrete choice nature of the data. This means we can focus on testing deterministic rationality at the individual level versus DRUM without confounding due to limited consideration and measurement error. 
 
\subsection*{Testing Deterministic Rationality}  
We test deterministic rationality at the individual level. In particular, we check the Strong Axiom of Revealed Preference (SARP) which implies that it cannot be the case that $l_i\succ^* l_i'$, $l_i'\succ^* l_i''$,  and $l_i''\succ^* l_i$ with $l_i\succ^* l_i'$ defined as $l_i$ is picked out of $\{l_i,l_i'\}$). We observe that $92$ percent of DMs are consistent with SARP in our sample. Given that the power of this experiment to detect violations of SARP is low, we consider a $8$ percent rejection rate to represent a significant fraction of DMs. Formally, the whole sample of DMs is not consistent with (static) utility maximization.
 
\subsection*{Testing DRUM}  
 
We compute the sample analogue of $\rho$, $\hat{\rho}$, following the methodology described in KS and \citet{ABBK23}. We implement the statistical test described in KS and \citet{ABBK23} to test the null hypothesis that $\rho=A\nu$ for $\nu\geq 0$. $A$ is computed in two steps. First, we compute the matrix $A^t$ whose columns correspond to all possible linear orders on $X^t$ and whose rows correspond to all possible lotteries from each menu. Observe that $A^t$ is a square matrix of dimension $6$ for all $t\in\mathcal{T}$ such that $A_T=\otimes_{t=1}^3A^t$. In this application, we do not observe all possible choice paths since, by design, DMs never see repeated binary menus. Thus, $A$ is obtained by extracting the submatrix of $A_T$ that corresponds to observed menu paths ($6$ menu paths out of $27$). We cannot reject the null hypothesis of consistency with DRUM  ($p$-value is $0.72$).\footnote{The number of bootstrap samples used in every test is 999. We choose the tuning parameter $\tau_{N}$ following KS.} Our finding confirms that a sample can be consistent with DRUM even when a significant fraction of DMs is inconsistent with static utility maximization. We believe our results are compatible with the findings of \citet{kurtz2019neural} which document that DMs fail deterministic rationality (in a different choice domain) because they make mistakes in evaluating the utility of lotteries. These mistakes can be interpreted as a form of dynamic random taste shocks. Indeed, DMs may get fatigued or learn, so their  mistakes are correlated in time but with draws from the same distribution across choice paths, which are hence consistent with DRUM. 
To alleviate the concerns about the finite sample power of our test, we perform Monte Carlo experiments mimicking the setup of this application and find evidence that our test has high power even in small samples.

\subsection*{Testing the Expected Utility version of DRUM}  
  
\citet{kashaev2022randomrank} propose a methodology to test for the null hypothesis of consistency with the Expected Utility version of RUM. Their approach restricts the set of preference orders in the static test of RUM to those that are further consistent with expected utility maximization. We can apply the same idea to our dynamic setting to test the null hypothesis of consistency with the Expected Utility version of DRUM. That is, we test DRUM with the additional restriction that the set of utilities is consistent with Expected Utility. We reject the null hypothesis ($p$-value is less than $0.001$).

\section{Related Literature\label{sec: litreview}} 
DRUM was first introduced in \citet{straleckinotes} for abstract discrete choice domains. \citet{frick2019dynamic} offered an axiomatic characterization with decision trees and expected utility restrictions on stochastic utility processes.\footnote{Applying techniques from \citet{kashaev2022randomrank}, we provide a KS-type characterization for special DRUM cases with expected utility restrictions.} However, as far as we know, our work is the first to develop a BM-like characterization for DRUM. In finite abstract discrete choice spaces, two partial characterizations exist when the primitive is the joint distribution of choices across time and total menu variation.

\citet{li2021axiomatization} provides an BM-like axiomatic DRUM characterization  for any finite number of periods, and full menu variation, but with no more than three alternatives. \citet{chambers2021correlated} considers correlated choice: a joint distribution of choice on a pair of menus faced by two different DMs or a group. This model is mathematically equivalent to DRUM in the abstract domain, characterizing DRUM for a special case of two periods, an abstract and finite choice set, but with a uniqueness property for one DM's choice (i.e., one period has uniquely identified RUM). Our work subsumes and generalizes both \citet{li2021axiomatization} and \citet{chambers2021correlated}. Moreover, our results go beyond both \citet{li2021axiomatization} and \citet{chambers2021correlated}, and our general setup  also includes classical consumer choice domains with primitive orders, binary menus, and general limited menu variation. Our DRUM respects this primitive order and restricts utilities to be monotonic, while addressing limited observability of menus and menu paths.

Our work also contributes to the random exponential discounting literature, as in \citet{browning1989anonparametric}, by generalizing \citet{deb2017revealed}'s demand setup to a dynamic context. \citet{Apesteguia2022random} introduces a heterogeneous model with exponential discounting and time separability, differing in choice domains and being semiparametric, while our setup is nonparametric. \citet{lu2018random} examines exponential discounting with random discount factors and stochastic choices over consumption streams in the first period.

\citet{AK2021} studies panel setups with first-order-conditions approaches for some dynamic preferences, allowing for measurement error. However, their setup doesn't accommodate changing utility beyond discount factors or marginal utility of income. \citet{im2021non} investigates the McFadden-Richter framework and panel structure limitations, suggesting individual static rationality checks, like the Samuelson-Afriat framework. We generalize the Samuelson-Afriat framework, allowing utility changes over time while exploiting panel structure for more empirical implications.

DRUM allows for consumption to be correlated in time, similar to rational addiction in \citet{becker1988theory} (henceforth, Habits as Durables--HAD). However, the consumption correlation in DRUM is due solely to preference correlation over time. \citet{demuynck2013habits} provides a characterization of the short-memory habits model of \citet{becker1988theory} which is similar to \citet{afriat1967construction}'s theorem,  and applies it to a panel of Spanish households' choices. However, the HAD pass-rate is slightly above $50$ percent, subject to the same critique made for the traditional utility maximization problem. This revealed-preference test of HAD imposes no parametric restrictions on utilities besides monotonicity and concavity, making it more general than parametric structural work on HAD. Theoretically, the relationship between DRUM and HAD remains an open question since a RUM-like characterization of a stochastic generalization of HAD is unavailable.

Dynamic Discrete Choice (DDC) models, surveyed in \citet{aguirregabiria2010DDC}, are the most popular approach in studying discrete choice with attribute variation. \citet{frick2019dynamic} thoroughly explores the relationship between DRUM and DDC in their domain. Much of this analysis carries over to our domain. Specifically, in our setup with exogenously given menu paths, DDC is nested by DRUM. However, our focus differs from most work on DDC in that we emphasize nonparametric utilities, unrestricted heterogeneity, comparative statics, and counterfactual predictions, rather than identification and estimation in parametric settings.
No axiomatization of DDC has yet been arrived at, to our knowledge,  except for the special case of independent and identically distributed logit shocks in \citet{fudenberg2015dynamic}, so we cannot fully compare DDC with DRUM.

\section{Conclusion}\label{sec: conclusion}
In this paper, we have fully characterized the Dynamic Random Utility Model (DRUM), a new model of consumer behavior in which we observe a panel of choices from budget paths. In contrast to the static utility maximization framework, DRUM does not require the assumption that decision makers or consumers keep their preferences stable over time. This generality is essential because the static utility maximization framework often fails to explain the behavior of individuals. 
\par 
Our characterization works for any finite collection of choice paths in any finite time window. The characterization can be applied directly to existing panel consumption datasets using the statistical tools in \citet{kitamura2018nonparametric}. Moreover, our simple setup characterization showcases the fact that DRUM implies a richer set of behavioral restrictions on the panel of choices than does the Random Utility Model, alleviating some concerns about the empirical bite of the latter in a richer domain. These features position DRUM as in-between the Samuelson-Afriat framework and the McFadden-Richter framework as a result of combining their strengths and reducing their weaknesses.  
\par 
We have also introduced to economics a generalization of the Weyl-Minkowski theorem for cones. This result is the basis of a recursive characterization of DRUM in the demand and abstract choice setups. This new mathematical result will be helpful beyond DRUM to obtain analogous generalizations of bounded rational models of stochastic choice.

\bibliographystyle{ecca}
\bibliography{main.bib}

\begin{thebibliography}{71}
\providecommand{\natexlab}[1]{#1}

\bibitem[{Adams \textit{et~al.}(2015)Adams, Blundell, Browning and
  Crawford}]{adams2015prices}
\textsc{Adams, A.}, \textsc{Blundell, R.}, \textsc{Browning, M.} and
  \textsc{Crawford, I.} (2015). \textit{Prices versus preferences: taste change
  and revealed preference}. Tech. rep., IFS Working Papers.

\bibitem[{Afriat(1967)}]{afriat1967construction}
\textsc{Afriat, S.~N.} (1967). The construction of utility functions from
  expenditure data. \textit{International economic review}, \textbf{8}~(1),
  67--77.

\bibitem[{Afriat(1973)}]{afriat1973efficiency}
\textsc{---} (1973). On a system of inequalities in demand analysis: an
  extension of the classical method. \textit{International economic review},
  pp. 460--472.

\bibitem[{Aguiar \textit{et~al.}(2023)Aguiar, Boccardi, Kashaev and
  Kim}]{ABBK23}
\textsc{Aguiar, V.~H.}, \textsc{Boccardi, M.~J.}, \textsc{Kashaev, N.} and
  \textsc{Kim, J.} (2023). Random utility and limited consideration.
  \textit{Quantitative Economics}, \textbf{14}~(1), 71--116.

\bibitem[{Aguiar and Kashaev(2021)}]{AK2021}
\textsc{---} and \textsc{Kashaev, N.} (2021). Stochastic revealed preferences
  with measurement error. \textit{The Review of Economic Studies},
  \textbf{88}~(4), 2042--2093.

\bibitem[{Aguiar \textit{et~al.}(2022)Aguiar, Kashaev and
  Allen}]{AGUIAR2022PPPP}
\textsc{---}, \textsc{---} and \textsc{Allen, R.} (2022). Prices, profits,
  proxies, and production. \textit{Journal of Econometrics}.

\bibitem[{Aguiar and Serrano(2021)}]{Aguiar2021JMathE}
\textsc{---} and \textsc{Serrano, R.} (2021). Cardinal revealed preference:
  Disentangling transitivity and consistent binary choice. \textit{Journal of
  Mathematical Economics}, \textbf{94}, 102462.

\bibitem[{Aguirregabiria and Mira(2010)}]{aguirregabiria2010DDC}
\textsc{Aguirregabiria, V.} and \textsc{Mira, P.} (2010). Dynamic discrete
  choice structural models: A survey. \textit{Journal of Econometrics},
  \textbf{156}~(1), 38--67.

\bibitem[{Ahn \textit{et~al.}(2014)Ahn, Choi, Gale and
  Kariv}]{ahn2014estimating}
\textsc{Ahn, D.}, \textsc{Choi, S.}, \textsc{Gale, D.} and \textsc{Kariv, S.}
  (2014). Estimating ambiguity aversion in a portfolio choice experiment.
  \textit{Quantitative Economics}, \textbf{5}~(2), 195--223.

\bibitem[{Akesaka \textit{et~al.}(2021)Akesaka, Eibich, Hanaoka and
  Shigeoka}]{akesaka2021temporal}
\textsc{Akesaka, M.}, \textsc{Eibich, P.}, \textsc{Hanaoka, C.} and
  \textsc{Shigeoka, H.} (2021). Temporal instability of risk preference among
  the poor: Evidence from payday cycles. \textit{American Economic Journal:
  Applied Economics}, forthcoming.

\bibitem[{Allen and Rehbeck(2019)}]{allen2019identification}
\textsc{Allen, R.} and \textsc{Rehbeck, J.} (2019). Identification with
  additively separable heterogeneity. \textit{Econometrica}, \textbf{87}~(3),
  1021--1054.

\bibitem[{Apesteguia \textit{et~al.}(2022)Apesteguia, Ballester and
  Gutierrez-Daza}]{Apesteguia2022random}
\textsc{Apesteguia, J.}, \textsc{Ballester, M.~A.} and \textsc{Gutierrez-Daza,
  A.} (2022). Random discounted expected utility.

\bibitem[{Aubrun \textit{et~al.}(2021)Aubrun, Lami, Palazuelos and
  Pl{\'a}vala}]{aubrun2021entangleability}
\textsc{Aubrun, G.}, \textsc{Lami, L.}, \textsc{Palazuelos, C.} and
  \textsc{Pl{\'a}vala, M.} (2021). Entangleability of cones. \textit{Geometric
  and Functional Analysis}, \textbf{31}~(2), 181--205.

\bibitem[{Aubrun \textit{et~al.}(2022)Aubrun, M{\"u}ller-Hermes and
  Pl{\'a}vala}]{aubrun2022monogamy}
\textsc{---}, \textsc{M{\"u}ller-Hermes, A.} and \textsc{Pl{\'a}vala, M.}
  (2022). Monogamy of entanglement between cones. \textit{arXiv preprint
  arXiv:2206.11805}.

\bibitem[{Bandyopadhyay \textit{et~al.}(1999)Bandyopadhyay, Dasgupta and
  Pattanaik}]{bandyopadhyay1999stochastic}
\textsc{Bandyopadhyay, T.}, \textsc{Dasgupta, I.} and \textsc{Pattanaik, P.~K.}
  (1999). Stochastic revealed preference and the theory of demand.
  \textit{Journal of Economic Theory}, \textbf{84}~(1), 95--110.

\bibitem[{Bandyopadhyay \textit{et~al.}(2004)Bandyopadhyay, Dasgupta and
  Pattanaik}]{bandyopadhyay2004general}
\textsc{---}, \textsc{---} and \textsc{---} (2004). A general revealed
  preference theorem for stochastic demand behavior. \textit{Economic Theory},
  \textbf{23}~(3), 589--599.

\bibitem[{Becker and Murphy(1988)}]{becker1988theory}
\textsc{Becker, G.~S.} and \textsc{Murphy, K.~M.} (1988). A theory of rational
  addiction. \textit{Journal of political Economy}, \textbf{96}~(4), 675--700.

\bibitem[{Block and Marschak(1960)}]{block1960random}
\textsc{Block, H.} and \textsc{Marschak, J.} (1960). Random orderings and
  stochastic theories of responses". in i. olkin, s. ghurye, w. hoeffding, w.
  madow, and h. man (eds) contributions to probability and statistics, stanford
  university press.

\bibitem[{Blundell \textit{et~al.}(2008)Blundell, Browning and
  Crawford}]{blundell2008best}
\textsc{Blundell, R.}, \textsc{Browning, M.} and \textsc{Crawford, I.} (2008).
  Best nonparametric bounds on demand responses. \textit{Econometrica},
  \textbf{76}~(6), 1227--1262.

\bibitem[{Blundell \textit{et~al.}(2014)Blundell, Kristensen and
  Matzkin}]{blundell2014bounding}
\textsc{---}, \textsc{Kristensen, D.} and \textsc{Matzkin, R.} (2014). Bounding
  quantile demand functions using revealed preference inequalities.
  \textit{Journal of Econometrics}, \textbf{179}~(2), 112--127.

\bibitem[{Border(2007)}]{border2007introductory}
\textsc{Border, K.} (2007). Introductory notes on stochastic rationality.
  \textit{California Institute of Technology}.

\bibitem[{Brocas \textit{et~al.}(2019)Brocas, Carrillo, Combs and
  Kodaverdian}]{brocas2019consistency}
\textsc{Brocas, I.}, \textsc{Carrillo, J.~D.}, \textsc{Combs, T.~D.} and
  \textsc{Kodaverdian, N.} (2019). Consistency in simple vs. complex choices by
  younger and older adults. \textit{Journal of Economic Behavior \&
  Organization}, \textbf{157}, 580--601.

\bibitem[{Browning(1989)}]{browning1989anonparametric}
\textsc{Browning, M.} (1989). A nonparametric test of the life-cycle rational
  expections hypothesis. \textit{International Economic Review}, pp. 979--992.

\bibitem[{Cattaneo \textit{et~al.}(2020)Cattaneo, Ma, Masatlioglu and
  Suleymanov}]{cattaneo2020random}
\textsc{Cattaneo, M.~D.}, \textsc{Ma, X.}, \textsc{Masatlioglu, Y.} and
  \textsc{Suleymanov, E.} (2020). A random attention model. \textit{Journal of
  Political Economy}, \textbf{128}~(7), 2796--2836.

\bibitem[{Chambers \textit{et~al.}(2021)Chambers, Masatlioglu and
  Turansick}]{chambers2021correlated}
\textsc{Chambers, C.~P.}, \textsc{Masatlioglu, Y.} and \textsc{Turansick, C.}
  (2021). Correlated choice. \textit{arXiv preprint arXiv:2103.05084}.

\bibitem[{Cherchye \textit{et~al.}(2017)Cherchye, Demuynck, De~Rock and
  Vermeulen}]{cherchye2017household}
\textsc{Cherchye, L.}, \textsc{Demuynck, T.}, \textsc{De~Rock, B.} and
  \textsc{Vermeulen, F.} (2017). Household consumption when the marriage is
  stable. \textit{American Economic Review}, \textbf{107}~(6), 1507--1534.

\bibitem[{Choi \textit{et~al.}(2007)Choi, Fisman, Gale and
  Kariv}]{choi2007revealing}
\textsc{Choi, S.}, \textsc{Fisman, R.}, \textsc{Gale, D.~M.} and \textsc{Kariv,
  S.} (2007). Revealing preferences graphically: an old method gets a new tool
  kit. \textit{American Economic Review}, \textbf{97}~(2), 153--158.

\bibitem[{Choi \textit{et~al.}(2014)Choi, Kariv, M{\"u}ller and
  Silverman}]{choi2014more}
\textsc{---}, \textsc{Kariv, S.}, \textsc{M{\"u}ller, W.} and
  \textsc{Silverman, D.} (2014). Who is (more) rational? \textit{The American
  Economic Review}, \textbf{104}~(6), 1518--1550.

\bibitem[{Dasgupta and Pattanaik(2007)}]{dasgupta2007regular}
\textsc{Dasgupta, I.} and \textsc{Pattanaik, P.~K.} (2007). `regular'choice and
  the weak axiom of stochastic revealed preference. \textit{Economic Theory},
  \textbf{31}~(1), 35--50.

\bibitem[{de~Bruyn(2020)}]{jossede2020tensor}
\textsc{de~Bruyn, J. v.~D.} (2020). Tensor products of convex cones, part ii:
  Closed cones in finite-dimensional spaces. \textit{arXiv preprint
  arXiv:2009.11843}.

\bibitem[{Dean and Martin(2016)}]{dean2016measuring}
\textsc{Dean, M.} and \textsc{Martin, D.} (2016). Measuring rationality with
  the minimum cost of revealed preference violations. \textit{Review of
  Economics and Statistics}, \textbf{98}~(3), 524--534.

\bibitem[{Deb \textit{et~al.}(2021)Deb, Kitamura, Quah and
  Stoye}]{deb2017revealed}
\textsc{Deb, R.}, \textsc{Kitamura, Y.}, \textsc{Quah, J. K.-H.} and
  \textsc{Stoye, J.} (2021). Revealed price preference: Theory and statistical
  analysis. \textit{Review of Economic Studies}.

\bibitem[{Debreu and Scarf(1963)}]{debreu1963limit}
\textsc{Debreu, G.} and \textsc{Scarf, H.} (1963). A limit theorem on the core
  of an economy. \textit{International Economic Review}, \textbf{4}~(3),
  235--246.

\bibitem[{Demuynck and Verriest(2013)}]{demuynck2013habits}
\textsc{Demuynck, T.} and \textsc{Verriest, E.} (2013). I'll never forget my
  first cigarette: a revealed preference analysis of the ``habits as durables''
  model. \textit{International Economic Review}, \textbf{54}~(2), 717--738.

\bibitem[{Dogan and Yildiz(2022)}]{dogan2022every}
\textsc{Dogan, S.} and \textsc{Yildiz, K.} (2022). Every choice function is
  pro-con rationalizable. \textit{Operations Research}.

\bibitem[{Doherty \textit{et~al.}(2004)Doherty, Parrilo and
  Spedalieri}]{doherty2004complete}
\textsc{Doherty, A.~C.}, \textsc{Parrilo, P.~A.} and \textsc{Spedalieri, F.~M.}
  (2004). Complete family of separability criteria. \textit{Physical Review A},
  \textbf{69}~(2), 022308.

\bibitem[{Dridi(1980)}]{dridi1980sur}
\textsc{Dridi, T.} (1980). Sur les distributions binaires associes des
  distributions ordinales. \textit{Mathmatiques et Sciences Humaines},
  \textbf{69}, 15--31.

\bibitem[{Echenique \textit{et~al.}(2011)Echenique, Lee and
  Shum}]{echenique2011money}
\textsc{Echenique, F.}, \textsc{Lee, S.} and \textsc{Shum, M.} (2011). The
  money pump as a measure of revealed preference violations. \textit{Journal of
  Political Economy}, \textbf{119}~(6), 1201--1223.

\bibitem[{Falmagne(1978)}]{falmagne1978representation}
\textsc{Falmagne, J.-C.} (1978). A representation theorem for finite random
  scale systems. \textit{Journal of Mathematical Psychology}, \textbf{18}~(1),
  52--72.

\bibitem[{Fang \textit{et~al.}(2023)Fang, Santos, Shaikh and
  Torgovitsky}]{fang2023inference}
\textsc{Fang, Z.}, \textsc{Santos, A.}, \textsc{Shaikh, A.~M.} and
  \textsc{Torgovitsky, A.} (2023). Inference for large-scale linear systems
  with known coefficients. \textit{Econometrica}, \textbf{91}~(1), 299--327.

\bibitem[{Fishburn(1998)}]{fishburn1998stochastic}
\textsc{Fishburn, P.~C.} (1998). Stochastic utility. In S.~Barber\'a, P.~J.
  Hammond and C.~Seidl (eds.), \textit{Handbook of Utility Theory}, pp.
  272--311.

\bibitem[{Frick \textit{et~al.}(2019)Frick, Iijima and
  Strzalecki}]{frick2019dynamic}
\textsc{Frick, M.}, \textsc{Iijima, R.} and \textsc{Strzalecki, T.} (2019).
  Dynamic random utility. \textit{Econometrica}, \textbf{87}~(6), 1941--2002.

\bibitem[{Fudenberg and Strzalecki(2015)}]{fudenberg2015dynamic}
\textsc{Fudenberg, D.} and \textsc{Strzalecki, T.} (2015). Dynamic logit with
  choice aversion. \textit{Econometrica}, \textbf{83}~(2), 651--691.

\bibitem[{Gauthier(2018)}]{gauthier2018}
\textsc{Gauthier, C.} (2018). Nonparametric identification of discount factors
  under partial efficiency. \textit{Working paper}.

\bibitem[{Gauthier(2021)}]{gauthier2021}
\textsc{---} (2021). Price search and consumption inequality: Robust, credible,
  and valid inference. \textit{Working paper}.

\bibitem[{Guiso \textit{et~al.}(2018)Guiso, Sapienza and
  Zingales}]{guiso2018time}
\textsc{Guiso, L.}, \textsc{Sapienza, P.} and \textsc{Zingales, L.} (2018).
  Time varying risk aversion. \textit{Journal of Financial Economics},
  \textbf{128}~(3), 403--421.

\bibitem[{Halevy and Mayraz(2022)}]{halevy2022identifying}
\textsc{Halevy, Y.} and \textsc{Mayraz, G.} (2022). Identifying rule-based
  rationality. \textit{Review of Economics and Statistics}, pp. 1--44.

\bibitem[{Hoderlein and Stoye(2014)}]{hoderlein2014revealed}
\textsc{Hoderlein, S.} and \textsc{Stoye, J.} (2014). Revealed preferences in a
  heterogeneous population. \textit{Review of Economics and Statistics},
  \textbf{96}~(2), 197--213.

\bibitem[{Im and Rehbeck(2021)}]{im2021non}
\textsc{Im, C.} and \textsc{Rehbeck, J.} (2021). Non-rationalizable
  individuals, stochastic rationalizability, and sampling. \textit{Available at
  SSRN 3767994}.

\bibitem[{Kashaev and Aguiar(2022{\natexlab{a}})}]{RAUM2022random}
\textsc{Kashaev, N.} and \textsc{Aguiar, V.~H.} (2022{\natexlab{a}}). A random
  attention and utility model. \textit{Journal of Economic Theory},
  \textbf{204}, 105487.

\bibitem[{Kashaev and Aguiar(2022{\natexlab{b}})}]{kashaev2022randomrank}
\textsc{---} and \textsc{---} (2022{\natexlab{b}}). Random rank-dependent
  expected utility. \textit{Games}, \textbf{13}~(1), 13.

\bibitem[{Kawaguchi(2017)}]{kawaguchi2017testing}
\textsc{Kawaguchi, K.} (2017). Testing rationality without restricting
  heterogeneity. \textit{Journal of Econometrics}, \textbf{197}~(1), 153--171.

\bibitem[{Kitamura and Stoye(2018)}]{kitamura2018nonparametric}
\textsc{Kitamura, Y.} and \textsc{Stoye, J.} (2018). Nonparametric analysis of
  random utility models. \textit{Econometrica}, \textbf{86}~(6), 1883--1909.

\bibitem[{Kitamura and Stoye(2019)}]{kitamura2019nonparametric}
\textsc{---} and \textsc{---} (2019). Nonparametric counterfactuals in random
  utility models. \textit{arXiv preprint arXiv:1902.08350}.

\bibitem[{Kurtz-David \textit{et~al.}(2019)Kurtz-David, Persitz, Webb and
  Levy}]{kurtz2019neural}
\textsc{Kurtz-David, V.}, \textsc{Persitz, D.}, \textsc{Webb, R.} and
  \textsc{Levy, D.~J.} (2019). The neural computation of inconsistent choice
  behavior. \textit{Nature communications}, \textbf{10}~(1), 1583.

\bibitem[{Li(2021)}]{li2021axiomatization}
\textsc{Li, R.} (2021). An axiomatization of stochastic utility. \textit{arXiv
  preprint arXiv:2102.00143}.

\bibitem[{Lu and Saito(2018)}]{lu2018random}
\textsc{Lu, J.} and \textsc{Saito, K.} (2018). Random intertemporal choice.
  \textit{Journal of Economic Theory}, \textbf{177}, 780--815.

\bibitem[{McCausland \textit{et~al.}(2020)McCausland, Davis-Stober, Marley,
  Park and Brown}]{mccausland2020testing}
\textsc{McCausland, W.~J.}, \textsc{Davis-Stober, C.}, \textsc{Marley, A.~A.},
  \textsc{Park, S.} and \textsc{Brown, N.} (2020). Testing the random utility
  hypothesis directly. \textit{The Economic Journal}, \textbf{130}~(625),
  183--207.

\bibitem[{McFadden and Richter(1990)}]{mcfadden1990stochastic}
\textsc{McFadden, D.} and \textsc{Richter, M.~K.} (1990). Stochastic
  rationality and revealed stochastic preference. \textit{Preferences,
  Uncertainty, and Optimality, Essays in Honor of Leo Hurwicz, Westview Press:
  Boulder, CO}, pp. 161--186.

\bibitem[{McFadden(2005)}]{mcfadden2005revealed}
\textsc{McFadden, D.~L.} (2005). Revealed stochastic preference: a synthesis.
  \textit{Economic Theory}, \textbf{26}~(2), 245--264.

\bibitem[{Norets and Tang(2014)}]{norets2014semiparametric}
\textsc{Norets, A.} and \textsc{Tang, X.} (2014). Semiparametric inference in
  dynamic binary choice models. \textit{Review of Economic Studies},
  \textbf{81}~(3), 1229--1262.

\bibitem[{Porter and Adams(2016)}]{porter2016love}
\textsc{Porter, M.} and \textsc{Adams, A.} (2016). For love or reward?
  characterising preferences for giving to parents in an experimental setting.
  \textit{The Economic Journal}, \textbf{126}~(598), 2424--2445.

\bibitem[{Ray and Robson(2018)}]{ray2018certified}
\textsc{Ray, D.} and \textsc{Robson, A.} (2018). Certified random: A new order
  for coauthorship. \textit{American Economic Review}, \textbf{108}~(2),
  489--520.

\bibitem[{Saito(2017)}]{saito2017axiomatizations}
\textsc{Saito, K.} (2017). Axiomatizations of the mixed logit model.

\bibitem[{Samuelson(1938)}]{samuelson1938note}
\textsc{Samuelson, P.~A.} (1938). A note on the pure theory of consumer's
  behaviour. \textit{Economica}, \textbf{5}~(17), 61--71.

\bibitem[{Smeulders \textit{et~al.}(2021)Smeulders, Cherchye and
  De~Rock}]{smeulders2021nonparametric}
\textsc{Smeulders, B.}, \textsc{Cherchye, L.} and \textsc{De~Rock, B.} (2021).
  Nonparametric analysis of random utility models: computational tools for
  statistical testing. \textit{Econometrica}, \textbf{89}~(1), 437--455.

\bibitem[{Stoye(2019)}]{stoye2019revealed}
\textsc{Stoye, J.} (2019). Revealed stochastic preference: A one-paragraph
  proof and generalization. \textit{Economics Letters}, \textbf{177}, 66--68.

\bibitem[{Strzalecki(2021)}]{straleckinotes}
\textsc{Strzalecki, T.} (2021). \textit{Stochastic Choice}. Mimeo.

\bibitem[{Turansick(2022)}]{turansick2022identification}
\textsc{Turansick, C.} (2022). Identification in the random utility model.
  \textit{Journal of Economic Theory}, \textbf{203}, 105489.

\bibitem[{Varian(1982)}]{varian1982nonparametric}
\textsc{Varian, H.~R.} (1982). The nonparametric approach to demand analysis.
  \textit{Econometrica: Journal of the Econometric Society}, pp. 945--973.

\bibitem[{Varian(1984)}]{varian1984nonparametric}
\textsc{---} (1984). The nonparametric approach to production analysis.
  \textit{Econometrica: Journal of the Econometric Society}, pp. 579--597.

\end{thebibliography}
\appendix
\section{Proofs}\label{app: proofs}
\subsection{Proof of Lemma~\ref{lem: demandisdrum}}
We adapt the proof of Theorem~$3.1$ in KS for RUM for the dynamic case.  Our proof uses profiles of nonstochastic demand. 
For each time period $t\in \mathcal{T}$, we define nonstochastic demand types as in KS: $(\theta^t_{1},\cdots,\theta^t_{J^t})\in B^t_{1}\times\cdots\times B^t_{J^t}$. This system of types is rationalizable if $\theta^t_{j}\in \argmax_{y\in B^t_{j}}u^t(y)$ for $j=1,\cdots,J^t$ for some utility function $u^t$. 
\par
Then, we form any given nonstochastic demand profile by stacking up the demand types in a budget path $\rand{j}$ as $\theta_{\rand{j}}=(\theta^t_{j_t})_{j_t\in \rand{j}}$.  
\par 
Fix $\rho$. For a fixed $t\in\mathcal{T}$, let the set $\mathcal{Y}^{**}_t$ collect the geometric center point of each patch. Let $\rho^{**}$ be the unique dynamic stochastic demand system concentrated on $\mathcal{Y}^{**}_t$ for all $t\in \mathcal{T}$. KS established that demand systems can be arbitrarily perturbed  within patches in a given time period $t$ such that $\rho$ is rationalizable by DRUM if and only if $\rho^{**}$ is. It follows that the rationalizability of $\rho$ can be decided by checking whether there exists a mixture of nonstochastic demand profiles supported on $\mathcal{Y}^{**}_t$ for all $t\in\mathcal{T}$.
\par 
Since we have assumed a finite number of budgets and time periods, there will be a finite number of budget paths. That is, using our notation, we have $\abs{\rand{J}}$ budget paths. Also, because $\mathcal{Y}^{**}_t$ is finite for all $t\in\mathcal{T}$, there are finitely many nonstochastic demand profiles. Noting that these demand profiles are characterized by binary vector representations corresponding to columns of $A_T$, the statement of the theorem follows immediately. 

\subsection{Proof of Theorem~\ref{thm:main}}
To prove $(i)\iff(ii)\iff(iii)$, we adapt the proof of Theorem~$3.1$ in KS for RUM for the dynamic case.  Our proof uses nonstochastic linear order profiles. Then, up to this redefinition $(i)\iff(ii)\iff(iii)$ follows from their results. 

The proof of $(i)\implies (iv)$ follows from \citet{border2007introductory}. The proof of $(iv) \implies (i)$ is analogous to the proof for the case of RUM in \citet{border2007introductory} and \citet{kawaguchi2017testing}. We just need to replace the system of equations in those proofs with the one we describe in Theorem~\ref{thm:main} (ii). The rest of the proof follows from Farkas' lemma.

\subsection{Proof of Proposition~\ref{thm:weylmiknowskyrecursive}}
For completeness we provide here the proof of Proposition~\ref{thm:weylmiknowskyrecursive}.
Let $L_T=\otimes_{t=1}^TL^t$ and $K_T=\otimes_{t=1}^TK^t$. Note that for any $v$, $z$ and $\otimes_{t=1}^TK^t$ such that $\left(\otimes_{t=1}^TK_t\right)v=z$ is well-defined, we can construct $V$ and $Z$ such that columns of $V$ and $Z$ are subvectors\footnote{A vector $x$ is a subvector of $y=(y_j)_{j\in J}$, if $x=(y_j)_{j\in J'}$ for some $J'\subseteq J$.} of $v$ and $z$ and
\[
\left(\otimes_{t=1}^TK^t\right)v=z \iff K^T V \left(\otimes_{t=1}^{T-1}K^t\right)\tr=Z.
\]
Recall that by definition, $L^tK^tv\geq0$ for all $v\geq0$. Hence,
\begin{align*}
    &\forall v\geq0,\:L^1K^1v\geq0\implies \forall V\geq0,\: L^2K^2 V (L^1K^1)\tr\geq0 \iff \\
    &\forall v\geq0,\: (L^1K^1\otimes L^2K^2)v\geq0 \implies \forall V\geq0,\: L^3K^3 V (L^1K^1\otimes L^2K^2)\tr\geq0\iff\\
    &\forall v\geq0,\: (\otimes_{t=1}^3 L^tK^t)v\geq0 \implies \forall V\geq0,\: L^4K^4 V (\otimes_{t=1}^3 L^tK^t)\tr\geq0\implies\\
    &\dots \implies \forall v\geq0,\: (\otimes_{t=1}^T L^tK^t)v\geq0 \iff \forall v\geq0,\: L_TK_Tv\geq0.
\end{align*}
Hence,
\[
\{K_Tv\::\:v\geq0\}\subseteq \{z\::\:L_Tz \geq0\}.
\]

\subsection{Proof of Theorem~\ref{thm:sufficiencyminimaltensorequalmaximaltensor}}
The proof of the first statement follows directly from Theorem~1 in \citet{aubrun2022monogamy}. The proof of the moreover statement follows from Corollary~4 in \citet{aubrun2021entangleability}.
\par 
For self-containment we prove here directly the sufficiency in the moreover statement. Without loss of generality, assume that $K^t$ has full column rank for all $t$ except maybe $t=1$. By assumptions of the theorem, $K^t$ is proper and, thus, of full row rank for all $t$. Hence, $K_T$ has full row rank and we can represent any $z$ as a weighted sum of columns of $K_T$ (some weights may be negative). That is, $\{v\::\:K_Tv=z\}$ is nonempty for any $z$. Take any $z$ such that $L_Tz\geq 0$. We want to show that there exists $v\geq0$ such that $K_Tv=z$. Towards a contradiction, assume that $v\not\geq 0$ for all $v\in\{v\::\:K_Tv=z\}$. Take any $v\in\{v\::\:K_Tv=z\}$. Then
\begin{align*}
    &L_TK_Tv\geq0\implies L^TK^T V (L_{T-1}K_{T-1})\tr\geq0,
\end{align*}
where $V$ is constructed the same way it was constructed in the proof of Proposition~\ref{thm:weylmiknowskyrecursive}.
Since $K^T$ is invertible (full row and column rank), we observe that $V (L_{T-1}K_{T-1})\tr\geq0$. Take any row of $V$ that has a negative component and call the transpose of this row $v$. Then
\begin{align*}
    &L_{T-1}K_{T-1}v\geq0\implies L^{T-1}K^{T-1} V (L_{T-2}K_{T-2})\tr\geq0.
\end{align*}
Hence, $L_{T-1}K_{T-1}V\tr\geq0$. Repeating this step finitely many times we end up having $L^1K^1v\geq0$ for some $v\not\geq 0$. Since $K^1$ may not have full column rank, we only can conclude that there exists $v^*\geq0$ such that $K^1v=K^1v^*$. We can construct such $v^*\geq0$ for all possible subvectors of the original $v$. Combine these $v^*$s into $\bar{v}\geq0$. By definition, $K_T\bar{v}=z $ and $\bar{v}\geq0$. The latter is not possible since it was assumed that $v\not\geq 0$ for all $v\in\{v\::\:K_Tv=z\}$. The contradiction completes the proof.

\subsection{Proof of Theorem~\ref{thm: WM stable rho}}
First we show necessity of stability. By definition of DRUM, there exists a distribution over $\mathcal{U}$, $\mu$, such that
\[
\rho\left(\left(x_{i_t|j_t}\right)_{t\in \mathcal{T}}\right)=\int \prod_{t\in \mathcal{T}} \Char{\argmax_{y\in B^t_{j_t}}u^t(y)= x^t_{i_t|j_t}}d\mu(u)
\]
for all $\rand{i},\rand{j}$. Fix some $t'\in\mathcal{T}$, $x_{\rand{i}|\rand{j}}$, and $j_{t'}\in\mathcal{J}^{t'}$. Note that 
\begin{align*}
&\sum_{i\in\mathcal{I}^{t'}_{j_{t'}}}\rho\left(x_{\rand{i}|\rand{j}}\right)=\\
&\sum_{i\in \mathcal{I}_{j_{t'}}^{t'}} \int\Char{\argmax_{y\in B^{t'}_{j_{t'}}}u^{t'}(y)= x^{t'}_{i|j_{t'}}} \prod_{t\in \mathcal{T}\setminus\{t'\}} \Char{\argmax_{y\in B^t_{j_t}}u^t(y)= x^t_{i_t|j_t}}d\mu(u)=\\
&\int\sum_{i\in \mathcal{I}_{j_{t'}}^{t'}} \Char{\argmax_{y\in B^{t'}_{j_{t'}}}u^{t'}(y)= x^{t'}_{i|j_{t'}}} \prod_{t\in \mathcal{T}\setminus\{t'\}} \Char{\argmax_{y\in B^t_{j_t}}u^t(y)= x^t_{i_t|j_t}}d\mu(u)=\\
&\int \prod_{t\in \mathcal{T}\setminus\{t'\}} \Char{\argmax_{y\in B^t_{j_t}}u^t(y)= x^t_{i_t|j_t}}d\mu(u),
\end{align*}
where the last equality follows from $\argmax_{y\in B^{t'}_{j_{t'}}}u^{t'}(y)$ being a singleton and $\{x^{t'}_{i|j_{t'}}\}_{i\in\mathcal{I}_{j_{t'}}^{t'}}$ being a partition. The right-hand side of the last expression does not depend on the choice of $j_{t'}$. Stability follows from $t'$ and $x_{\rand{i}|\rand{j}}$ being arbitrary. 

Next, we show that any stable $\rho$ belongs to a linear span of columns of $A_T$. That is, the system $A_Tv=\rho$ always has a solution and the cone generated by $A_T$ is proper when restricted to stable $\rho$. Hence, Theorem~\ref{thm: WM stable rho} follows from Theorem~\ref{thm:sufficiencyminimaltensorequalmaximaltensor}. 

Before we formally show the existence of a solution, let us first replicate the proof in the simple setup with 2 time periods. Recall that 
\[
A^t=\left(\begin{array}{ccc}
     1&1&0  \\
     0&0&1  \\
     1&0&0  \\
     0&1&1  
\end{array}\right).
\]
First, construct the matrix $A^{t*}$ by removing the last row from $A^t$. That is, for every budget except the first one (the first two rows), remove the row that corresponds to the last patch of that budget (rows 3 and 4 correspond to the second budget). As a result,
\[
A^{t*}=\left(\begin{array}{ccc}
     1&1&0  \\
     0&0&1  \\
     1&0&0  \\
\end{array}\right).
\]
Put the removed row in the matrix $A^{t-}$. That is, $A^{t-}=(0\: 1\: 1)$. Note that $A^{t-}=G^tA^{t^*}$, where $G^{t}=(1\: 1\: -1)$. Moreover, $A_T=A^1\otimes A^2$ (see Table 3) can be partitioned into the matrix $A_T^{*}$ that contains the rows generated by rows of $A^{1*}$ and $A^{2*}$ ($A_T^*=A^{1*}\otimes A^{2*}$), and the matrix $A_T^{-}$ that contains the rest of the rows. That is,
\begin{align*}
    A_T=\left(\begin{array}{c}
     A^{1*}\\
     A^{1-}
\end{array}\right)\otimes\left(\begin{array}{c}
     A^{2*}\\
     A^{2-}
\end{array}\right)=\left(\begin{array}{c}
     A^{1*}\otimes A^{2*} \\
     A^{1*}\otimes A^{2-} \\
     A^{1-}\otimes A^{2*} \\
     A^{1-}\otimes A^{2-} 
\end{array}\right)=\left(\begin{array}{c}
     A_T^{*}\\
     A_T^{-}
\end{array}\right).
\end{align*}
Let $\rho^*$ and $\rho^{-}$ be the parts of $\rho$ that correspond to rows of $A_T^{*}$ and $A^{-}_T$. Since every element of $\rho$ corresponds to some choice path, $\rho^*$ does not contain choice paths that contain either $x^{1}_{2|2}$ or $x^{2}_{2|2}$ (we removed one row from $A^1$ and one row from $A^2$). Similarly, $\rho^{-}$ contains all choice paths where at least in one time period $t$ a patch was removed from $A^t$.

Note that $A^{t*}$, $t\in\mathcal{T}$, has full row rank. Hence, $A^*_T$, as a Kronecker product of full row rank matrices, is of full row rank as well. Thus, $v^*=A_T^{*\prime}(A_T^{*}A_T^{*\prime})^{-1}\rho^*$ exists and solves $A^*_Tv=\rho^*$. If we show that $A_T^{-}v^*=\rho^{-}$, then $v^*$ solves $A_Tv=\rho$ as well. Note that, 
\begin{align*}
&A^{1*}\otimes A^{2-} v^*=\left(A^{1*}\otimes G^2A^{2*} \right)v^*=
\left(\begin{array}{ccc}
     G^2&0&\dots\\
     0&G^2&\dots\\
     \dots&\dots&\dots\\
     \dots&0&G^2\\
\end{array}\right)\left(A^{1*}\otimes A^{2*}\right)v^*=diag(G^2)\rho^*,
\end{align*}
where $diag(L)$ is a block-diagonal matrix with matrix $L$ being on the main diagonal. The vector $\rho^*$ has 9 elements with the first 3 elements corresponding to choice paths that have $x^1_{1|1}$ and all possible patches that were not removed from $t=2$. That is, the first 3 elements of $\rho^*$ are $\rho\left(\left(x^1_{1|1},x^2_{1|1}\right)\right)$, $\rho\left(\left(x^1_{1|1},x^2_{2|1}\right)\right)$, and $\rho\left(\left(x^1_{1|1},x^2_{1|2}\right)\right)$ (the patch $x^2_{2|2}$ was removed). Thus, the first element of $diag(G^2)\rho^*$ is
\[
\rho\left(\left(x^1_{1|1},x^2_{1|1}\right)\right)+\rho\left(\left(x^1_{1|1},x^2_{2|1}\right)\right) - \rho\left(\left(x^1_{1|1},x^2_{1|2}\right)\right)=\rho\left(\left(x^1_{1|1},x^2_{2|2}\right)\right),
\]
where the equality follows from stability of $\rho$. Similarly, the second element of $diag(G^2)\rho^*$ is
\[
\rho\left(\left(x^1_{2|1},x^2_{1|1}\right)\right)+\rho\left(\left(x^1_{2|1},x^2_{2|1}\right)\right) - \rho\left(\left(x^1_{2|1},x^2_{1|2}\right)\right)=\rho\left(\left(x^1_{2|1},x^2_{2|2}\right)\right),
\]
and the third element is $\rho\left(\left(x^1_{1|2},x^2_{2|2}\right)\right)$. So $v^*$ solves the equations with only $x^2_{2|2}$ dropped.
 
Next, consider $A^{1-}\otimes A^{2*} v^*$. Note that all objects we work with (e.g., $A_T$ and $A_T^*$) are defined as a function of $\mathcal{T}$. Hence, if we push the time period $t$ to the very end (i.e., $1,\dots,t-1,t+1,\dots,T,t$), we still can define all objects for the new order of time labels. Let $W^t$ (with inverse $W^{t,-1}$, which pushes the last element of $\mathcal{T}$ to $t$-th position) be a transformation that recomputes all objects for the time span where $t$ is pushed to the end. For example, $W^1$ pushes the label $t=1$ to the end of $\mathcal{T}$ (i.e., $\mathcal{T}$ becomes $\{2,1\}$). Transformation $W^t$ satisfies the following three properties: $W^t[C]=C$ if $C$ does not depend on $\mathcal{T}$; $W^t[CD]=W^t[C]W^t[D]$ for any matrices $C$ and $D$; and $W^t[\otimes_{t'\in\mathcal{T}}A^{t'*}]=\otimes_{t'\in\mathcal{T}\setminus\{t\}}A^{t'*}\otimes A^{t*}$. Hence,
\begin{align*}
&A^{1-}\otimes A^{2*} v^*=W^{1,-1}\left[W^{1}\left[\left(A^{1-}\otimes A^{2*} \right)v^*\right]\right]=
W^{1,-1}\left[\left(A^{2*}\otimes A^{1-} \right)W^1\left[v^*\right]\right]=\\
&W^{1,-1}\left[diag(G^1)\left(A^{2*}\otimes A^{1*}\right)W^1\left[v^*\right]\right]=
W^{1,-1}\left[diag(G^1)W^{1}\left[W^{1,-1}\left[\left(A^{2*}\otimes A^{1*}\right)W^1\left[v^*\right]\right]\right]\right]=\\
&W^{1,-1}\left[diag(G^1)W^{1}\left[\left(A^{1*}\otimes A^{2*}\right)v^*\right]\right]=
W^{1,-1}\left[diag(G^1)W^{1}\left[\rho^*\right]\right].
\end{align*}
In words, $W^{1}\left[\rho^*\right]$ changes labels so that $t=1$ is the last one and reshuffles elements of $\rho^*$, then $diag(G^1)W^{1}\left[\rho^*\right]$ computes probabilities of choice paths where $x^2_{2|2}$ were dropped. Finally, $W^{1,-1}$ returns the original labeling. So the result is the subvector of $\rho^{-}$ where $x^1_{2,2}$ is dropped (relabelling changes $x^2_{2|2}$ to $x^1_{2|2}$). So $v^*$ solves the equations where only $x^1_{2|2}$ is dropped. 

Let $Y^{t}$ be an operator such that $Y^{t}[\cdot]=W^{t,-1}\left[diag(G^{t}) W^{t}[\cdot]\right]$. That is, $Y^t$ pushes $t$ to the end, multiplies the resulting object by $diag(G^t)$ and then pushes label $t$ back to its spot. Using operator $Y^t$ we can deduce that
\begin{align*}
&A^{1-}\otimes A^{2-} v^*=Y^1\left[Y^2\left[\rho^*\right]\right]=\\
&\rho\left(\left(x^1_{2|2},x^2_{1|1}\right)\right)+\rho\left(\left(x^1_{2|2},x^2_{2|1}\right)\right)-\rho\left(\left(x^1_{2|2},x^2_{1|2}\right)\right)=\rho\left(\left(x^1_{2|2},x^2_{2|2}\right)\right),
\end{align*}
where the last equality follows from stability of $\rho$.
Hence, the equation where both $x^1_{2|2}$ and $x^2_{2|2}$ were dropped is also solved by $v^*$. 
\par
Next, we generalize the above arguments for arbitrary $T$ and $A^t$. Consider the following modification of $A^t$, $t\in\mathcal{T}$. From every menu, except the first one, we pick the last alternative and remove the corresponding row from $A^t$. Let $A^{t*}$ denote the resulting matrix. Thus, matrix $A^t$ can be partitioned into $A^{t*}$ and $A^{t-}$, where rows of $A^{t-}$ correspond to alternatives removed from $A^t$. Consider the first row of $A^{t-}$. It corresponds to the last alternative from the second menu at time $t$. Note that the sum of all rows that correspond to the same menu is equal to the row of ones. Hence, the first row of $A^{t-}$ is equal to the sum of the rows that correspond to menu $1$ minus the sum of the remaining rows in menu $2$. That is, the first row of $A^{t-}$ can be written as
\[
(1,\dots,1,-1,\dots,-1,0,\dots,0)A^{t*}.
\]
Similarly, the second row of $A^{t-}$ can be written as
\[
(1,\dots,1,0,\dots,0,-1,\dots,-1,0,\dots,0)A^{t*}.
\]
In matrix notation, we can rewrite $A^{t-}$ as $A^{t-}=G^tA^{t*}$, where $G^t$ is the matrix with the $k$-th row having the elements that correspond to the alternatives from the first menu at time $t$ are equal to $1$, the elements that correspond to the alternatives from the $k$-th menu are equal to $-1$, and the rest of elements are equal to $0$. 

Next note that, up to a permutation of rows, $A_T$ can be partitioned into $A_T^*=\otimes_{t\in\mathcal{T}}A^{t*}$ and matrices of the form $\otimes_{t\in\mathcal{T}}C^t$, where $C^t\in\{A^{t*},A^{t-}\}$, with $C^t=A^{t-}$ for at least one $t$. We will stack all these matrices into $A_T^{-}$. Next, let $\rho^*$ denote the subvector of $\rho$ that corresponds to choice paths that do not contain any of the alternatives removed from $A^t$, $t\in\mathcal{T}$. Thus, $\rho=(\rho^{*\prime},\rho^{-\prime})\tr$, where $\rho^{-}$ corresponds to all elements of $\rho$ that contain at least one of the removed alternatives. As a result, we can split the original system into two: $A^*_Tv=\rho^*$ and $A^{-}_Tv=\rho^{-}$.

Consider the system $A^*_Tv=\rho^*$. We formally prove later that $A^{t*}$ has full row rank for all $t$.  Then $A^*_T$ is also of full row rank and, hence, $A^*_TA^{*\prime}$ is invertible and $v^*=A^{*\prime}\left(A^*_TA^{*\prime}\right)^{-1}\rho^*$ solves the system. If we show that
\[
A^{-}_Tv^*=\rho^{-},
\]
then we prove that $A_Tv=\rho$ always has a solution, which will complete the proof. 

Note that $A^{-}_T$ consists of the blocks of the form $\otimes_{t\in\mathcal{T}}C^t$, where $C^t\in\{A^{t*},A^{t-}\}$ and $C^t=A^{t-}$ for at least one $t$. 
Next note that for any $A$, $B$, and $C$
\[
A\otimes(BC)=diag(B)(A\otimes C),
\]
where $diag(B)$ is the block-diagonal matrix constructed from $B$. Indeed,
\begin{align*}
    A\otimes(BC)=\left(\begin{array}{ccc}
         A_{11}BC&A_{12}BC& \dots \\
         A_{21}BC&A_{22}BC&  \dots\\
         \dots & \dots& \dots 
    \end{array}\right)=\left(\begin{array}{ccc}
         B&0& \dots \\
         0&B&  \dots\\
         \dots & \dots&B 
    \end{array}\right)\left(\begin{array}{ccc}
         A_{11}C&A_{12}C& \dots \\
         A_{21}C&A_{22}C&  \dots\\
         \dots & \dots& \dots 
    \end{array}\right) \\
    =diag(B)(A\otimes C)
\end{align*}

First, consider $\otimes_{t\in\mathcal{T}}C^t$, where $C^t\in\{A^{t*},A^{t-}\}$ and $C^t=A^{t-}$ for only one $t$. Hence,
\begin{align*}
&\otimes_{t'\in\mathcal{T}}C^{t'}v^*=
W^{t,-1}\left[diag(G^t)W^{t}\left[\rho^*\right]\right]=Y^t[\rho^*].
\end{align*}
Note that because $\rho$ is stable, $diag(G^{T})\rho^*$ is the subvector of $\rho^{-}$ that corresponds to choice paths that contain one of the removed alternatives from the last period only. So, $W^{t}\left[\rho^*\right]$ first pushes the period $t$ to the very end, then $diag(G^t)W^{t}\left[\rho^*\right]$ computes the elements of $\rho^-$, and finally $W^{t,-1}\left[diag(G^t)W^{t}\left[\rho^*\right]\right]$ moves the time period $t$ back to its place.

Next, consider $\otimes_{t\in\mathcal{T}}C^t$, where $C^t\in\{A^{t*},A^{t-}\}$ and $C^t=A^{t-}$ and $C^{t'}=A^{t'-}$  for two distinct $t,t'$. Similarly to the previous case,
\begin{align*}
&\otimes_{t'\in\mathcal{T}}C^{t'}v^*=
W^{t,-1}\left[diag(G^t)W^{t}\left[W^{t',-1}\left[diag(G^{t'})W^{t'}\left[\rho^*\right]\right]\right]\right]=Y^t[Y^{t'}[\rho^*]]=Y^t\circ Y^{t'}[\rho^*],
\end{align*}
where $Y^t\circ Y^{t'}$ denotes the composite operator.
Again, $W^{t',-1}\left[diag(G^{t'})W^{t'}\left[\rho^*\right]\right]$ computes the subvector of $\rho^-$ that corresponds to choice paths where an alternative from only one time $t'$ was missing. Applying to the resulting vector $W^{t,-1}\left[diag(G^t)W^{t}\left[\cdot\right]\right]$ computes the subvector of $\rho^{-}$ with alternatives missing from $t$ and $t'$ only. Repeating the arguments for all possible rows of $A_T^{-}$, we obtain that 
\begin{align*}
&\otimes_{t'\in\mathcal{T}}C^{t'}v^*=\circ_{t':C^{t'}=A^{t'-}}Y^{t'}[\rho^*]
\end{align*}
and, thus,
$A^{-}_T v^*=\rho^{-}$. Hence, $v^*$ is a solution to $A_Tv=\rho$. 

It is left to show that $A_t^{*}$ is a       full row rank matrix for all $t$. To do so, we first prove the same result for a more general version of static RUM with ``virtual'' budgets introduced in Section~\ref{sec: BM charachterization}.

Let $\bar{\mathcal{R}}^{t}$ be the set of all linear orders on $\mathbf{X}^t$. For any $j_t\in\bar{\mathcal{J}}^t,i_t\in\mathcal{I}^t_{j_t}$, and $\succ\in\bar{\mathcal{R}}^t$ let
\[
\bar{a}^t_{\succ}=\left(\Char{x^t_{i_t|j_t}\succ x,\:\forall x\in B^t_{j_t}}\right)_{j_t\in\bar{\mathcal{J}}^t,i_t\in\mathcal{I}^t_{j_t}}
\]
be the vector of 0s and 1s that denote the best patch in every virtual budget. Analogously to $A^t$, let $\bar{A}^t$ denote the matrix which columns are $\left\{a^t_{\succ}\right\}_{\succ\in\bar{\mathcal{R}}^t}$, and $\bar{A}^{t*}$ be the matrix constructed from $\bar{A}^t$ by removing the rows that correspond to the last patch in every budget but the first one.

\begin{lemma}\label{lemma: BM full rank}
	$\bar{A}^{t*}$ has full row rank.
\end{lemma}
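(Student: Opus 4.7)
The plan is to reduce the claim to a statement about the left kernel of the unreduced matrix $\bar{A}^t$, and then establish that statement by induction on $n = |X^t|$. Concretely, for each menu $j$, the rows of $\bar{A}^t$ indexed by $(i, j)$, $i \in B^t_j$, are disjoint-support indicators whose sum is the all-ones row; this gives $2^n - 2$ independent ``adding-up'' relations obtained by taking differences between menus. The first step is to show these are the \emph{only} linear relations among the rows of $\bar{A}^t$, i.e., that any left-kernel vector $c$ satisfies $c_{i, j} = f(j)$ for some $f$ with $\sum_j f(j) = 0$.

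Given that characterization, the lemma follows immediately: if $c^\top \bar{A}^{t*} = 0$, extend $c$ to $\tilde c$ on all of $\bar{A}^t$ by setting $\tilde c = 0$ on the removed rows. Then $\tilde c^\top \bar{A}^t = 0$, so $\tilde c_{i,j} = f(j)$. The vanishing of $\tilde c$ on the removed row of each menu $j \neq 1$ forces $f(j) = 0$ for all $j \neq 1$, and then $\sum_j f(j) = 0$ forces $f(1) = 0$, giving $c = 0$.

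For the inductive step ($n = 1$ being trivial), I would fix an arbitrary $x^* \in X^t$ and restrict the constraint $c^\top \bar{A}^t = 0$ to the $(n{-}1)!$ linear orders with $x^*$ at the top. For such $\succ$, every menu $B \ni x^*$ has $\max_\succ B = x^*$, while for $B \subseteq X^t \setminus \{x^*\}$ the max is determined by $\succ|_{X^t \setminus \{x^*\}}$. Collecting terms, the restricted constraint reduces to
\[
\bar c^\top \bar{A}^{t'} = -C^{*} \cdot \mathbf{1}, \qquad C^{*} = \sum_{B \ni x^*} c_{x^*, B},
\]
where $\bar c$ is the restriction of $c$ to pairs $(i, B')$ with $B' \subseteq X^t \setminus \{x^*\}$ nonempty, $\bar{A}^{t'}$ is the analogous matrix for $X^t \setminus \{x^*\}$, and $\mathbf{1} \in \mathbb{R}^{(n-1)!}$. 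Writing $\mathbf{1} = c^{0\top} \bar{A}^{t'}$ via $c^{0}_{i, X^t \setminus \{x^*\}} = 1$ and $0$ elsewhere, and applying the inductive hypothesis to $\bar c + C^{*} c^{0}$, I obtain that $c_{i, B'}$ depends only on $B'$ for every $B' \subseteq X^t \setminus \{x^*\}$. Since every proper subset $B \subsetneq X^t$ is contained in $X^t \setminus \{x^*\}$ for some $x^* \notin B$, varying $x^*$ gives $c_{i, B} = f(B)$ for all $B \subsetneq X^t$. To cover $B = X^t$, I would use a pairwise-swap argument: comparing orders $\succ$ and $\succ'$ that differ only by swapping the top two elements $i, i'$, all menu contributions cancel except the $X^t$ term and the terms for $B \subsetneq X^t$ with $\{i,i'\} \subseteq B$; the latter vanish by what we just proved, so $c_{i, X^t} = c_{i', X^t}$. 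Finally, inserting $c_{i, B} = f(B)$ into the original constraint yields $\bigl(\sum_B f(B)\bigr)\mathbf{1} = 0$, i.e., $\sum_B f(B) = 0$, completing the induction.

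The main obstacle will be the careful bookkeeping in the inductive step: precisely identifying how the constraint restricted to top-$x^*$ orders transforms into $\bar c^\top \bar{A}^{t'} = -C^{*}\mathbf{1}$, including the handling of singletons and the pair $(x^*, \{x^*\})$, and verifying that the auxiliary vector $c^{0}$ correctly absorbs the constant shift so that the inductive hypothesis applies verbatim. The swap argument for $B = X^t$ also requires a degenerate-case check when $n = 2$ (handled directly from the two explicit orders).
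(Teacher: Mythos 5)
Your proof is correct, but it takes a genuinely different route from the paper's. The paper's proof is a short deduction from an external result: it invokes Corollary~2 of \citet{saito2017axiomatizations} and Theorem~2 of \citet{dogan2022every}, which give that every stochastic choice function on the full (virtual) menu domain is a possibly signed mixture of linear orders; after rescaling and splitting an arbitrary vector into positive and negative parts, this says the column span of $\bar{A}^t$ is the whole ``stable'' subspace $\{v:\sum_{i\in B_j}v_{i,j}\ \text{independent of}\ j\}$, and full row rank of $\bar{A}^{t*}$ follows because any target vector on the retained coordinates extends to a stable vector once the deleted coordinates are freed. You instead prove the dual statement from scratch: the left kernel of $\bar{A}^t$ consists exactly of the adding-up relations $c_{i,j}=f(j)$ with $\sum_j f(j)=0$, by induction on $|X^t|$ --- restricting the kernel equation to the $(n-1)!$ orders with a fixed top element $x^*$, absorbing the constant $C^*=\sum_{B\ni x^*}c_{x^*,B}$ via the auxiliary vector $c^0$ so the inductive hypothesis applies on $X^t\setminus\{x^*\}$, varying $x^*$ to cover all proper menus, and using the top-two-swap comparison for the grand menu. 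I checked the bookkeeping you flag: the treatment of singletons and of the row $(x^*,\{x^*\})$ (which simply enters $C^*$), the identity $\mathbf{1}^\top=c^{0\top}\bar{A}^{t'}$, the $n=2$ degenerate case of the swap argument, and the final reduction (extend a left-kernel vector of $\bar{A}^{t*}$ by zero on the deleted rows, conclude $f(j)=0$ for $j\neq1$ and then $f(1)=0$ from $\sum_jf(j)=0$) all go through. What your approach buys is self-containment --- it amounts to a direct proof of the signed Block--Marschak decomposition result the paper cites --- at the cost of length and the inductive bookkeeping; the paper's version buys brevity at the cost of outsourcing the key step.
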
   
\begin{proof}
Take $\mathcal{T}=\{t\}$. By Corollary~2 in \citet{saito2017axiomatizations} or Theorem~2 in \citet{dogan2022every}, for any $\bar{\rho}\geq0$ such that the sum over patches in any budget is equal to 1, there exists $\nu$ such that
\[
\bar{A}^t\nu=\bar{\rho}.
\]
Since $\bar{A}^t\alpha \nu=\alpha \bar{A}^t\nu=\alpha \bar{\rho}$ for any $\alpha\in\Real$, $\bar{A}^t\nu$ can be any positive or any negative vector such that the sum over all patches in each budget does not depend on a budget (i.e. satisfies stability). Moreover, since any vector can be written as the sum of a positive and a negative vector, $\bar{A}^t\nu$ can represent any $\bar{\rho}$ such that sums over budgets are budget independent. Hence, if we remove the last row from every budget except the first one, we obtain that for any vector $\bar{\rho}^*$, there exists $\nu$ such that $\bar{A}^{t*}\nu=\bar{\rho}^{*}$. Thus, $\bar{A}^{t*}$ is of full row rank. Indeed, if it was not, then there would exist $\xi\neq0$ such that $\xi\tr \bar{A}^{t*}\nu=0\cdot \nu=0$ for all $\nu$. Therefore, we would have $\xi\tr \bar{A}^{t*}\nu=\xi\tr \bar{\rho}^*=0$ for all $\bar{\rho}^*$. This can only be true if $\xi = 0$, contradicting $\xi\neq0$.
\end{proof}

\subsection{Proof of Theorem~\ref{thm:DRUM2x2}}

\textbf{Necessity.} Suppose that $\rho$ is rationalized by DRUM.

\textbf{Necessity of stability.} Follows from Theorem~\ref{thm: WM stable rho}.

\noindent\textbf{Necessity of $\mathrm{D}$-monotonicity.} Note that $A_T=A^{1}\otimes A_{T-1}$, where  
\[
A^1=\left( 
\begin{array}{ccc}
    1 & 1 & 0  \\
    0 & 0 & 1  \\
    1 & 0 & 0  \\
    0 & 1 & 1 
\end{array}
\right).
\]
Note that, since in every time period there are only $2$ budgets, if $\rho$ is rationalized by DRUM, then by Theorem~\ref{thm:main}, there exists component-wise nonnegative $\nu$ (i.e. $\nu\geq 0$) such that
$
A_T\nu=\rho.
$
We next show that this $\nu\geq 0$ together with stability, which we already showed to be satisfied, implies $\mathrm{D}$-monotonicity. 

First, note that we can partition $\nu$ into $3$ vectors ($\nu^1_1$, $\nu^1_2$, and $\nu^1_{3}$) and $\rho$ into $4$ vectors ($\rho^1_{1|1}$, $\rho^1_{2|1}$, $\rho^1_{1|2}$, and $\rho^1_{2|2}$) such that
\[
\left( \begin{array}{c}
    \rho^1_{1|1}\\
    \rho^1_{2|1}\\
    \rho^1_{1|2}\\
    \rho^1_{2|2}
\end{array}
\right)=\rho=A_T\nu=A_1\otimes A_{T-1}\nu=\left( \begin{array}{ccc}
    A_{T-1} & A_{T-1} & 0 \\
    0 & 0 & A_{T-1}  \\
    A_{T-1} & 0 & 0 \\  
    0 & A_{T-1} & A_{T-1}  
\end{array}
\right) \left( \begin{array}{c}
    \nu^1_{1}\\
    \nu^1_{2}\\
    \nu^1_{3}
\end{array}
\right)=\left( \begin{array}{c}
    A_{T-1}(\nu^1_{1}+\nu^1_{2})\\
    A_{T-1}\nu^1_{3}\\
    A_{T-1}\nu^1_{1}\\
    A_{T-1}(\nu^1_{2}+\nu^1_{3})
\end{array}
\right).
\]
In this representation, $\rho^1_{i|j}$ correspond to all choice paths that contain patch $x^1_{i|j}$. Subtracting the third line from the first one, and the second line from the fourth one in the last system of equations, we obtain that
\begin{align*}
    \rho^1_{1|1}-\rho^1_{1|2}=\rho^1_{2|2}-\rho^1_{2|1}=A_{T-1}\nu^1_2\geq 0,
\end{align*}
where the last inequality follows from $\nu\geq 0$ and $A_{T-1}$ consisting of zeros and ones. Thus, $\mathrm{D}\left(x^1_{i'_1|j'_1}\right)\left[\rho(x_{\rand{i}|\rand{j}})\right]\geq 0$ if $x^1_{i_1'|j_1'}>^D x^1_{i_1|j_1}$. 

Applying the above arguments to $\rho^1_{1|1}-\rho^1_{1|2}=A_{T-1}\nu^1_2$, we obtain that 
\begin{align*}
    (\rho^1_{1|1,1|1}-\rho^1_{1|2,1|1})-(\rho^1_{1|1,1|2}-\rho^1_{1|2,1|2})=(\rho^1_{2|2,2|2}-\rho^1_{2|1,2|2})-(\rho^1_{2|2,2|1}-\rho^1_{2|1,2|1})=A_{T-2}\nu^2_2\geq 0,
\end{align*}
where $\rho^1_{i|j,i'|j'}$ corresponds to all choice paths that contain patches $x^1_{i|j}$ and $x^2_{i'|j'}$. Thus, $\mathrm{D}\left(x^2_{i_2'|j_2'}\right)\mathrm{D}\left(x^1_{i_1'|j_1'}\right)\left[\rho(x_{\rand{i}|\rand{j}})\right]\geq 0$ if $x^1_{i_1'|j_1'}>^D x^1_{i_1|j_1}$ and $x^2_{i_2'|j_2'}>^D x^2_{i_2|j_2}$. Repeating these steps we can get that for all $K\leq T$
\[
\mathrm{D}\left(x^K_{i_K'|j_K'}\right)\dots\mathrm{D}\left(x^2_{i_2'|j_2'}\right)\mathrm{D}\left(x^1_{i_1'|j_1'}\right)\left[\rho(x_{\rand{i}|\rand{j}})\right]\geq 0
\]
if $x^t_{i_t'|j_t'}>^D x^t_{i_t|j_t}$ for all $t=1,\dots,K$. Note that for any permutation of time periods the matrix $A_T$ does not change. Hence, the above steps can be performed for any permutation of $x_{\rand{i}|\rand{j}}$ and $\mathrm{D}$-monotonicity is satisfied.   

\textbf{Sufficiency.} Assume that $\rho$ is stable and $\mathrm{D}$-monotone. 
Define $H_L=H^{1}\otimes H_{L-1}$ and $P_{A_T}=P_{A_1}\otimes P_{A_{T-1}}$, where 
\[
H_1=(A_1\tr A_1)^{-1}A_1\tr=\left( \begin{array}{cccc}
    0.25 & 0.25 & 0.75 & -0.25  \\
    0.5 & -0.5 & -0.5 & 0.5  \\
    -0.25 & 0.75 & 0.25 & 0.25  
\end{array}
\right)
\]
and
\[
P_{A_1}=A_1(A_1\tr A_1)^{-1}A_1\tr=\left( \begin{array}{cccc}
    0.75 & -0.25 & 0.25 & 0.25  \\
    -0.25 & 0.75 & 0.25 & 0.25  \\
    0.25 & 0.25 & 0.75 & -0.25  \\
    0.25 & 0.25 & -0.25 & 0.75  
\end{array}
\right).
\]
If we show that $\nu=H_T\rho$ satisfies (i) $\nu\geq 0$ and (ii) $A_T\nu=\rho$, then by Theorem~\ref{thm:main} $\rho$ is rationalized by DRUM.  

\textbf{Step 1: $\nu\geq 0$.} Note that
\[
\left( \begin{array}{c}
    \nu^1_{1}\\
    \nu^1_{2}\\
    \nu^1_{3}
\end{array}
\right)=\nu=H_T \rho=H_1\otimes H_{T-1}\rho=\left( \begin{array}{cccc}
    0.25H_{T-1} & 0.25H_{T-1} & 0.75H_{T-1} & -0.25H_{T-1}  \\
    0.5H_{T-1} & -0.5H_{T-1} & -0.5H_{T-1} & 0.5H_{T-1}  \\
    -0.25H_{T-1} & 0.75H_{T-1} & 0.25H_{T-1} & 0.25H_{T-1}  
\end{array}
\right) \left( \begin{array}{c}
    \rho^1_{1|1}\\
    \rho^1_{2|1}\\
    \rho^1_{1|2}\\
    \rho^1_{2|2}
\end{array}
\right).
\]
Applying stability (i.e., $\rho^1_{1|2}+\rho^1_{2|2}=\rho^1_{1|1}+\rho^1_{2|1}$), we can conclude that
\[
\left( \begin{array}{c}
    \nu^1_{1}\\
    \nu^1_{2}\\
    \nu^1_{3}
\end{array}
\right)=\left( \begin{array}{c}
    H_{T-1}\rho^1_{1|2}\\
    H_{T-1}(\rho^1_{1|1}-\rho^1_{1|2})\\
    H_{T-1}\rho^1_{2|1}
\end{array}
\right).
\]
If we next apply the above steps to $\nu_1^1=H_{T-1}\rho^1_{1|2}$, then we can obtain that
\[
\left( \begin{array}{c}
    \nu^1_{11}\\
    \nu^1_{12}\\
    \nu^1_{13}
\end{array}
\right)=\left( \begin{array}{c}
    H_{T-2}\rho^1_{1|2,1|2}\\
    H_{T-2}(\rho^1_{1|1,1|2}-\rho^1_{1|2,1|2})\\
    H_{T-2}\rho^1_{1|2,2|1}
\end{array}
\right),
\]
where $\rho^1_{i|j,i'|j'}$ corresponds to all choice paths that contain patches $x^1_{i|j}$ and $x^2_{i'|j'}$. If, instead, we apply it to $\nu_2^1=H_{T-1}(\rho^1_{1|1}-\rho^1_{1|2})$, then we obtain
\[
\left( \begin{array}{c}
    \nu^1_{21}\\
    \nu^1_{22}\\
    \nu^1_{23}
\end{array}
\right)=\left( \begin{array}{c}
    H_{T-2}(\rho^1_{1|1,1|2}-\rho^1_{1|2,1|2})\\
    H_{T-2}((\rho^1_{1|1,1|1}-\rho^1_{1|2,1|1})-(\rho^1_{1|1,1|2}-\rho^1_{1|2,1|2}))\\
    H_{T-2}(\rho^1_{1|1,2|1}-\rho^1_{1|2,2|1})
\end{array}
\right).
\]
Repeating the above steps $T$ times, we obtain that every component of $\nu$ is either equal to $\rho\left(\left(x^t_{1|2}\right)_{t\in\mathcal{T}}\right)\geq 0$, or $\rho\left(\left(x^t_{2|1}\right)_{t\in\mathcal{T}}\right)\geq 0$, or
\[
\mathrm{D}\left(x^{\rand{t}}_{\rand{i}'|\rand{j}'}\right)\left[\rho\left(x_{\rand{i}|\rand{j}}\right)\right]
\geq 0,
\]
for some $\rand{t}\in\rands{\mathcal{T}}$ and some $x_{\rand{i}|\rand{j}}$. 
The last inequality follows from  $x^t_{i_t'|j_t'}>^D x^t_{i_t|j_t}$ for all $t\in \rand{t}$ and $\mathrm{D}$-monotonicity. Hence, the proposed $\nu$ is nonnegative.

\textbf{Step 2: $A_T\nu=\rho$.} Note that
\[
A_T\nu=P_{A_T}\rho=\left( \begin{array}{cccc}
    0.75P_{A_{T-1}} & -0.25P_{A_{T-1}} & 0.25P_{A_{T-1}} & 0.25P_{A_{T-1}}  \\
    -0.25P_{A_{T-1}} & 0.75P_{A_{T-1}} & 0.25P_{A_{T-1}} & 0.25P_{A_{T-1}}  \\
    0.25P_{A_{T-1}} & 0.25P_{A_{T-1}} & 0.75P_{A_{T-1}} & -0.25P_{A_{T-1}}  \\
    0.25P_{A_{T-1}} & 0.25P_{A_{T-1}} & -0.25P_{A_{T-1}} & 0.75P_{A_{T-1}}  
\end{array}
\right) \left( \begin{array}{c}
    \rho^1_{1|1}\\
    \rho^1_{2|1}\\
    \rho^1_{1|2}\\
    \rho^1_{2|2}
\end{array}
\right).
\]
Since stability implies that $\rho^1_{1|1}+\rho^1_{2|1}=\rho^1_{1|2}+\rho^1_{2|2}$, we obtain
\[
A_T\nu=P_{A_T}\rho=\left( \begin{array}{c}
    P_{A_{T-1}}\rho^1_{1|1}\\
    P_{A_{T-1}}\rho^1_{2|1}\\
    P_{A_{T-1}}\rho^1_{1|2}\\
    P_{A_{T-1}}\rho^1_{2|2}
\end{array}
\right).
\]
Repeating the above step one more time we obtain that 
\[
P_{A_{T-1}}\rho^1_{i|j}=\left( \begin{array}{c}
    P_{A_{T-2}}\rho^1_{i|j,1|1}\\
    P_{A_{T-2}}\rho^1_{i|j,2|1}\\
    P_{A_{T-2}}\rho^1_{i|j,1|2}\\
    P_{A_{T-2}}\rho^1_{i|j,2|2}
\end{array}
\right),
\]
where $i,j\in\{1,2\}$. Repeating the above steps $T$ times for each subvector, we obtain
\[
P_{A_T}\rho=\rho.
\]
Hence, $A\nu=\rho$.
\subsection{Proof of Theorem~\ref{thm: static RUM-BM}}
\begin{proof}
    \emph{(i) implies (ii).}
    If $\rho$ is consistent with RUM, then there exists an increasing random utility function $u^t$ distributed according to $\mu$ such that $\mu(\argmax_{y\in B^t_{j_t}}u^t(y)= x_{i_t|j_t})=\rho(x_{i_t|j_t})$ for all $j_t\in\mathcal{J}^t$ and $i_t\in\mathcal{I}^t_{j_t}$. Using this random $u^t$ we can extend $\rho$ to $\bar{\rand{J}}^t$, so the BM inequalities are satisfied and the constructed $\bar{\rho}$ agrees with $\rho$. It is left to show that $\bar{\rho}$ is IU-consistent. Towards a contradiction, assume that there exists a menu, $B^t_{j_t}$, and $x_{i_t|j_t}$ in it such that $\bar{\rho}(x_{i_t|j_t})>0$, and that for all $x_{i_t|j_t}$ there exists some  $S\subseteq B^t_{j_t}$  such that $S >^{t} x_{i_t|j_t}$. This is impossible since $u^t$ is assumed to be a monotone function on $>^{t}$, so no monotone function would choose a point in $x_{i_t|j_t}$ when better points are available in other patches. This contradiction completes the proof.
    \par 
    \emph{(ii) implies (i).}
    Let $\bar{\mathcal{R}}^t$ be the set of linear orders on $X^t$. By the result in \citet{falmagne1978representation}, we know that there is a $\nu\in\Delta\left(\bar{\mathcal{R}}^t\right)$ such that 
    \[
    \overline{\rho}(x_{\rand{i}|\rand{j}})=\sum_{\succ\in \bar{\mathcal{R}}^t}\nu(\succ)\Char{x_{\rand{i}|\rand{j}}\succ y_{\rand{i}|\rand{j}}\quad \forall y_{\rand{i}|\rand{j}}, \rand{i}\in\rand{I}_{\rand{j}}}.
    \]
    It only remains to show that the stochastic demand generated by the mixture of linear orders on the extended set of menus assigns a zero measure to linear orders that are not extensions of  $>^{t}$. Since $\rho$ is IU-consistent, $\nu(\succ)=0$ for any $\succ \in \bar{\mathcal{R}}^t$ that is not an extension of the order $>^{t}$. To show this is true, we prove the contrapositive. Namely, if $\nu(\succ)>0$ for some $\succ \in \bar{\mathcal{R}}^t$ that is not an extension of the order $>^{t}$, then there exist $S,S'\subseteq \mathbf{X}^t$ such that $S>^{t}S'$ yet there is a $y\in S'
    $ such that $y\succ x$ for all $x\in S$. Thus, IU-consistency fails for the virtual budget $\{y,x\}$.  
    \par 
    \emph{(ii) is equivalent to (iii).} The statement follows from the definition of matrix $\bar{H}^t$.
\end{proof}
\subsection{Proof of Theorem~\ref{thm: DRUM-BM}}
\begin{proof}
    \emph{(i) implies (ii).}
    Direct from arguments analogous to those made in Theorem~\ref{thm: static RUM-BM}, Theorem~\ref{thm:weylmiknowskyrecursive}, and Theorem~\ref{thm: WM stable rho}. 
    \par
    \emph{(ii) implies (i).} 
    We break the proof into two steps. 
    \par 
    First step. Let $\bar{\mathcal{R}}$ be the set of linear order profiles in $\times_{t\in\mathcal{T}}\mathbf{X}^t$, with typical element $(\succ^t)_{t\in\mathcal{T}}$. For any $\bar{\rho}$ such that satisfy $
        \bar{\rho}\in \bigcap_{k_1,\cdots,k_T\geq1}\left\{\Gamma^{\rands{\phi}^*\prime}_{\rand{k}}z\::\:\left(\otimes_{t\in\mathcal{T}}\bar{H}^{t,\otimes_{k_t}}\right)z\geq0\right\}$ and  stability, we can use the results in Theorem~\ref{thm:sufficiencyminimaltensorequalmaximaltensor} and the results in Theorem~\ref{thm: WM stable rho}, to ensure that there exists a $\nu\in \Delta(\bar{\mathcal{R}})$ such that
    \[
    \overline{\rho}(x_{\rand{i}|\rand{j}})=\sum_{(\succ^t)_{t\in\mathcal{T}}\in \mathcal{R}^*}\nu((\succ^t)_{t\in\mathcal{T}})\Char{x^t_{i_t|j_t}\succ^t y\quad \forall y\in B^t_{j_t}\quad\forall t\in\mathcal{T}}.
    \]
    Since $\rho$ is IU-consistent, $\nu((\succ^t)_{t\in\mathcal{T}})=0$ for any $(\succ^t)_{t\in\mathcal{T}} \in \bar{\mathcal{R}}$ that contains some element $\succ^t$ that is not an extension of the  order $>^{t}$. To show this is true, we prove the contrapositive. Namely, if $\nu((\succ^t)_{t\in\mathcal{T}})>0$ for some $(\succ^t)_{t\in\mathcal{T}} \in \bar{\mathcal{R}}$ that is not an extension of the order $>^{t}$, then there exist nonempty $S,S'\subseteq \mathbf{X}^t$ such that $S>^{t}S'$ yet for an element $y\in S'$ $y\succ x$ for all $x\in S$. Thus, IU-consistency fails for the virtual budget path that contains the budget $\{y,x\}$, for any selection of $x\in S$, at time $t$. 
\end{proof}
\subsection{Proof of Proposition~\ref{prop:constantDRUMimpliesSARP}}
\begin{proof}
We provide here the proof of Proposition~\ref{prop:constantDRUMimpliesSARP}. Assume towards a contradiction that $\rho$ is rationalized by DRUM with $\mu$ that satisfies constancy and SARPD is violated for some $\rand{j}$. Hence, there exist some $y^{t_1}$, $y^{t_N}$, and some $u\in U$ such that $u\left(y^{t_1}_{i_t|j_t}\right)>u\left(x^{t_{N}}_{i_{t_n}|j_{t_n}}\right)$. However, the violation of SARPD implies that  $u\left(x^{t_1}_{i_{t_1}|j_{t_1}}\right)>u\left(x^{t_1}_{i_{t_1}|j_{t_1}}\right)$ which is impossible. SARPD rules out the possibility that some individuals in the population violate SARP.  When constancy is relaxed, we need to obtain cross-sectional variation (i.e., more than one budget path) to test DRUM.     
\end{proof}

\subsection{Proof of Proposition~\ref{prop:DRUMimpliesmarginalRUM}}
\begin{proof}

Let, $\rho^{-t}((x_{i_{\tau}|j_{\tau}}^{\tau})_{\tau\in\mathcal{T}\setminus{\{t\}}})=\left(\sum_{i\in\mathcal{I}^t_j}\rho\left(x_{\rand{i}|\rand{j}}\right)\right)$. We also define the vector 
\[
\rho^{-\tau}=(\rho((x_{i_{t}|j_{t}}^{t})_{t\in\mathcal{T}\setminus\{\tau\}}))_{\mathbf{j}\in\mathbf{J},\mathbf{i}\in\mathbf{I}_{j}}.
\]
Note that $\rho^{-1}$ is of the same length that $\rho^1_{i|j}$ for any patch $x_{i|j}^1$. We let $\mathcal{R}_t$ be the set of linear orders at time $t\in\mathcal{T}$. The scalar $a_{t,r_t,i_k,j_k}$ is the entry of matrix $A_t$ for column corresponding to $r_t$ and row corresponding to $i_k,j_k$.
\begin{lemma} If the vector representation of $P$, $\rho$, is consistent with DRUM, then for every finite sequence of patches (including repetitions), $k$, $\{(i_k,j_k)\}$ such that $j_k\in\mathcal{J}^t$ and $i_k\in\mathcal{I}_{{j}_k}^t$ 
\[
\sum_{k}\rho^1_{i_k|j_k}\leq \rho^{-1} \max_{r_t\in\mathcal{R}_t}\sum_{k}a_{t,r_t,i_k,j_k}.
\]  
\end{lemma}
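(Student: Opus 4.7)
The plan is to deduce the inequality directly from the DRUM integral representation, working coordinate by coordinate in the vector $\rho^1_{i|j}$. Without loss of generality I fix $t=1$ (the argument is symmetric in time). Each component of $\rho^1_{i|j}$ is indexed by a configuration of choices in the other periods, so I fix such a configuration $(x^{\tau}_{i_{\tau}|j_{\tau}})_{\tau\neq 1}$ together with its indices and prove the corresponding scalar inequality; doing this for every configuration yields the vector inequality.

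The first step is to rewrite the coordinate in question via the DRUM definition and swap sum with integral:
\[
\sum_k \rho^1_{i_k|j_k}\!\bigl[(x^{\tau}_{i_{\tau}|j_{\tau}})_{\tau\neq 1}\bigr]
= \int \Bigl(\sum_k \Char{\arg\max_{y\in B^1_{j_k}} u^1(y) = x^1_{i_k|j_k}}\Bigr) \prod_{\tau\neq 1} \Char{\arg\max_{y\in B^{\tau}_{j_{\tau}}} u^{\tau}(y) = x^{\tau}_{i_{\tau}|j_{\tau}}}\, d\mu(u).
\]
The second step is the pointwise counting bound: for each $u\in\mathcal{U}$, the component $u^1$ induces a strict rational order $r_1(u)\in\mathcal{R}_1$, so by the definition of the matrix entries,
\[
\sum_k \Char{\arg\max_{y\in B^1_{j_k}} u^1(y) = x^1_{i_k|j_k}} \;=\; \sum_k a_{1,r_1(u),i_k,j_k} \;\leq\; \max_{r\in \mathcal{R}_1} \sum_k a_{1,r,i_k,j_k}.
\]

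The third step is to factor this constant bound out of the integral and identify the remaining integral as the corresponding component of $\rho^{-1}$:
\[
\sum_k \rho^1_{i_k|j_k}\!\bigl[(x^{\tau}_{i_{\tau}|j_{\tau}})_{\tau\neq 1}\bigr] \;\leq\; \Bigl(\max_{r\in \mathcal{R}_1} \sum_k a_{1,r,i_k,j_k}\Bigr) \int \prod_{\tau\neq 1} \Char{\arg\max_{y\in B^{\tau}_{j_{\tau}}} u^{\tau}(y) = x^{\tau}_{i_{\tau}|j_{\tau}}}\, d\mu(u),
\]
and the remaining integral equals $\rho^{-1}\!\bigl[(x^{\tau}_{i_{\tau}|j_{\tau}})_{\tau\neq 1}\bigr]$ because, as shown in the necessity argument for stability in Theorem~\ref{thm: WM stable rho}, summing the time-$1$ indicator over the partition $\{x^1_{i|j_1}\}_{i\in\mathcal{I}^1_{j_1}}$ of $B^1_{j_1}$ collapses that factor to one.

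The argument is essentially bookkeeping; the main thing to keep straight is that the inequality is componentwise in the index $(x^{\tau}_{i_{\tau}|j_{\tau}})_{\tau\neq 1}$, and that the budget path $\rand{j}_k$ varies with $k$ only in its first coordinate while the other $j_{\tau}$'s stay fixed inside any given component. Conceptually, this lemma is just ADSRP from Theorem~\ref{thm:main} applied conditionally on a fixed realization of the non-$t$ choices and then integrated against the marginal of those choices, so no new mathematical ingredient beyond Fubini and the pointwise counting bound is needed.
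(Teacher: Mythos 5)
Your proof is correct and follows exactly the route the paper intends: the paper states this lemma without writing out a proof, merely remarking that it is the ASRP of McFadden--Richter applied to the conditional distribution, and your argument (fix the non-period-$1$ configuration, swap the finite sum with the integral, bound the integrand pointwise by $\max_{r}\sum_k a_{1,r,i_k,j_k}$ via the realized order $r_1(u)$, and identify the leftover integral with the component of $\rho^{-1}$ using the partition/stability argument) is precisely the standard derivation being invoked. No gaps.
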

The condition above implies the fact that marginals, conditionals are consistent with RUM. Assume that $\rho$ is interior (i.e., rule out zero probabilities on choice paths), then the condition above implies that the marginal probability 
\[
\rho((x_{i_t|j_t}^t)_{t\in\mathcal{T}})|(x_{i_{\tau}|j_{\tau}}^{\tau})_{\tau \in\mathcal{T}\setminus{\{1\}}})=\frac{\rho((x_{i_t|j_t}^t)_{t\in\mathcal{T}})}{\rho((x_{i_{\tau}|j_{\tau}}^{\tau})_{\tau \in\mathcal{T}\setminus{\{1\}}})},
\] 
is consistent with (static) RUM. In that case the condition above is just the ASRP of \citet{mcfadden1990stochastic}. It is easy to see that the same reasoning can be done recursively and for any permutation of time, so all conditional probabilities of choice, as defined above, are consistent with (static) RUM if the vector representation $\rho$ is consistent with DRUM.    
\end{proof}

\section{Monte Carlo Simulations: Statistical Test of DRUM}\label{appendix: montecarlo}

Here we provide a Monte Carlo study to evaluate the performance of KS's test when applied to DRUM in finite samples. We consider both the demand setup and the binary menus setup. 
\subsection{Power Analysis: Demand Setup}
We consider the simple setup with $K = T = J^t = 2$. We set the number of DMs per choice path to ${N_{\rand{i}|\rand{j}}} \in \{50,500,5000\}$ and the number of simulations for each data generating process (DGP) to $1000$. The critical value for each test statistic is computed using $999$ bootstrap samples. As recommended in KS, the tuning parameter $\tau_{N}$ is set to $\tau_{N} = \sqrt{\log(4N_{\rand{i}|\rand{j}})/4N_{\rand{i}|\rand{j}}}$ (given that there are four choice paths in every budget path, $4N_{\rand{i}|\rand{j}}$ is the sample size of each budget path).

First, we consider a dynamic random Cobb-Douglas utility model. The utility function is given by
\begin{equation*}
    u_t(y_t) = y_{1,t}^{\alpha_t} y_{2,t}^{1-\alpha_t},
\end{equation*}
where $\alpha_t \in (0,1)$. Budgets in both periods are the same and correspond to prices $(2,1)\tr$ and $(1,2)\tr$ with an expenditure of $1$. We consider two DGPs for random $\alpha=(\alpha_1,\alpha_2)\tr$. 
\begin{align*}
    &\text{DGP1: } \alpha_1\sim U[0,1];\quad \alpha_{2} = \max\{ \min\{ 0.9\alpha_{1} + \epsilon_{1},1\},0\},\: \epsilon_{1}\sim N(0,25)\\
    &\text{DGP2: } \alpha_t=\mathrm{arctan}(\varepsilon_t)/\pi+1/2,\: t=1,2;\quad \varepsilon=(\varepsilon_1,\varepsilon_2)\tr\sim N(0,V)
\end{align*}
where 
\[
V=\left(\begin{array}{cc}
     1&0.5  \\
     0.5&1 
\end{array}\right).
\] 
Both DGPs are consistent with DRUM. The rejection rates at the $5$ percent significance level for all three sample sizes and both DGPs are presented in Table~\ref{table: null}.
\begin{table}[htbp]
\centering
\begin{tabular} {l|l|ccp{2cm}p{2cm}p{2cm}p{2cm}p{2cm}p{2cm}} DGP & $N_{i|j}$ & Rejection rate, \% \\ \hline
\multirow{2}{*}{DGP1} & 50 & 3.4 \\
& 500 & 4.3 \\
& 5000 & 5.1 \\ \hline
\multirow{ 2}{*}{DGP2} & 50 & 3.7  \\
& 500 & 4.6 \\
& 5000 & 5.4   \\
\end{tabular}
\caption{Every entry represents the rejection rate at the $5$ percent significance level and is computed from $1000$ simulations and $999$ bootstraps per simulation.} \label{table: null}
\end{table}
The rejection rates are close to $5$ percent even for small sample sizes. To analyze the finite sample power of the test, we consider the DGP described in Table~\ref{tab:marginalRumNotDrum}. Recall that this $\rho$ fails both $\mathrm{D}$-monotonicity and stability. The rejection rate is $100$ percent for all sample sizes. It is remarkable that $\rho$ in Table~\ref{tab:marginalRumNotDrum} has marginal probabilities consistent with RUM. Yet, even at small sample sizes such as $N_{\rand{i}|\rand{j}}=50$, the rejection rate is $100$ percent. These simulations show that KS's test for DRUM has good size and power properties in finite samples in the demand setup.

\subsection{Power Analysis: Mimicking the Empirical Application}

We provide a Monte Carlo study to evaluate the performance of KS's test in a simulated environment mimicking our application with binary menus. This exercise is important because the number of observations per budget path is moderate.  Hence, the asymptotic performance of the statistical test derived in the previous section may not translate to our application.  We consider three DGPs given by
\begin{align*}
    \rho_{1}^{t} &=
    \begin{bmatrix}
           1/5 \\
           4/5 \\
           4/5 \\ 
           1/5 \\
           1/5 \\
           4/5
    \end{bmatrix} , \;
    \rho_{2}^{t} = \begin{bmatrix}
           1/5 \\
           4/5 \\
           1/2 \\ 
           1/2 \\
           1/5 \\
           4/5
    \end{bmatrix} , \;
         \rho_{3}^{t} = \begin{bmatrix}
           1/4 \\
           3/4 \\
           2/4 \\ 
           2/4 \\
           1/4 \\
           3/4
    \end{bmatrix}.
\end{align*}

\noindent
Recall that the $\mathcal{H}$-representation of RUM is given by $H^{t}\rho \geq 0$, where $H^{t}$ is given by Table \ref{table:H binary} in our application. It is easy to check that the following hold:
\begin{align*}
    H^{t}\rho_{1}^{t} &= [-0.4, 1.4, 1.4, -0.4, -0.4, 1.4]'; \\
    H^{t}\rho_{2}^{t} &= [-0.1, 1.1, 1.1, -0.1, -0.1, 1.1]'; \\
    H^{t}\rho_{3}^{t} &= [0, 1, 1, 0, 0, 1]'.
\end{align*}

\noindent
The dynamic extension of $H^{t}$ is obtained from the Kronecker product of $H^{t}$, $H = \otimes_{t \in \mathcal{T}} H^{t}$. Likewise, the dynamic version of $\rho_{i}^{t}$ is obtained from the Kronecker product of $\rho_{i}^{t}$, $\rho_i = \otimes_{t \in \mathcal{T}} \rho_{i}^{t}$, $i \in \{1,2,3\}$. Note that the first two DGPs are inconsistent with RUM while the third DGP is consistent with RUM. In the same way, the dynamic version of the first two DGPs are inconsistent with DRUM while the third one is consistent with DRUM. Specifically, observe that the size of the violations of DRUM are larger in the first DGP than in the second DGP and that the third DGP is a knife-edge case.

For the current analysis, we consider the same setup as in our application with $K = T = J^{t} = 3$. We set the number of consumers per budget path to $N_{\rand{j}} \in \{10, 175, 350\}$. This choice is intended to be representative of the number of consumers per budget path in our application and to be informative about the small sample performance of the statistical test. We set the number of simulations for each DGP to $1000$. The critical value for each test statistic is computed using $999$ bootstrap samples. As recommended by KS, the tuning parameter $\tau_N$ is set to $\tau_N = \sqrt{\log(N_\rand{j})/N_\rand{j}}$.

The results are obtained using the test of KS that is based on the $\mathcal{V}$-representation of the model. The rejection rates at the $5$ percent significance level for all three sample sizes and for each DGP are presented in Table \ref{table: null2}. As expected, false positives are less likely under the first DGP than the second DGP. Also, the third DGP shows that false negatives quickly attain the desired target level as the sample size grows. Overall, the results of Table \ref{table: null2} show that the statistical test performs very well even in small samples. In that sense, the nonrejection of DRUM in our application is unlikely to be the byproduct of a lack of power.

\begin{table}[htbp]
\centering
\begin{tabular} {l|l|ccp{2cm}p{2cm}p{2cm}p{2cm}p{2cm}p{2cm}} DGP & $N_{j}$ & Rejection rate, \% \\ \hline
\multirow{2}{*}{DGP1} & 10 & 100 \\
& 175 & 100 \\
& 350 & 100 \\ \hline
\multirow{ 2}{*}{DGP2} & 10 & 25.3  \\
& 175 & 99.5 \\
& 350 & 100 \\ \hline
\multirow{ 2}{*}{DGP3} & 10 & 13.6  \\
& 175 & 6.0 \\
& 350 & 5.3   \\
\end{tabular}
\caption{Every entry represents the rejection rate at the $5$ percent significance level and is computed from $1000$ simulations and $999$ bootstraps per simulation.} \label{table: null2}
\end{table}




\end{document}